\newtheorem{theorem}{Theorem}
\newtheorem{lemma}{Lemma}[section]
\newtheorem{proposition}[lemma]{Proposition}
\newtheorem{claim}[lemma]{Claim}
\theoremstyle{definition}
\newtheorem{notation}[lemma]{Notation}
\newtheorem{example}[lemma]{Example}
\newtheorem*{examples}{Examples}
\newtheorem{definition}[lemma]{Definition}
\newtheorem{definition-lemma}[lemma]{Definition-Lemma}
\newtheorem{definition-theorem}[lemma]{Definition-Theorem}
\newtheorem{remark}[lemma]{Remark}
\numberwithin{equation}{section}
\newcounter{parag}[subsection]
\newcounter{parage}[section]
\renewcommand{\theparage}{{\bf\thesection.\arabic{parage}}}
\newcommand{\parage}{\medskip \addtocounter{parage}{1} 
\noindent{\theparage\ } }
\newcounter{paraga}
\newcommand{\be}{\beta}
\newcommand{\Ga}{{\Gamma}}
\newcommand{\de}{\delta}
\newcommand{\om}{\omega}
\newcommand{\Om}{\Omega}
\newcommand{\sig}{{\sigma}}
\newcommand{\tht}{{\theta}}
\newcommand{\ph}{\varphi}
\newcommand{\ze}{{\zeta}}
\newcommand{\bino}[2]{\mbox{$
\begin{pmatrix}#1\\#2\end{pmatrix}%
                $}}
\newcommand{\ov}{\overline}
\newcommand{\ID}{\mathop{\hbox{{\rm Id}}}\nolimits}
\newcommand{\I}{{\mathrm i}}
\newcommand{\dd}{{\mathrm d}}
\newcommand{\ee}{\mathrm e}
\newcommand{\pa}{\partial}
\newcommand{\ii}{^{-1}}
\newcommand{\ie}{{\emph{i.e.}}\ }
\newcommand{\eg}{{\it e.g.}\ }
\newcommand{\wrt}{{with respect to}}
\newcommand{\C}{\mathbb{C}}
\newcommand{\R}{\mathbb{R}}
\newcommand{\cB}{\mathcal{B}}
\newcommand{\cQ}{\mathcal{Q}}
\DeclarePairedDelimiter\abs{\lvert}{\rvert}%
\newcommand{\defeq}{\coloneqq} 
\newcommand{\col}{\colon\thinspace}          
\newcommand{\gS}{\mathscr S}       
\newcommand{\eith}{\ee^{\I\tht}}
\newcommand{\begla}{\begin{equation}}
\newcommand{\beglab}[1]{\begin{equation}	\label{#1}}
\newcommand{\edla}{\end{equation}}
\newcommand{\imp}{\ens\Longrightarrow\ens}
\newcommand{\Imp}{\quad\Longrightarrow\quad}
\newcommand{\Li}{\operatorname{Li}}
\newcommand{\ot}{{\otimes}}
\newcommand{\tin}{\tilde}
\newcommand{\ti}{\widetilde}
\newcommand{\wh}{\widehat}
\newcommand{\ens}{\enspace}
\newcommand{\footremember}[2]{%
\footnote{#2}
\newcounter{#1}
\setcounter{#1}{\value{footnote}}%
}
\newcommand{\footrecall}[1]{%
\footnotemark[\value{#1}]%
}
\title{On the Moyal Star Product of Resurgent Series}
\author{%
  Y.~Li\footnote{Chern Institute of Mathematics 
  and LPMC, Nankai university, Tianjin 300071, China}
  \and
  D.~Sauzin\footnote{CNRS -- 
    Observatoire de Paris, PSL Research University, 75014 Paris, France}
  \footremember{CNU}{Department of Mathematics, Capital
    Normal University, Beijing 100048, China}
  \and
  S.~Sun\footrecall{CNU} \footnote{Academy for Multidisciplinary
    Studies, Capital Normal University, Beijing 100048, China}%
}
\date{December 2020}
\begin{document}

\thispagestyle{empty}

\maketitle

\vspace{.7cm}


\begin{abstract}
  We analyze the Moyal star product in deformation quantization from
  the resurgence theory perspective.  By putting algebraic conditions
  on Borel transforms, one can define the space of ``algebro-resurgent
  series'' (a subspace of $1$-Gevrey formal series in $i\hbar/2$ with
  coefficients in $\C\{q,p\}$),
  which  we show is stable under Moyal star product.

\end{abstract}




\bigskip

\section{Introduction}   \label{secIntro}

\parage
The Moyal star product of two formal series~$\tin f$ and~$\tin g$
in $\C[[t,q_1,\ldots,q_N,p_1,\ldots,p_N]]$ is defined by the formula
\beglab{eqdefstarM}
\tin f\star_M \tin g \defeq \tin f\,\tin g+\sum_{k\geq
  1} \, \frac{t^k}{2^k k!} \,
%
%
\sum_{n=0}^k(-1)^{k-n} \bino{k}{n}
\Big(\pa_p^n \, \pa_q^{k-n} \tin f\Big) \Big(\pa_p^{k-n} \, \pa_q^{n} \tin g\Big)
%
%
%
\edla
when $N=1$, and by an analogous formula when the number of degrees of
freedom~$N$ is larger than~$1$.
We get a non-commutative associative algebra
$\big(\C[[t,q_1,\ldots,q_N,p_1,\ldots,p_N]],\star_M \big)$,
the unit of which is the constant series~$1$;
the corresponding commutator induces a
deformation of the Poisson bracket
$\{ .\,,.\}$
associated with the standard
symplectic structure $\dd p_1\wedge \dd q_1 + \cdots \dd p_N\wedge \dd
q_N$, inasmuch as
\begla
\tin f \star_M \tin g - \tin g \star_M \tin f = t \big\{ \tin f,
\tin g \big\} + O(t^2).
\edla
The main result of this paper is
%
\begin{theorem}   \label{thmMoyalN}
If $\tin f$ and~$\tin g$ are algebro-resurgent series, then so is their
Moyal star product $\tin f\star_M \tin g$.
\end{theorem}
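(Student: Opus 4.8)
The plan is to pass to the Borel plane and estimate the Moyal product term by term in powers of~$t$. It is enough to treat the case $N=1$: since the bidifferential operators $\pa_{p_j}\ot\pa_{q_j}-\pa_{q_j}\ot\pa_{p_j}$ attached to distinct degrees of freedom pairwise commute, the $N$-fold product~\eqref{eqdefstarM} is controlled by exactly the estimates below with binomial coefficients replaced by multinomials, and the resulting combinatorics stays summable; no new phenomenon appears. So write $\tin f=\sum_{m\ge0}f_m t^m$ and $\tin g=\sum_{\ell\ge0}g_\ell t^\ell$ with $f_m,g_\ell\in\C\{q,p\}$ convergent on a common polydisc, and let $\hat f,\hat g$ be the Borel transforms of $\tin f,\tin g$ with respect to~$t$.

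Applying the Borel transform~$\cB$ to~\eqref{eqdefstarM} and using that $\pa_q,\pa_p$ commute with~$\cB$, that the ordinary product of series corresponds to the convolution product~$*$ (up to elementary terms carrying the constant-in-$t$ parts, which are handled identically and more easily), and that multiplication by~$t^k$ corresponds to the $k$-fold iterated primitive $\cI^k$ vanishing at the origin, one obtains
\[
\cB\big(\tin f\star_M\tin g\big)=\sum_{k\ge0}\frac{1}{2^kk!}\sum_{n=0}^k(-1)^{k-n}\bino kn\;
\cI^{k}\!\Big(\big(\pa_p^n\pa_q^{k-n}\hat f\big)*\big(\pa_p^{k-n}\pa_q^{n}\hat g\big)\Big)\;+\;(\text{correction terms}).
\]
It therefore suffices to prove: \emph{(i)} each summand belongs to the algebro-resurgent class, with $\zeta$-singular locus contained in a set independent of~$k$ and of the permitted type; and \emph{(ii)} the series converges locally uniformly on the ramified $\zeta$-domain on which algebro-resurgent functions are continued, with an exponential-type bound, uniformly for $(q,p)$ in some polydisc.

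Point~\emph{(i)} rests on basic stability properties of the algebro-resurgent class that I would establish first: it is a convolution algebra; it is stable under the primitive $\cI$; and it is stable under $\pa_q$ and $\pa_p$, at the cost of shrinking the polydisc in~$(q,p)$ but without moving the $\zeta$-singularities or changing their nature. Hence each summand is algebro-resurgent with singular locus inside the fixed set built from those of~$\hat f$ and~$\hat g$. Point~\emph{(ii)} is the crux. I would combine three estimates: Cauchy's inequalities in~$(q,p)$, which make $\pa_p^n\pa_q^{k-n}\hat f$ cost a factor $\lesssim n!\,(k-n)!\,\rho^{-k}$ after shrinking the polydisc radius to some $\rho>0$; the convolution estimate, which preserves exponential type; and the bound $\big|\cI^k h(\zeta)\big|\le \frac{|\zeta|^k}{k!}\sup_{[0,\zeta]}|h|$ along each path from~$0$ to~$\zeta$. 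Since $\sum_{n=0}^k\bino kn n!\,(k-n)!=(k+1)\,k!$, the $k$-th term is dominated on $\{|\zeta|\le L\}$ by roughly
\[
\frac{(k+1)\,k!}{2^{k}k!}\cdot\frac{1}{\rho^{k}}\cdot C\,\ex^{cL}\cdot\frac{L^{k}}{k!}
=\frac{C\,(k+1)}{k!}\Big(\frac{L}{2\rho}\Big)^{k}\ex^{cL},
\]
whose sum over~$k$ converges, again with an exponential bound in~$L$. This yields local uniform convergence on the ramified domain — so the sum is holomorphic there and its analytic continuation is the sum of the continuations, by a Vitali/normal-families argument — together with the exponential-growth bound needed for the $1$-Gevrey property of $\tin f\star_M\tin g$; and the limit, being a locally uniform limit of algebro-resurgent Borel functions with singularities in the fixed set and of the permitted type, is itself algebro-resurgent.

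The main obstacle is precisely~\emph{(ii)}: one must set up norms on $\C\{q,p\}$-valued Borel functions for which the Cauchy loss in~$(q,p)$, the convolution estimate, and the $k$-fold primitivation all cohabit, and for which the gain $L^{k}/k!$ from the $k$ integrations — together with the prefactor $2^{-k}/k!$ — genuinely dominates the loss $k!\,\rho^{-k}$ from differentiating the coefficients, uniformly along \emph{every} path of the ramified $\zeta$-domain and not only near $\zeta=0$. Controlling the exponential rate after convolution, and after passing to the limit, is where the real work lies; the rest is bookkeeping on top of the stability properties of the algebro-resurgent class.
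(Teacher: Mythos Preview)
Your approach has a genuine gap in point~(ii), and it is not merely a matter of bookkeeping.

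First, an arithmetic oversight: you count the Cauchy loss $n!(k-n)!\rho^{-k}$ only once, for $\pa_p^n\pa_q^{k-n}\hat f$, but $\hat g$ is also hit by $k$ derivatives, contributing a second factor $(k-n)!\,n!\,\rho^{-k}$. The correct combinatorial sum is therefore
\[
\sum_{n=0}^k \binom{k}{n}\big(n!(k-n)!\big)^2 \;=\; k!\sum_{n=0}^k n!(k-n)! \;\sim\; 2\,(k!)^2,
\]
not $(k+1)\,k!$. Feeding this into your estimate, the $k$-th term is bounded on a path of length~$L$ by a constant times
\[
\frac{(k!)^2}{2^k k!}\cdot\frac{1}{\rho^{2k}}\cdot\frac{L^k}{k!} \;=\; \Big(\frac{L}{2\rho^2}\Big)^{k},
\]
which is geometric in~$k$ and diverges as soon as $L>2\rho^2$. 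Thus your series converges uniformly only on a bounded neighbourhood of $\xi=0$ in the Borel plane; this recovers the $1$-Gevrey property (which the paper obtains cheaply in Lemma~\ref{lemBorelOperators}), but says nothing about analytic continuation along arbitrary paths avoiding an algebraic subvariety, which is the content of algebro-resurgence. The single $1/k!$ in the Moyal prefactor, together with the $1/k!$ gained from $k$ integrations, cannot absorb the two factorials lost to Cauchy on the two factors.

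This is precisely why the paper does \emph{not} proceed term by term. Instead it repackages the whole sum over~$k$ into a closed integral: a Hadamard-type contour integral in an auxiliary variable (encoding the Taylor shifts $p\mapsto p+\zeta$, $q\mapsto q+\xi_3/\zeta$) composed with a finite iterated convolution in~$\xi$; see formula~\eqref{rightone}. One then continues this \emph{single} integral along paths by deforming the contour via a flow constructed from a non-autonomous vector field (Sections~\ref{proofconvo}--\ref{proofHa}), tracking finitely many moving algebraic singularities. For~$\star_M$ itself the paper does not redo this analysis: it passes through the equivalence $T(f\star_S g)=(Tf)\star_M(Tg)$ and shows that the Borel counterpart~$\wh T$ (Lemma~\ref{integral rep of Borel T}) preserves algebro-resurgence by the same convolution/Hadamard machinery. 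Note also that the ``basic stability property'' you invoke---that algebro-resurgent germs form a convolution algebra in the parametric sense---is itself one of the main technical lemmas of the paper (Lemma~\ref{convolemma}), not something available off the shelf.
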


Here, ``algebro-resurgence'' is a property of a formal series
$\tin f(t,q_1,\ldots,q_N,p_1,\ldots,p_N)$ defined in terms of its
formal Borel transform $\hat f(\xi,q_1,\ldots,q_N,p_1,\ldots,p_N)$ \wrt~$t$,
which is required to be convergent, \ie
$\hat f(\xi,q_1,\ldots,q_N,p_1,\ldots,p_N) \in
\C\{\xi,q_1,\ldots,q_N,p_1,\ldots,p_N\}$, and to admit analytic
continuation along all the paths which start close enough to the origin
of~$\C^{1+2N}$ and avoid a certain proper algebraic subvariety (which
depends on~$\tin f$). More details will be given in due time.

This variant of \'Ecalle's definition of resurgence \cite{E84},
\cite{Eca85} was introduced by M.~Garay, A.~de Goursac and D.~van Straten in
\cite{GGS}, where they state the
%
\begin{theorem}   \label{thmStdN}
If $\tin f$ and~$\tin g$ are algebro-resurgent series, then so is their
standard star product
\beglab{eqdefstarS}
\tin f\star_S \tin g \defeq
%
%
\sum_{k_1,\cdots,k_N\ge0}
  \frac{t^{k_1+\cdots+k_N}}{k_1!\cdots k_N!} \Big(\partial_{p_1}^{k_1} \cdots
  \partial_{p_N}^{k_N} \tin f\Big)\Big( \partial_{q_1}^{k_1} \cdots
  \partial_{q_N}^{k_N} \tin g\Big).
\edla
\end{theorem}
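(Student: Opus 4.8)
The plan is to compute the formal Borel transform (with respect to $t$) of $\tin f\star_S\tin g$ and to exhibit it as a finite sum of Borel transforms of algebro-resurgent series. First I would reduce to $N=1$: the bidifferential operators $\pa_{p_i}\ot\pa_{q_i}$ pairwise commute, so $\star_S$ is the composition of the one-degree-of-freedom products $(u,v)\mapsto\sum_{k}\frac{t^{k}}{k!}(\pa_{p_i}^{k}u)(\pa_{q_i}^{k}v)$, the remaining $2N-2$ variables being spectators; it thus suffices to treat the case $N=1$ with parameters and to iterate (equivalently, one runs the computation below with multi-indices). Writing $\tin f=\sum_{a\ge1}f_a t^{a}$, $\tin g=\sum_{b\ge1}g_b t^{b}$ (the case with constant terms only adds similar, simpler terms) and applying the standard Borel calculus ($\cB(t^k\tin h)=\cI^{k}\hat h$, with $\cI u(\xi)=\int_0^{\xi}u$, and $\cB(\tin u\,\tin v)=\hat u*\hat v$ for $t$-divisible $\tin u,\tin v$), I obtain
\[
\widehat{\tin f\star_S\tin g}=\widehat{\tin f\,\tin g}+\iint_{\substack{\zeta+\tau+\rho=\xi\\ \zeta,\tau,\rho\ge0}}\!\!\Psi(\rho,\zeta,\tau,q,p)\,\dd\zeta\,\dd\tau,\qquad \Psi\defeq\sum_{k\ge1}\frac{\rho^{\,k-1}}{k!\,(k-1)!}\,(\pa_p^{k}\hat f)(\zeta,q,p)\,(\pa_q^{k}\hat g)(\tau,q,p).
\]
The two factorials in the denominator of $\Psi$ compensate the two $k!$'s produced by the $k$ differentiations of the convergent germs $\hat f,\hat g$, leaving only linear growth in $k$; hence $\Psi$, and the whole expression, converges near the origin of $\C^{5}$ (resp.\ $\C^{3}$). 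This is the usual mechanism by which the Borel transform of the generically $1$-Gevrey divergent standard product is nonetheless convergent.

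The key step is to insert Cauchy's formulae for $\pa_p^{k}$ and $\pa_q^{k}$ into $\Psi$ and to sum the resulting geometric-type series in $k$, which yields the closed form
\[
\Psi(\rho,\zeta,\tau,q,p)=\frac{1}{(2\pi\I)^{2}}\oint_{\abs{w-p}=r}\oint_{\abs{z-q}=r}\frac{\hat f(\zeta,q,w)\,\hat g(\tau,z,p)}{\big((w-p)(z-q)-\rho\big)^{2}}\,\dd z\,\dd w .
\]
From this representation I would deduce that $\Psi$ continues analytically in $(\rho,\zeta,\tau,q,p)$ along every path avoiding a proper algebraic hypersurface: one deforms the $w$- and $z$-contours so as to dodge both the singular loci of $\hat f(\zeta,q,\cdot)$ and $\hat g(\tau,\cdot,p)$ — cut out by the algebraic varieties attached to $\tin f$ and $\tin g$ — and the pole hypersurface $(w-p)(z-q)=\rho$; the set of $(\rho,\zeta,\tau,q,p)$ for which no such deformation exists, i.e.\ along which the integration cycle gets pinched between these hypersurfaces, is the image of an algebraic set under a projection, hence is contained in a proper algebraic hypersurface. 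This is the Picard--Lefschetz / Nilsson-class type of argument underlying stability of the space of algebro-resurgent series under convolution.

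Finally, $\iint_{\zeta+\tau+\rho=\xi}\Psi\,\dd\zeta\,\dd\tau$ is of the shape ``integrate a function with proper algebraic singular locus over a simplex-cycle whose placement depends on $(\xi,q,p)$''; deforming that cycle away from the singular locus of $\Psi$, exactly as in the proof of stability under convolution (which I would invoke from \cite{GGS} or re-prove in this form), shows it to be the Borel transform of an algebro-resurgent series, whose attached algebraic variety is obtained from that of $\Psi$ by a further projection. The remaining term $\widehat{\tin f\,\tin g}=\hat f*\hat g$ is algebro-resurgent by stability under convolution (and the constant-term contributions, when present, are handled the same way with $f_0$ or $g_0$ in place of $\hat f$ or $\hat g$). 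Taking the union of the finitely many algebraic hypersurfaces so produced gives a proper algebraic hypersurface, so $\tin f\star_S\tin g$ is algebro-resurgent; the general-$N$ statement follows by the reduction above, or by repeating the argument with multi-indices (the simplex is still the single $\zeta+\tau+\rho=\xi$, the kernel merely acquiring $N$-fold $w$- and $z$-contours, at the cost of resolving the factorial $(|\underline k|-1)!$ by one further auxiliary integral). The main obstacle is making the last two steps fully rigorous inside the algebro-resurgence framework of \cite{GGS}: the lemma that contour and fiber integrals of functions with proper algebraic singular locus again have proper algebraic singular locus, together with the verification that the Bessel-regularised product $\Psi$ and its contour representation meet its hypotheses. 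The preliminary Borel-calculus manipulations, the convergence estimates, and the $N>1$ bookkeeping are routine by comparison.
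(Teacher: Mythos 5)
Your plan is architecturally the paper's own: up to the choice of Borel convention, your kernel $\Psi$ is exactly the Hadamard-type kernel of Lemma~\ref{integral rep of Borel star} (one checks $\Psi=\pa_{\xi_3}F$ with $F$ as in \eqref{Hadafor}, $\rho=\xi_3$), your ``simplex integral $+$ contour kernel'' decomposition is the paper's splitting into a convolution part and a Hadamard part, and your closed-form double-contour representation with the pole $\big((w-p)(z-q)-\rho\big)^{2}$ is a correct computation. The genuine gap is that the two steps you defer are not routine invocations of known facts --- they are the substance of the theorem. Saying that the set where the cycle gets pinched is ``the image of an algebraic set under a projection, hence contained in a proper algebraic hypersurface'' is not a proof: you must (a) identify the non-continuability locus concretely as a coincidence locus of roots (roots of the defining polynomials colliding with one another, with $0$, with the variable endpoint, or with the moving points of the form $\rho/\Omega_\beta$), (b) prove that this locus is a \emph{proper algebraic} subvariety --- in the paper this requires the preparation by $z_1$-simple polynomials (Proposition~\ref{propositionz1}) and the Sylvester-resultant argument (Lemma~\ref{lemmaz1Syl}), because the individual roots live only in an algebraic closure of a function field --- and (c) actually construct, for \emph{every} path avoiding that fixed variety, an isotopy of the integration cycle; the paper does this with explicit non-autonomous vector fields and Cauchy--Lipschitz, separately for the convolution part (Lemma~\ref{convolemma}, including the degenerate Case~2 where the endpoint lies identically on the singular locus) and the Hadamard part (Lemma~\ref{Hadalemma}). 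A pinching heuristic does not by itself give continuation along all paths avoiding one algebraic set, which is what algebro-resurgence demands.

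Two further points. You propose to quote convolution stability from \cite{GGS}; but the starting point of this paper is that the key integral representation there (their Proposition~3.3) is wrong, so nothing can be imported without re-proof, and what you actually need is the stronger statement with a variable polynomial endpoint and retained parameters --- Lemmas~\ref{convolemma}--\ref{convolemma2}, whose proof again hinges on the explicit algebraic avoidance variety \eqref{defVf}. Also, the reduction of general $N$ to $N=1$ ``with spectator variables'' needs care: the one-degree-of-freedom statement concerns germs in $\C^{3}$, while iterating it requires a parametric version in $\C^{2N+1}$ with an algebraic singular locus in all variables jointly; this is why the paper instead proves the $N$-variable formula \eqref{formulahigh} and a lemma for the operators $\odot_{ij}$ directly. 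In short, your computation and outline are right and essentially coincide with the paper's route, but the core of the proof --- algebraicity of the avoidance loci and the rigorous contour-isotopy constructions --- is missing, as you yourself acknowledge.
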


However, their proof of Theorem~\ref{thmStdN} is not valid, due to a
flaw in one of the key formulas presented in \cite{GGS}.
In this paper, we will give the correct formula and develop somewhat
different arguments that lead
to a proof of Theorem~\ref{thmStdN} and help to deduce
Theorem~\ref{thmMoyalN} as well.


\parage
We first give some background and motivation.

Ever since quantum mechanics started in the 1920s,
quantization and semiclassical limit have become
a central theme among a variety of areas in mathematics such as
functional analysis, geometry and topology, representation theory,
pseudo-differential operators and microlocal analysis and symplectic
geometry, to name a few.

Conventional quantum mechanics are formulated in terms of linear
operators on Hilbert space that realize the fundamental Canonical
Commutation Relations (CCR), or of Feynman's path integrals as
conceived by Dirac and developed by Feynman to make the quantum picture
more compatible with the classical one.
Built upon Wigner, Weyl and Groenewold's insights and pioneered by
Moyal, deformation quantization is a third 
formulation, in full phase space, which evolved gradually into an autonomous
theory, with its own complete internal logic, that is conceptually
very appealing.

The idea of deformation quantization is to achieve the Heisenberg CCR
by deforming the commutative algebra of functions on the phase space
(classical observables) to a non-commutative associative algebra.

In \cite{Moyal49}, Moyal defined the so-called Moyal star
product~$\star_M$ 
in relation with statistical properties of quantum mechanics.
For one degree of freedom,
the standard Poisson structure for functions $f(q,p)$ and $g(q,p)$ in~$\R^2$ being
\[
    \{f,g\}=\frac{\partial f}{\partial p}\frac{\partial g}{\partial
    q}-\frac{\partial f}{\partial q}\frac{\partial g}{\partial
    p}
  = \mu\circ P(f\otimes g),
  \quad\text{with}\ens
  P \defeq \frac{\partial }{\partial p}\otimes\frac{\partial
  }{\partial q}-\frac{\partial }{\partial q}\otimes\frac{\partial
  }{\partial p},
\]
where $\mu$ is the usual pointwise product of functions,
the Moyal star product of two classical observables is the formal series in~$t$ obtained as 
\beglab{eqdefstarMP}
  f\star_M g = \mu \circ \exp\Big(\frac{tP}{2}\Big)(f\otimes g)
  = \mu\circ \Big( \ID + \sum_{k\geq 1}\frac{t^k}{2^k k!}P^k \Big) (f\otimes g) 
\edla
with
\[
  P^k 
  %
  = \sum_{n=0}^k (-1)^{k-n} \bino{k}{n} (\pa_p\, \ot\,\pa_q)^n (\pa_q\,\ot\,\pa_p)^{k-n}
  = \sum_{n=0}^k (-1)^{k-n} \bino{k}{n} \pa_p^n\, \pa_q^{k-n} \, \ot \, \pa_p^{k-n}\, \pa_q^{n}.
%
%
\]
Here, $t$ is the deformation parameter, taken to be ${i\hbar}$ 
in quantum mechanics.
When extended to $C^\infty(\R^{2N})[[t]]$, the Moyal star product is a non-commutative associative product.
%
We have $f\star_M g = fg + \frac{t}{2}\{f,g\} + O(t^2)$, hence
\beglab{eqcommutstarMdeform}
  [f,g]_M \defeq \frac{1}{t}(f\star_M g-g\star_M f) = \{f,g\} + O(t),
\edla
we thus recover the
Poisson algebra structure of the classical observables in the limit
$\hbar = t/i \to 0$. 
Moreover 
\[
p\star_M q-q\star_M p = t, \quad\text{\ie}\ens [p,q]_M = 1.
\]
The extension to $N\ge1$ degrees of fredom is obtained by
replacing~$P$ by
\beglab{eqdefPN}
P = \sum_{j=1}^N \Big( \frac{\partial }{\partial p_j}\otimes\frac{\partial
  }{\partial q_j}-\frac{\partial }{\partial q_j}\otimes\frac{\partial
  }{\partial p_j} \Big)
  \edla
  in~\eqref{eqdefstarMP},
  so that~\eqref{eqcommutstarMdeform} still holds. Moreover,
  the CCR are realized: $[p_i,q_i]_M = 1$ and $[p_i,q_j]_M =  [p_i,p_j]_M = [q_i,q_j]_M = 0$ for $i\neq j$.
  
The Moyal star product can be viewed as a non-commutative
associative deformation of the usual product of functions. The idea to
view Quantum Mechanics as a deformation of Classical Mechanics was
promoted by Bayen-Flato-Fronsdal-Lichnerowicz-Sternheimer
\cite{BFFLS} in the 1970s and led to what is now called Deformation
Quantization Theory.

For general symplectic manifolds, the existence of a star product was
proved in \cite{DWL83} and \cite{Fed85}, \cite{F94}.
In particular, Fedosov recursively constructed 
a star product through a canonical flat
connection on the Weyl bundle.

For an arbitrary Poisson structure~$\pi$ on~$\R^N$, Kontsevich
(\cite{K03}) constructed an intriguing explicit formula for its
deformation quantization:
\begin{equation}\label{KontsevichDQ}
  f\star_K g 
  = f g + \sum_{k\ge1}
  t^k\sum_{\Gamma\in G_k} c_\Gamma B_{\Gamma,\pi}(f,g),
\end{equation}
where each~$G_k$ is a suitable collection of graphs, the~$c_\Ga$'s are
universal coefficients, and the $B_{\Ga,\pi}$'s are polydifferential
operators depending on the graph~$\Ga$ and the Poisson
structure~$\pi$.
Recently, a deep connection between these universal
coefficients~$c_\Ga$ and multiple zeta values\footnote{Interestingly,
  multiple zeta values are themselves deeply related to Resurgence
  Theory---see \eg \cite{Wald}.} has been brought to light \cite{BPP}.

For a general Poisson manifold $(M, \pi)$, the existence of a star
product that satisfies the analogue of~\eqref{eqcommutstarMdeform} is
a consequence of the formality theorem which establishes an $L_\infty$
quasi-isomorphism between two differential graded Lie algebras
(DGLAs): the Hochschild complex of the associative algebra
$A=C^\infty(M)$ and its cohomology.


\parage
It is known after Dyson \cite{Dy} and others that, in quantum field
theory, almost all the series in~$\hbar$ describing physical quantities
are divergent and must be interpreted as giving asymptotic information.
In quantum mechanics, this can even be traced back to as early as Birkhoff.
When Voros developed the exact WKB theory \cite{V83} to study the
spectrum of Sturm-Liouville operators, he already conjectured the
resurgent character of these series.
Resurgence Theory was then a new perspective, initiated by \'{E}calle
(\cite{Eca81}, \cite{Eca85}), to deal with asymptotic series.
\'{E}calle immediately clarified and confirmed Voros'
conjecture in \cite{E84} and \cite{Eca85}.
Pham and his collaborators devoted a lot of energy to make the whole
picture complete in \cite{Pham} and a series of papers in the 1990s, culminating in the proof of the
conjectural formula proposed by Zinn-Justin \cite{ZJ} on
multi-instanton expansions in quantum mechanics
(however they had to rely on a resurgence conjecture stated in
\cite{E84}, the proof of which has not yet been given).


In a nutshell, Resurgence Theory deals with formal series $\ti\ph(t) =
\sum_{n\ge0} a_n t^n$ (in applications to physics, the coefficients~$a_n$ may be functions
on the configuration space or the phase space) and
their formal Borel transforms\footnote{%
  \label{footcBvsbe}
  In this article, we depart from the usual convention of Resurgence
  Theory, which would be to define the formal Borel
  transform as
$\cB\big(\sum_{n\ge1} a_n t^n\big)
= \sum_{n\ge1} a_n \frac{\xi^{n-1}}{(n-1)!}$ and to handle the
constant term~$a_0$ separately by setting $\cB(1)=\de$ (a symbol that
can be identified with the Dirac mass at~$0$). Obviously,
formula~\eqref{eqdefBtrsf} yields
$\be(\ti\ph)=\cB(t\ti\ph)$. The advantage of~$\cB$ is that it gives
rise to slightly simpler formula for convolution and Laplace transform. However, in this article, we prefer not to
have to deal separately the $t$-independent term.  
  }%
~$\wh\ph(\xi)$, with
\beglab{eqdefBtrsf}
\text{formal Borel transform} \; \be \col
\ti\ph(t) = \sum_{n\ge0} a_n t^n
\mapsto \wh\ph(\xi) = \sum_{n\ge0} a_n \frac{\xi^n}{n!}
\edla
and imposes convergence of~$\wh\ph(\xi)$ and suitable conditions on
its analytic continuation,
so as to be able to analyse the various ``Borel-Laplace sums''
$\gS^\tht\ti\ph(t) = \int_0^{\eith\infty} \ee^{-\xi/t} \wh\ph(\xi) \dd
t/t$,
for all non-singular directions~$\tht$,
which are all asymptotic to~$\ti\ph(t)$ as $\abs t\to0$ but differ by exponentially
small quantities.
One requires the analytic continuation of the convergent germ~$\wh\ph(\xi)$, roughly speaking, to have at
worse isolated singularities---more precisely, to be
``$\Om$-continuable'' with a certain prescribed set~$\Om$ of potential
singular points \cite{Eca81}, \cite{Sau13};
or ``endlessly continuable'' \cite{CNP93};
or ``continuable without a cut'' \cite{Eca85}
(in order of increasing generality).

Another variant of this property of continuability in the Borel plane
was introduced in \cite{GGS} under the name ``algebro-resurgence'',
that was designed for situations where the coefficients~$a_n$ depend analytically
on affine variables $(q,p)\in\C^{2N}$: the singular locus of
$\wh\ph(\xi,q,p)$ is required to be a proper algebraic subvariety
of~$\C^{1+2N}$, the germ $\wh\ph(\xi,q,p)$ should have analytic
continuation along all the paths that avoid it;
in particular, for fixed~$(q,p)$, only finitely many singular points
can exist in the Borel plane.
It is with this version of resurgence that we will work throughout this article.


\parage
Our initial motivation was to understand Deformation Quantization and
the explicit construction~(\ref{KontsevichDQ}) of Kontsevich from the
viewpoint of Resurgence Theory.
But already at the level of the Moyal star product~\eqref{eqdefstarMP},
even with analytic classical observables that do not depend on~$t$,
one can see that divergence of the series generically occurs, but with at most
factorial growth due to the Cauchy inequalities.
It is thus natural to consider the Moyal star product of two elements
of~$\ti f$ and~$\ti g$ of
$\C\{q_1,\ldots,q_N,p_1,\ldots,p_N\}[[t]]$ 
and to enquire on $\be(\ti f\star_M\ti g)$ in terms of $\be\ti f$
and~$\be\ti g$, \ie to investigate the Borel counterpart of the Moyal star product:
\beglab{eqdefastM}
\wh f *_M\wh g \defeq \be\big( \be\ii\wh f \star_M \be\ii\wh g \big).
\edla

This is what Garay, de Goursac and van Straten did in \cite{GGS}
with the ``standard star product''~$\star_S$ defined by~\eqref{eqdefstarS}, which is a star product
equivalent to the Moyal one (see Section~\ref{secPrelimin}).
They considered
\beglab{eqdefastS}
\wh f *_S\wh g \defeq \be\big( \be\ii\wh f \star_S
\be\ii\wh g \big)
\edla
with a view to proving Theorem~\ref{thmStdN}:
supposing that $\ti f$ and~$\ti g$ are algebro-resurgent series, \ie
that $\wh f$ and~$\wh g$ are algebro-resurgent germs, is it true that
$\wh f*_S\wh g$ is an algebro-resurgent germ (and hence that $\ti
f\star_S\ti g$ is an algebro-resurgent series)?

However, the analysis in \cite{GGS} relies on an integral
representation of~$*_S$ that is flawed, thus invalidating the key
Proposition~3.3 and the purported proof of
Theorem~\ref{thmStdN} in that article.
In Section~\ref{secPrelimin}, we will give another integral
representation of~$*_S$, formula~\eqref{formulahigh}.
The correct formula is more intricate than that of \cite{GGS}; therefore, following the
analytic continuation of $\wh f*_S\wh g$ (where both factors are
supposed to be algebro-resurgent) requires considerably more work.

For the sake of clarity, we will begin in Section~\ref{secPrelimin}
with the case of one degree of freedom and give in Lemma~\ref{integral
  rep of Borel star} the formula for~$*_S$ for that case.
It is a mixture of convolution\footnote{%
  \label{footconvol}
  ``Convolution'' is the operation that corresponds to the
  multiplication of formal series via formal Borel transform. Beware
  that Resurgence Theory usually makes use of the formula
  corresponding to~$\cB$, rather than~$\be$, in accordance with Footnote~\ref{footcBvsbe}.
  The image by~$\cB$ of the product $(\cB\ii\wh\ph) (\cB\ii\wh\psi)$ is
  the function $\int_0^\xi \wh\ph(\xi_1)\wh\psi(\xi-\xi_1)\,\dd\xi_1$,
  whereas the image by~$\be$ of $(\be\ii\wh\ph) (\be\ii\wh\psi)$ is
  $\frac{\pa}{\pa\xi}\big(\int_0^\xi \wh\ph(\xi_1)\wh\psi(\xi-\xi_1)\,\dd\xi_1\big)$.
  }
  and Hadamard product (which has the classical integral
  representation~\eqref{eqmultconv}); more specifically, the formula involves the Hadamard
  product \wrt~$\ze$ of the Taylor expansions
$\wh f(\xi_1,q,p+\ze)\odot \wh g(\xi_2,q+\ze,p)$
and then a convolution-like integration \wrt~$\xi_1$ and~$\xi_2$.

Analytic continuation of convolution is a classical topic in
Resurgence Theory \cite{Eca81}, \cite{CNP93}, \cite{Sau13},
\cite{S14}.
We will adapt these techniques to our more intricate situation in
Section~\ref{proofconvo}.
The analytic continuation of the Hadamard product of two
$\Om$-continuable germs has been treated in \cite{LSS}, with a
possibly infinite singular locus~$\Om$;
our situation is simpler inasmuch as it involves only finite singular
loci in the Borel plane, as we will see in Section~\ref{proofHa}
devoted to the Hadamard part of the formula for~$*_S$.

The technique for following the analytic continuation of $\wh f*_S\wh
g$ in the case of $N$ degrees of freedom is sketched
 in Section~\ref{secHigher}.
This will lead us to a proof of Theorem~\ref{thmStdN} that follows a
path rather different than that of \cite{GGS}.
Using the concrete form of the equivalence between the Moyal and
standard star products $\star_M$ and~$\star_S$, we will be able to
relate~$*_M$ and~$*_S$ by an integral transform, and deduce that
\begin{quote}
  $\wh f$ and~$\wh g$ algebro-resurgent germs $\Imp$
  $\wh f*_M\wh g$ algebro-resurgent germ,
\end{quote}
which is equivalent to Theorem~\ref{thmMoyalN}.

Hence, algebro-resurgent series form a subalgebra of
$\big(\C\{q_1,\ldots,q_N,p_1,\ldots,p_N\}[[t]],\star_S \big)$
or of
$\big(\C\{q_1,\ldots,q_N,p_1,\ldots,p_N\}[[t]],\star_M \big)$.


\parage
The paper is organized as follows.
\begin{enumerate}[--]
\item Section~\ref{secPrelimin} deals with definitions, examples and
elementary properties for the Moyal and standard star products,
$\star_M$ and~$\star_S$, and their Borel counterparts~$*_M$
and~$*_S$.
It also contains the integral representation formulas that will be
used in the rest of the article.
\item Section~\ref{secAlgRes} deals with the definition of
  algebro-resurgent series and algebro-resurgent germs, and states
  three lemmas that are instrumental in our proof of
  Theorems~\ref{thmMoyalN} and~\ref{thmStdN}.
\item Section~\ref{secSimpPoly} introduces the notion of a
  multivariate polynomial that is ``simple \wrt\ one of its
  variables'', as an algebraic preparation to handle more
  conveniently the smooth algebraic varieties which appear in the singular
  loci in the Borel plane.
\item Section~\ref{proofconvo} deals with the ``convolution part'' of
  our formula for~$*_S$.
\item Section~\ref{proofHa} deals with the ``Hadamard part'' of the formula.
\item Section~\ref{secHigher} explains how to adapt the proof from
  $N=1$ to $N$ arbitrary.
  %
%
\end{enumerate}

\newpage

\section{Borel counterparts of the Moyal and standard star products}   \label{secPrelimin}

\parage
The star product~$\star_M$ is defined by formula~\eqref{eqdefstarM} if
$N=1$, or~\eqref{eqdefstarMP} with $P$ as in~\eqref{eqdefPN} for
general $N\ge1$.
These formulas make sense
in $\C\{q_1,\ldots,q_N,p_1,\ldots,p_N\}[[t]]$ as well as in
\[
\ti\cQ_{2N+1} \defeq \C[[q_1,\ldots,q_N,p_1,\ldots,p_N]][[t]] = \C[[t,q_1,\ldots,q_N,p_1,\ldots,p_N]].
\]
The same is true for~$\star_S$, which is defined by the
formula~\eqref{eqdefstarS} or, equivalently
\beglab{eqdefstarSP}
  f\star_S g = \mu \circ \exp\bigg( t \sum_{j=1}^N \Big( \frac{\partial }{\partial p_j}\otimes\frac{\partial
  }{\partial q_j} \Big) \bigg)(f\otimes g).
\edla
Recall that the formal deformation parameter is $t = {i\hbar}$.



It is well-known that~$\star_S$ and~$\star_M$ are equivalent, in the
sense that there is a ``transition operator''~$T$ mapping the former
to the latter: 
$T(f\star_S g) = (Tf)\star_M(Tg)$.
It is sufficient to take
\beglab{eqdefequivT}
  T \defeq \exp\Big(-\frac{i\hbar}{2}\sum_{1\le j \le
    N}\partial_{q_j}\partial_{p_j}
  \Big) =
  \exp\Big(-\frac{t}{2}\sum_{1\le j \le
    N}\partial_{q_j}\partial_{p_j}
  \Big),
\edla
the inverse of which is given by $T\ii = \exp\big( \frac{t}{2}\sum
\partial_{q_j}\partial_{p_j} \big)$.
In other words, we have
\[
  T \col \big(\ti\cQ_{2N+1},\star_S\big) \to \big(\ti\cQ_{2N+1},\star_M\big)
  \ens\;\text{isomorphism of associative algebras.}
\]
It is with~$\star_S$ that we will work most of the time, because the
formulas are simpler with it than with~$\star_M$, hence we use the
abbreviation
\begin{notation}
  From now on, we set $\star = \star_S$ for the standard star product,
  and $* = *_S$ for its Borel counterpart~\eqref{eqdefastS}.
  We will call~$*$ the ``Borel-star product''.
\end{notation}

\begin{example} Here is a simple example:
  \beglab{eqsimplex}
  (tp)\star(tq) = t^2pq + t^3, \ens
  (tq)\star(tp) = t^2pq, \ens
  (tp)\star_M(tq) = t^2pq + t^3, \ens
  (tq)\star_M(tp) = t^2pq - t^3.
  \edla
  Note that $T(t^2pq) = t^2pq - \frac{t^3}{2}$.
  \end{example}




\parage
We are mostly interested in the subspace
$\C\{q_1,\ldots,q_N,p_1,\ldots,p_N\}[[t]]$ of $\ti\cQ_{2N+1}$.
However, it is important to realize that we cannot restrict ourselves
to the too narrow subspace\footnote{However, the even narrower
  subspace of polynomials $\C[t,q_1,\ldots,q_N,p_1,\ldots,p_N]$ is
  stable under~$\star_M$ and~$\star$.}
$\C\{t,q_1,\ldots,q_N,p_1,\ldots,p_N\}$ consisting of formal series
which converge in a neighbourhood of the origin in~$\C^{2N+1}$,
because even if~$f$ and~$g$ do not depend on~$t$, their star product
may be divergent.
Here is a simple example taken from \cite{GGS}, and a variant:
\begin{examples}
  The geometric series $(1-p)\ii$ and $(1-q)\ii$ give rise to a
  divergent series
  \beglab{eqEulerseries}
  (1-p)\ii \star (1-q)\ii = \sum_{k\ge 0} k! t^k \big( (1-p)(1-q)
  \big)^{-k-1}.
  \edla
  The logarithm series $\log(1-p)$ and $\log(1-q)$ give rise to a
  divergent series
  \beglab{eqlogstarlog}
  \log(1-p) \star \log(1-q) = \log(1-p) \log(1-q) + \sum_{k\ge1}
  \frac{(k-1)!}{k} t^k \big( (1-p)(1-q)
  \big)^{-k}.
  \edla
\end{examples}

Note however the $1$-Gevrey character \wrt~$t$ of these examples: the
coefficient of~$t^k$ essentially has at most factorial growth, hence
convergence is restored when~$t^k$ is replaced by $\xi^k/k!$, \ie
their image by the formal Borel transform~\eqref{eqdefBtrsf} belongs
to the space of convergent series $\C\{\xi,q,p\}$. This is a general phenomenon:
extending the definition of the formal Borel transform by the formula
\beglab{eqdefBtrsfGEN}
\be \col
\ti\ph = \sum_{n\ge0} a_n(z_1,\ldots,z_r) t^n \in \C[[t,z_1,\ldots,z_r]]
\mapsto \wh\ph = \sum_{n\ge0} a_n(z_1,\ldots,z_r) \frac{\xi^n}{n!}
\in \C[[\xi,z_1,\ldots,z_r]],
\edla
we call $1$-Gevrey formal series \wrt~$t$ the elements of
\begla
\ti\cQ_{r+1} \defeq \be\ii\big( \wh\cQ_{r+1} \big) \subset \C[[t,z_1,\ldots,z_r]],
\quad
\text{with} \ens
\wh\cQ_{r+1} \defeq \C\{\xi,z_1,\ldots,z_r\}.
\edla
and we have, as noted in \cite{GGS},
\begin{theorem}
  The subspace $\ti\cQ_{2N+1}$ is stable under the Moyal star product~$\star_M$, as
  well as under the standard product $\star=\star_S$.
\end{theorem}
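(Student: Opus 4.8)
The plan is to show that $\wh\cQ_{r+1}=\C\{\xi,z_1,\ldots,z_r\}$ is stable under the Borel counterparts~$*_M$ and~$*=*_S$ of the two star products; stability of $\ti\cQ_{2N+1}$ under~$\star_M$ and~$\star$ is then just the image of this statement under $\be\ii$. Since~$\star_M$ and~$\star$ are intertwined by the operator~$T$ of~\eqref{eqdefequivT}, which acts on $1$-Gevrey series as an operator whose Borel counterpart is the obvious multiplier, it suffices to treat one of them, say~$\star=\star_S$, and I would do that.

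First I would reduce to a bound on individual coefficients. Write $\ti f=\sum_{n\ge0}a_n(z)\,t^n$ and $\ti g=\sum_{n\ge0}b_n(z)\,t^n$ with $a_n,b_n\in\C\{z_1,\ldots,z_r\}$. Because $\ti f,\ti g\in\ti\cQ_{2N+1}$, the formal Borel transforms $\wh f=\sum a_n\xi^n/n!$ and $\wh g=\sum b_n\xi^n/n!$ are convergent germs; concretely, there exist a polydisc $D=\{|z_j|\le\rho\}$ and constants $C,L>0$ such that $\sup_{D}|a_n|\le C L^n n!$ and likewise for~$b_n$ (this is exactly the statement that $\sum a_n\xi^n/n!$ converges on a polydisc, by the Cauchy estimates). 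Next, expand the product $\ti f\star\ti g$ using~\eqref{eqdefstarS}: the coefficient of~$t^m$ in $\ti f\star_S\ti g$ is
\[
  c_m(z)=\sum_{k_1+\cdots+k_N=m}\ \sum_{i+j+\sum k_\ell=m}\frac{1}{k_1!\cdots k_N!}\,
  \big(\partial_p^{\,\underline k}a_i\big)\big(\partial_q^{\,\underline k}b_j\big),
\]
a finite sum; the point is to estimate $\sup_{D'}|c_m|$ on a slightly smaller polydisc~$D'$. Each derivative $\partial_p^{\,\underline k}a_i$ is controlled on~$D'$ by Cauchy's inequality in terms of $\sup_D|a_i|\le C L^i i!$ and the factor $\underline k!/(\rho-\rho')^{|\underline k|}$; multiplying the two derivative factors, dividing by $k_1!\cdots k_N!$, and summing over the bounded range of multi-indices with $i+j+|\underline k|=m$ produces a bound of the form $\sup_{D'}|c_m|\le C' (L')^m m!$ for suitable $C',L'$. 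The combinatorial heart of this estimate is the identity $\sum_{\underline k}\binom{|\underline k|}{\underline k}=N^{|\underline k|}$ together with $\binom{m}{i,j,|\underline k|}\le 3^m$, i.e. the classical observation that the number of terms and the binomial weights grow only geometrically in~$m$ while the factorials match up. This gives $\sum c_m\xi^m/m!\in\C\{\xi,z\}$, i.e. $\ti f\star\ti g\in\ti\cQ_{2N+1}$.

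For~$\star_M$ I would either run the same computation with the coefficients from~\eqref{eqdefstarM} (the extra binomial $\binom{k}{n}$ and the $1/2^k$ only improve the bound), or, more cleanly, invoke $\ti f\star_M\ti g=T\big((T\ii\ti f)\star_S(T\ii\ti g)\big)$: one checks directly from~\eqref{eqdefequivT} that $T^{\pm1}$ preserves $\ti\cQ_{2N+1}$, since applying $\exp(\mp\tfrac t2\sum\partial_{q_j}\partial_{p_j})$ to $\sum a_n t^n$ multiplies the Borel transform coefficient $a_n\xi^n/n!$ by a combinatorial factor absorbed into a single Cauchy estimate in the same way. The main obstacle is purely bookkeeping: organizing the multi-index sums in~\eqref{eqdefstarS} so that the factorials produced by the Cauchy estimates on the $|\underline k|$-fold derivatives are exactly cancelled by the $1/(k_1!\cdots k_N!)$ and by the $1/m!$ coming from the Borel transform, leaving only a geometric factor $(L')^m$; once the grouping is right there is no real analytic difficulty, only careful counting.
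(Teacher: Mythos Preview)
Your approach is correct in substance and genuinely different from the paper's. The paper does \emph{not} carry out a coefficient estimate; instead it derives the closed integral representation~\eqref{rightone} (and its higher-dimensional analogue~\eqref{formulahigh}) for the Borel counterpart~$*$, and then simply observes that an iterated integral of a function holomorphic near the origin is again holomorphic near the origin. Likewise, stability under~$\star_M$ and under~$T^{\pm1}$ is read off from the analogous integral formulas in Lemmas~\ref{integral rep of Borel starM} and~\ref{integral rep of Borel T}. So the paper gets this theorem essentially for free as a by-product of the formulas it needs anyway for the resurgence argument, whereas you prove it from scratch via Cauchy inequalities and factorial bookkeeping. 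Your route is more elementary and self-contained for this particular statement; the paper's route is shorter once the integral formulas are in hand and, more importantly, those same formulas are what drive the analytic-continuation arguments later on.

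Two small corrections to your writeup. First, the displayed expression for~$c_m$ has a spurious outer sum $\sum_{k_1+\cdots+k_N=m}$; the correct single constraint is $i+j+|\underline k|=m$. Second, your remark that the Borel counterpart of~$T$ is ``the obvious multiplier'' is wrong: $\wh T$ is not a multiplication operator, as~\eqref{eqdefequivT} mixes different powers of~$t$ and hence different Borel coefficients (the paper's Lemma~\ref{integral rep of Borel T} gives its actual integral form). Your fallback description is also slightly off---$T$ does not act coefficient-by-coefficient---but the conclusion that $T^{\pm1}$ preserves~$\ti\cQ_{2N+1}$ is correct and follows by exactly the same Cauchy-estimate mechanism you use for~$\star_S$: writing $d_m=\sum_{n+k=m}\frac{(-1/2)^k}{k!}\big(\sum_j\partial_{q_j}\partial_{p_j}\big)^k a_n$, bounding the $2k$-fold derivative by $(k!)^2 N^k/\delta^{2k}$ times $\sup|a_n|$, and using $k!\,n!\le m!$.
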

The proof is a consequence of~\eqref{implicwhcQr} in Lemma~\ref{lemBorelOperators}.
\medskip

In the case of Examples~\eqref{eqEulerseries} and~\eqref{eqlogstarlog}, we find
\begin{gather}
  \be\big( (1-p)\ii \star (1-q)\ii \big) = \Big( 1 - \xi \big(
  (1-p)(1-q) \big)\ii \Big)\ii \\[1ex]
  \be\big( \log(1-p) \star \log(1-q) \big) = \log(1-p) \log(1-q) + 
  \Li_2\Big(\xi \big( (1-p)(1-q) \big)\ii \Big),
\end{gather}
where
\beglab{eqdefLid}
\Li_2(z) = \sum_{k\ge1} \frac{z^k}{k^2}
= - \int_0^z \frac{\log(1-\ze)}{\ze}\,\dd\ze.
\edla
In fact, the divergent series~\eqref{eqEulerseries} is essentially the
famous Euler series, the most elementary example of resurgent series.

\parage
Recall that we have defined the Borel counterpart~$*_M$ of
the Moyal star product~$\star_M$ by~\eqref{eqdefastM}, and the
Borel-star product $*=\star_S$, 
counterpart of the standard star product~$\star$,
by~\eqref{eqdefastS}.

For the sake of simplicity, we begin with the case of one degree of
freedom.


\begin{lemma}\label{integral rep of Borel star}
  There is an integral representation of the Borel-star product in
  $\C[[\xi,q,p]]$ as follows:
\begin{equation}\label{rightone}
  \hat{f}\ast\hat{g}(\xi,q,p)=\frac{d^3}{d\xi^3}\int_0^\xi
  d\xi_1\int_0^{\xi-\xi_1}d\xi_2\int_0^{\xi-\xi_1-\xi_2}d\xi_3 
  \int_0^{2\pi}\frac{d\theta}{2\pi}
  \hat{f}(\xi_1,q,p+\sqrt{\xi_3}e^{-i\theta})\hat{g}(\xi_2,q+\sqrt{\xi_3}e^{i\theta},p),  
\end{equation}
where the integrand is considered as element of $\mathbb{C}[e^{\pm i\theta}][[q,p,\xi_1,\xi_2,\sqrt{\xi_3}]]$ and integration in $\theta$ is performed termwise.

Moreover, if both factors are convergent, \ie $\hat f,\hat g \in
\wh\cQ_3$, then so is their Borel-star product: $\hat f * \hat g \in\wh\cQ_3$.
\end{lemma}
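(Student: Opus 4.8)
The plan is to establish \eqref{rightone} first as an identity of formal power series in $(\xi,q,p)$, and then, under the hypothesis $\hat f,\hat g\in\wh\cQ_3$, to check that every operation appearing on its right-hand side preserves holomorphy near the origin. For the formal identity I would write $\hat f=\sum_{m\ge0}a_m(q,p)\,\xi^m/m!$ and $\hat g=\sum_{n\ge0}b_n(q,p)\,\xi^n/n!$, so that $\be\ii\hat f=\sum_m a_m t^m$ and $\be\ii\hat g=\sum_n b_n t^n$, and compute both sides term by term. From \eqref{eqdefstarS} with $N=1$ one gets $\be\ii\hat f\star\be\ii\hat g=\sum_{k,m,n\ge0}\frac1{k!}(\pa_p^k a_m)(\pa_q^k b_n)\,t^{k+m+n}$, hence $\hat f*\hat g=\sum_{k,m,n}\frac1{k!}(\pa_p^k a_m)(\pa_q^k b_n)\,\xi^{k+m+n}/(k+m+n)!$. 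On the right-hand side of \eqref{rightone}, Taylor-expanding $\hat f(\xi_1,q,p+w)$ and $\hat g(\xi_2,q+w',p)$ in $w=\sqrt{\xi_3}\,\ee^{-\I\tht}$ and $w'=\sqrt{\xi_3}\,\ee^{\I\tht}$ about the origin, the integrand equals
\[
\sum_{m,n,j,l\ge0}\frac{\xi_1^m\,\xi_2^n\,\xi_3^{(j+l)/2}\,\ee^{\I(l-j)\tht}}{m!\,n!\,j!\,l!}\,(\pa_p^j a_m)(\pa_q^l b_n),
\]
so termwise integration in $\tht$ (orthogonality of the $\ee^{\I k\tht}$) keeps exactly the terms with $j=l$; this is what realizes the Hadamard product in the variable $\ze=\xi_3$ and simultaneously makes the half-integer powers of $\xi_3$ vanish. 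Then the Dirichlet integral $\int_0^\xi d\xi_1\int_0^{\xi-\xi_1}d\xi_2\int_0^{\xi-\xi_1-\xi_2}\xi_1^m\xi_2^n\xi_3^j\,d\xi_3=\frac{m!\,n!\,j!}{(m+n+j+3)!}\,\xi^{m+n+j+3}$, followed by $\frac{d^3}{d\xi^3}$, converts the right-hand side of \eqref{rightone} into $\sum_{j,m,n}\frac1{j!}(\pa_p^j a_m)(\pa_q^j b_n)\,\xi^{m+n+j}/(m+n+j)!$, which is exactly $\hat f*\hat g$; the conventions in the statement (integrand read in $\C[\ee^{\pm\I\tht}][[q,p,\xi_1,\xi_2,\sqrt{\xi_3}]]$, $\tht$-integration performed termwise) are precisely what makes this computation legitimate.

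For the convergence claim I would fix $R>0$ such that $\hat f,\hat g$ are holomorphic and bounded on $\{|\xi|,|q|,|p|\le R\}$, pick $r<R$, and restrict to $|q|,|p|\le r$, $|\xi_3|\le(R-r)^2$, $|\xi_1|,|\xi_2|\le R$, so that the perturbed arguments $p+\sqrt{\xi_3}\,\ee^{-\I\tht}$ and $q+\sqrt{\xi_3}\,\ee^{\I\tht}$ stay inside the polydisc of convergence of $\hat f$ and $\hat g$ for every $\tht$. On this range the integrand of \eqref{rightone} is continuous in $\tht$, holomorphic in $(\xi_1,\xi_2,q,p)$, and — for fixed $\tht$ — holomorphic in $\xi_3$ off the origin. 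The decisive point is that its $\tht$-average $I(\xi_1,\xi_2,\xi_3,q,p)$ extends holomorphically across $\xi_3=0$: as in the formal computation, or equivalently via the classical integral representation \eqref{eqmultconv} of the Hadamard product $\hat f(\xi_1,q,p+\ze)\odot\hat g(\xi_2,q+\ze,p)$ in the variable $\ze=\xi_3$, the odd powers of $\sqrt{\xi_3}$ cancel out and $I$ is holomorphic on a full polydisc neighbourhood of $0\in\C^5$, with polyradius controlled by $R$ and $r$. Granting this, the iterated integral $J(\xi,q,p)=\int_0^\xi d\xi_1\int_0^{\xi-\xi_1}d\xi_2\int_0^{\xi-\xi_1-\xi_2}I\,d\xi_3$ is holomorphic in $(\xi,q,p)$ near $0$ — being the integral of a holomorphic integrand over chains depending holomorphically on the parameters — and hence so is $\hat f*\hat g=\frac{d^3}{d\xi^3}J$; therefore $\hat f*\hat g\in\wh\cQ_3$.

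The step I expect to be the main obstacle is exactly this removal of the apparent branch point of the integrand at $\xi_3=0$ after averaging in $\tht$, together with the bookkeeping needed to pin down an explicit polydisc on which $I$, and then $J$, genuinely converge; the remaining ingredients (the Dirichlet integral, differentiation under the integral sign, the termwise $\tht$-integration, and the remark that the convergent integral has the formally computed Taylor series) are routine and can be dispatched along the way.
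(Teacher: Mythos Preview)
Your proposal is correct and follows essentially the same route as the paper: expand both factors, use orthogonality of $\ee^{\I k\tht}$ to reduce to the diagonal $j=l$ (the Hadamard step), apply the Dirichlet simplex integral together with $\frac{d^3}{d\xi^3}$, and for convergence observe that the $\tht$-average is even in $s=\sqrt{\xi_3}$ so that the apparent branch point disappears and the result lies in $\C\{\xi_1,\xi_2,\xi_3,q,p\}$. The paper phrases the last point as $G(\xi_1,\xi_2,s,q,p)=G(\xi_1,\xi_2,-s,q,p)$, which is exactly your ``odd powers of $\sqrt{\xi_3}$ cancel'' remark.
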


\begin{proof} We expand $\hat{f}$ and $\hat{g}$:
\[\begin{split}
\hat{f}\ast\hat{g}(\xi,q,p)
=&
\sum\limits_{m,n,s \geq 0} \frac{n!m!}{(n+m+s)!s!}(\partial_p^sf_m) (\partial_q^sg_n)\xi^{m+n+s}
           \\
=&\frac{d^3}{d\xi^3}\int_0^\xi d\xi_1\int_0^{\xi-\xi_1}d\xi_2\int_0^{\xi-\xi_1-\xi_2}d\xi_3
 \sum\limits_{n,m,s \geq 0} \frac{\partial_p^sf_m \partial_q^sg_n}{s!s!}\xi_1^m \xi_2^n \xi_3^s.
\end{split}\]
Using the fact that the Hadamard product of two formal series $\phi(\xi) = \sum\limits_{s\geq 0} a_s \xi^s$ and $\psi(\xi) = \sum\limits_{t\geq 0} b_t \xi^t $ can be written as $\phi \odot \psi (\xi) =\frac{1}{2\pi } \int _0 ^{2\pi } \phi(\sqrt{\xi} e^{-i\theta}) \psi(\sqrt{\xi} e^{i\theta}) d\theta$ with termwise integration in $\mathbb{C}[e^{\pm i\theta}][[\sqrt{\xi}]]$, we get
\[\begin{split}
\hat{f}\ast\hat{g}(\xi,q,p)=&\frac{d^3}{d\xi^3}\int_0^\xi d\xi_1\int_0^{\xi-\xi_1}d\xi_2\int_0^{\xi-\xi_1-\xi_2}d\xi_3
           \\
&\frac{1}{2\pi}\int_0^{2\pi}d\theta\sum\limits_{m,n \geq 0} f_m(q,p+\sqrt{\xi_3}e^{-i\theta}) g_n(q+\sqrt{\xi_3}e^{i\theta},p) \xi_1^m \xi_2^n
           \\
=&
\frac{d^3}{d\xi^3}\int_0^\xi d\xi_1\int_0^{\xi-\xi_1}d\xi_2\int_0^{\xi-\xi_1-\xi_2}d\xi_3
           \\
&\frac{1}{2\pi}\int_0^{2\pi}d\theta\hat{f}(\xi_1,q,p+\sqrt{\xi_3}e^{-i\theta})\hat{g}(\xi_2,q+\sqrt{\xi_3}e^{i\theta},p) .
\end{split}\]
%


%
%
%





Now, suppose $\hat{f}, \hat{g} \in \wh\cQ_3 = \mathbb{C}\{\xi,q,p\}$.
The right hand side of (\ref{rightone}) involves a function
$G(\xi_1,\xi_2,s,q,p):=\int_0^{2\pi} 
\hat{f}(\xi_1,q,p+se^{-i\theta})
\hat{g}(\xi_2,q+se^{i\theta},p) \frac{d\theta}{2\pi}$ which clearly belongs to $\mathbb{C}\{\xi_1,\xi_2,s ,q,p\}$ (indeed, we can take $(s,\xi_1,\xi_2, q,p)$ close enough to origin). In fact, $G(\xi_1,\xi_2,s,q,p)=G(\xi_1,\xi_2,-s,q,p)$, hence $F(\xi_1,\xi_2,\xi_3,q,p):=G(\xi_1,\xi_2,\sqrt{\xi_3},q,p)\in \mathbb{C}\{\xi_1,\xi_2,\xi_3,q,p\}$ and the right hand side of (\ref{rightone}) can be written as
\begin{equation}
  \frac{d^3}{d\xi^3}\int_0^\xi d\xi_1\int_0^{\xi-\xi_1}d\xi_2\int_0^{\xi-\xi_1-\xi_2}d\xi_3 F(\xi_1,\xi_2,\xi_3,q,p),
\end{equation}
hence it defines a holomorphic germ in $\mathbb{C}\{\xi,q,p\}$.
\end{proof}

The integral formula~\eqref{rightone} differs from the one given in
Proposition~3.3 of \cite{GGS}, which is not correct.
Take for instance $\hat f = \xi p$ and $\hat g = \xi q$: we know
by the first equation in~\eqref{eqsimplex} that we must find
\begla
(\xi p) * (\xi q) = pq \frac{\xi^2}{2!} + \frac{\xi^3}{3!},
\edla
and the reader may check that our formula produces the right outcome, but not the
formula from \cite{GGS}, which yields a term $\frac{\xi^3}{2!}$
instead of $\frac{\xi^3}{3!}$.

\begin{remark}
  Instead of writing the Hadamard product $\phi\odot\psi(\xi)$ as we
  did in our proof, we could have used the integration variable
  $\ze=\sqrt\xi \eith$ and then the Cauchy theorem, which yields
\begla
\phi\odot \psi(\xi)=\frac{1}{2\pi i}\oint_C
\phi(\frac{\xi}{z})\psi(z)\frac{dz}{z}
\quad\text{with any circle}\ens
C\col \tht \mapsto c\, \eith
\ens\text{of radius}\ens
c \in \big( \tfrac{|\xi|}{R_\psi}, R_\phi \big),
\edla
where $R_\phi$ and~$R_\psi$ are the radii of convergence of~$\phi$
and~$\psi$.
Correspondingly, Formula~\eqref{rightone} can be rewritten
\begla
  \hat{f}\ast\hat{g}(\xi,q,p)=\frac{d^3}{d\xi^3}\int_0^\xi
  d\xi_1\int_0^{\xi-\xi_1}d\xi_2\int_0^{\xi-\xi_1-\xi_2}d\xi_3 
  \oint_C \frac{d z}{2\pi i z}
  \hat{f}(\xi_1,q,p+\frac{\xi_3}{z}) \hat{g}(\xi_2,q+z,p), 
\edla
where~$C$ is an appropriate circle.
\end{remark}

\begin{lemma}   \label{integral rep of Borel starM}
If $\hat{f},\hat{g} \in \C[[\xi,q,p]]$, then
\begin{equation}
\begin{split}
  \hat{f} *_M \hat{g} 
                         &= \frac{d^4}{d\xi^4} \int^\xi_0 d\xi_1 \int^{\xi-\xi_1} _0 d\xi_2 \int^{\xi-\xi_1-\xi_2}_0 d\xi_3 \int^{\xi-\xi_1-\xi_2-\xi_3}_0 d\xi_4   \\
                         & (\frac{1}{2\pi i})^2 \oint_{C_1} dz_1\oint_{C_2} dz_2 \hat{f}(\xi_1,q+z_1,p+z_2) \hat{g} (\xi_2, q+\frac{\xi_4}{2z_2},p-\frac{\xi_3}{2z_1})
                       \end{split}
                     \end{equation}
                     with integration on appropriate circles~$C_1$ and~$C_2$.

                       Moreover,  if both factors are convergent, \ie $\hat f,\hat g \in
\wh\cQ_3$, then so is $\hat f *_M \hat g$.
\end{lemma}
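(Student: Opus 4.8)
The plan is to follow the template of the proof of Lemma~\ref{integral rep of Borel star}. The only genuinely new feature is that the bidifferential operator underlying~$\star_M$ is a \emph{difference} of two terms, $\partial_p\ot\partial_q-\partial_q\ot\partial_p$, so that its $k$-th power expands by the binomial theorem into a double sum indexed by two multiplicities; this will produce a \emph{nested} (double) Hadamard product and one extra iterated integration compared with the case of $\star=\star_S$.

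\textbf{Derivation of the formula.} Expand $\hat f=\sum_\ell F_\ell\,\xi^\ell$ and $\hat g=\sum_{\ell'}G_{\ell'}\,\xi^{\ell'}$ in $\C[[\xi,q,p]]$, compute $\be\ii\hat f\star_M\be\ii\hat g$ termwise from~\eqref{eqdefstarM}, and apply~$\be$. Writing $a$ for the common order of $\partial_p$ on~$\hat f$ and of $\partial_q$ on~$\hat g$, and $b$ for the common order of $\partial_q$ on~$\hat f$ and of $\partial_p$ on~$\hat g$ (so $k=a+b$, $\binom{k}{a}\tfrac{1}{2^{k}k!}=\tfrac{1}{2^{a+b}a!\,b!}$, with sign $(-1)^{b}$), one obtains
\[
\hat f *_M\hat g=\sum_{\ell,\ell',a,b\ge 0}\frac{(-1)^{b}\,\ell!\,\ell'!}{2^{a+b}\,a!\,b!\,(\ell+\ell'+a+b)!}\,\bigl(\partial_p^{a}\partial_q^{b}F_\ell\bigr)\bigl(\partial_p^{b}\partial_q^{a}G_{\ell'}\bigr)\,\xi^{\ell+\ell'+a+b}.
\]
Two combinatorial mechanisms must now be recognised. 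The weight $\dfrac{\ell!\,\ell'!\,a!\,b!}{(\ell+\ell'+a+b)!}$ is exactly what integration over the $4$-simplex produces after $\tfrac{d^4}{d\xi^4}$: by the iterated Beta identity,
\[
\frac{d^4}{d\xi^4}\int_0^\xi d\xi_1\int_0^{\xi-\xi_1} d\xi_2\int_0^{\xi-\xi_1-\xi_2} d\xi_3\int_0^{\xi-\xi_1-\xi_2-\xi_3} d\xi_4\;\xi_1^{\ell}\xi_2^{\ell'}\xi_3^{a}\xi_4^{b}=\frac{\ell!\,\ell'!\,a!\,b!}{(\ell+\ell'+a+b)!}\,\xi^{\ell+\ell'+a+b}.
\]
The leftover factor $\dfrac{1}{(a!)^{2}(b!)^{2}}\bigl(\partial_p^{a}\partial_q^{b}F_\ell\bigr)\bigl(\partial_p^{b}\partial_q^{a}G_{\ell'}\bigr)$, summed against $\xi_3^{a}\xi_4^{b}$, becomes after carrying out the $\ell$- and $\ell'$-sums
\[
\sum_{a,b\ge 0}\frac{\bigl(\partial_p^{a}\partial_q^{b}\hat f(\xi_1,q,p)\bigr)\bigl(\partial_p^{b}\partial_q^{a}\hat g(\xi_2,q,p)\bigr)}{(a!)^{2}(b!)^{2}}\Bigl(\tfrac{\xi_3}{2}\Bigr)^{a}\Bigl(-\tfrac{\xi_4}{2}\Bigr)^{b},
\]
which is the Hadamard product, in the two pairs of Taylor-shift variables, of $\hat f(\xi_1,q+\beta,p+\alpha)$ with $\hat g(\xi_2,q+\beta',p+\alpha')$ (contracting $\alpha$ with $\beta'$ and $\beta$ with $\alpha'$), evaluated at the value $\xi_3/2$ of the first Hadamard variable and $-\xi_4/2$ of the second. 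Applying twice the Cauchy representation $\phi\odot\psi(w)=\tfrac{1}{2\pi i}\oint_C\phi(z)\psi(w/z)\,\tfrac{dz}{z}$ of the Hadamard product (each contour integral read termwise as the extraction of a residue, as in the proof of Lemma~\ref{integral rep of Borel star}), and relabelling the contour variables $z_1,z_2$ and the simplex variables $\xi_3,\xi_4$ as needed to match the way the two factors are arranged, one arrives at the integral representation of the statement.

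\textbf{Convergence.} Assume $\hat f,\hat g\in\wh\cQ_3$, holomorphic on a polydisc $\{|\xi|,|q|,|p|<\rho\}$, and fix circles $C_1=C_2=\{|z|=r\}$ with $0<r<\rho$. For $(\xi_1,\xi_2,q,p)$ close enough to the origin and $|\xi_3|,|\xi_4|<r\rho$, all the arguments of~$\hat f$ and~$\hat g$ occurring in the integrand stay inside the polydisc, uniformly for $(z_1,z_2)\in C_1\times C_2$; since the integrand is then jointly holomorphic in all variables (and $\xi_3,\xi_4$ enter only through $\xi_3/z_1$, $\xi_4/z_2$, so there is no monodromy issue), the double contour integral defines a holomorphic germ $G(\xi_1,\xi_2,\xi_3,\xi_4,q,p)$ at the origin of~$\C^{6}$. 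Iterated integration of~$G$ along segments followed by $\tfrac{d^4}{d\xi^4}$ preserves membership in $\C\{\xi,q,p\}=\wh\cQ_3$, whence $\hat f *_M\hat g\in\wh\cQ_3$. Alternatively, this last claim follows with no reference to the formula, by conjugating~$*_M$ to $*=*_S$ through the Borel transforms of the transition operator~$T$ of~\eqref{eqdefequivT} and of~$T\ii$ --- operators which map $\wh\cQ_3$ into itself --- and invoking the convergence part of Lemma~\ref{integral rep of Borel star}.

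\textbf{Main difficulty.} The delicate step is the combinatorial bookkeeping that produces the first displayed identity: this is precisely where the computation of~\cite{GGS} broke down. One must keep track of the factorial weights and the powers of~$2$ so as to split them correctly between $\tfrac{\ell!\,\ell'!\,a!\,b!}{(\ell+\ell'+a+b)!}$ (absorbed by the $4$-simplex integral) and $\tfrac{1}{(a!)^{2}(b!)^{2}}$ (absorbed by a \emph{double}, not single, Hadamard product). Once this is set up correctly, the analytic part of the argument is a routine adaptation of the proof of Lemma~\ref{integral rep of Borel star}.
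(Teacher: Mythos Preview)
Your proof is correct and follows essentially the same route as the paper's: both arguments rest on (i) the Cauchy–residue realisation of the double Hadamard product via the two contour integrals, and (ii) the Beta/simplex identity that converts the iterated integral $\tfrac{d^4}{d\xi^4}\int_{\Delta_4}\xi_1^{\ell}\xi_2^{\ell'}\xi_3^{a}\xi_4^{b}$ into the weight $\tfrac{\ell!\,\ell'!\,a!\,b!}{(\ell+\ell'+a+b)!}$. The only difference is cosmetic: the paper starts from the integral formula and reduces it to the series defining $\hat f*_M\hat g$ (performing the $\xi_3,\xi_4$ integrations first to pass from $\tfrac{d^4}{d\xi^4}$ to $\tfrac{d^2}{d\xi^2}$), whereas you start from the series and build up the integral, and you spell out the bookkeeping more explicitly. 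Two minor remarks: your displayed Hadamard expression has the roles of $\xi_3$ and $\xi_4$ (and the sign) swapped relative to the statement, which you rightly absorb into ``relabelling as needed''; and your alternative convergence argument via $\wh T$, $\wh T^{-1}$ forward-references Lemma~\ref{integral rep of Borel T}, so the direct argument you give first is the one to keep.
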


\begin{proof}
\[\begin{split}
  \text{RHS}=& \frac{d^4}{d\xi^4} \int^\xi_0 d\xi_1 \int^{\xi-\xi_1} _0 d\xi_2 \int^{\xi-\xi_1-\xi_2}_0 d\xi_3 \int^{\xi-\xi_1-\xi_2-\xi_3}_0 d\xi_4 \\
  & \sum\limits_{n,m}   \Big(\partial_p^n \partial_q^m\hat{f} (\xi_1,q,p)\Big) \Big( \partial_p^m\partial_q^n\hat{g}(\xi_2,q,p)\Big) \frac{(-1)^m}{2^{m+n}m!^2 n!^2 } \xi_3^m \xi_4^n  \\
  =&\frac{d^2}{d\xi^2} \int^\xi_0 d\xi_1 \int^{\xi-\xi_1} _0 d\xi_2  \\
  & \sum\limits_{n,m,\alpha,\beta} \Big( \partial_p^n \partial_q^m f_\alpha(q,p) \xi_1 ^{\alpha}\Big)\Big( \partial_p^m\partial_q^n g_\beta(q,p) \xi_2^\beta\Big) \frac{(-1)^m}{2^{m+n}m! n!(n+m)!} (\xi-\xi_1-\xi_2)^{n+m}  \\
  =& \text{LHS}.
\end{split}
  \]
\end{proof}

The last statement in Lemma~\ref{integral rep of Borel starM} can also
be derived from the following integral representations of the Borel
counterparts of~$T$ and~$T\ii$,
\begla
\wh T\hat f \defeq \be T \be\ii \hat f,
\qquad
\wh T\ii\hat f = \be T\ii \be\ii \hat f.
\edla

\begin{lemma}   \label{integral rep of Borel T}
    For any $\hat f \in \C[[\xi,q,p]]$,
  \begin{equation}
  \wh T\hat f = \frac{d}{d\xi} \int_0^{\xi} d\xi_1 \frac{1}{2\pi i}\oint_C \hat{f}(\xi-\xi_1,q+z,p-\frac{\xi_1}{2z}) \frac{dz}{z},
\end{equation}
\begin{equation}
  \wh T\ii\hat{f} = \frac{d}{d\xi} \int_0^{\xi} d\xi_1 \frac{1}{2\pi i}\oint_C \hat{f}(\xi-\xi_1,q+z,p+\frac{\xi_1}{2z}) \frac{dz}{z},
\end{equation}
with integration on appropriate circle~$C$.

Moreover, if $\hat f$ is convergent, \ie $\hat f\in
\wh\cQ_3$, then so is $\wh T\hat f$.
\end{lemma}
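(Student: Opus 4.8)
The plan is to unfold the exponential series that defines the transition operator~$T$, carry it term by term through the formal Borel transform~$\be$, and recognize the resulting series as the announced contour integral; the convergence statement will then follow from the standard fact that a contour integral with a jointly holomorphic integrand depends holomorphically on its parameters.

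First I would record the Borel image of multiplication by~$t^k$: for $\ti\psi=\sum_{n\ge0}a_n t^n$ and $\hat\psi=\be\ti\psi$, the Beta integral $\int_0^\xi\xi_1^k(\xi-\xi_1)^n\,d\xi_1=\frac{k!\,n!}{(n+k+1)!}\,\xi^{n+k+1}$ gives, as an identity of formal power series,
\[
  \be\big(t^k\ti\psi\big)(\xi)=\sum_{n\ge0}a_n\frac{\xi^{n+k}}{(n+k)!}
  =\frac{d}{d\xi}\Big(\frac1{k!}\int_0^\xi\xi_1^k\,\hat\psi(\xi-\xi_1)\,d\xi_1\Big),\qquad k\ge0.
\]
Since $\partial_q$ and~$\partial_p$ commute with~$\be$, expanding $T=\sum_{k\ge0}\frac{(-1)^k t^k}{2^k k!}(\partial_q\partial_p)^k$ and applying the previous identity with $\hat\psi=\partial_q^k\partial_p^k\hat f$ would give
\[
  \wh T\hat f=\frac{d}{d\xi}\int_0^\xi d\xi_1\sum_{k\ge0}\frac{(-1)^k}{2^k(k!)^2}\,\xi_1^k\,\big(\partial_q^k\partial_p^k\hat f\big)(\xi-\xi_1,q,p),
\]
the interchange of $\sum_k$ with $\frac{d}{d\xi}\int_0^\xi$ being harmless because the coefficient of each monomial $\xi^a q^b p^c$ on the right involves only the indices $0\le k\le a$. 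It then remains to identify the inner sum. Interpreting $\frac1{2\pi i}\oint_C(\cdot)\frac{dz}{z}$ exactly as in the proof of Lemma~\ref{integral rep of Borel star} and the Remark following it — that is, as the operation of expanding the integrand as a Laurent series in~$z$ with coefficients in $\C[[\xi,\xi_1,q,p]]$ and keeping the coefficient of~$z^0$ — one checks that the Taylor expansion in $q,p$ of $\hat f\big(\xi-\xi_1,\,q+z,\,p-\tfrac{\xi_1}{2z}\big)$ has $z^0$-coefficient $\sum_{k\ge0}\frac{(-1)^k\xi_1^k}{2^k(k!)^2}(\partial_q^k\partial_p^k\hat f)(\xi-\xi_1,q,p)$, which is precisely the inner sum; this gives the formula for~$\wh T\hat f$. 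Replacing $(-1)^k$ by~$1$ throughout, which amounts to changing $p-\tfrac{\xi_1}{2z}$ into $p+\tfrac{\xi_1}{2z}$, gives the formula for~$\wh T\ii\hat f$, because $T\ii=\exp\big(\tfrac t2\partial_q\partial_p\big)$.

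For the convergence statement I would argue directly. Assume $\hat f\in\wh\cQ_3$ is holomorphic and bounded on $\{|\xi|<\rho,\ |q|<\rho,\ |p|<\rho\}$ and fix a circle $C=\{|z|=c\}$ with $0<c<\rho$. Then one can choose $\de>0$ so that, whenever $|\xi|,|q|,|p|<\de$ and $\xi_1$ runs over the segment from~$0$ to~$\xi$, the three arguments $\xi-\xi_1$, $q+z$, $p-\tfrac{\xi_1}{2z}$ all lie in the polydisc of radius~$\rho$ for every $z\in C$ (it suffices that $\de+c<\rho$ and $\de+\tfrac{\de}{2c}<\rho$). On this range the integrand is jointly holomorphic and~$z$ is bounded away from~$0$, so $G(\xi_1,\xi,q,p):=\frac1{2\pi i}\oint_C\hat f(\xi-\xi_1,q+z,p-\tfrac{\xi_1}{2z})\frac{dz}{z}$ is holomorphic near the origin of~$\C^4$; hence $\xi\mapsto\int_0^\xi G(\xi_1,\xi,q,p)\,d\xi_1$ and its $\xi$-derivative are holomorphic near the origin, i.e. $\wh T\hat f\in\wh\cQ_3$. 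The identical argument handles $\wh T\ii$.

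There is no deep difficulty here; the step demanding care will be the bookkeeping of the second paragraph — producing the shift $\xi\mapsto\xi-\xi_1$ (rather than~$\xi_1$) inside the argument of~$\hat f$, placing the weights $z$ and $-\tfrac{\xi_1}{2z}$ in the correct slots, and extracting the constant $\frac1{2^k(k!)^2}$ out of the product of the $\frac1{2^k k!}$ coming from $\exp$ and the $\frac1{k!}$ coming from the Borel image of~$t^k$. I would also flag, as is implicit in the other lemmas of this section, that the two formulas already hold at the purely formal level for every $\hat f\in\C[[\xi,q,p]]$, the circle~$C$ being only a device to read off a residue, and become honest contour integrals only once convergence of~$\hat f$ is assumed.
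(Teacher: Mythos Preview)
Your proof is correct and follows essentially the same route as the paper: both arrive at the intermediate identity
\[
\wh T\hat f=\frac{d}{d\xi}\int_0^\xi d\xi_1\sum_{k\ge0}\frac{(-1)^k}{2^k(k!)^2}\,\xi_1^k\,(\partial_q^k\partial_p^k\hat f)(\xi-\xi_1,q,p)
\]
and match it with the contour integral by Taylor-expanding and extracting the $z^0$ coefficient; you go from the definition of~$T$ toward the integral (LHS $\to$ RHS) whereas the paper starts from the integral and works back (RHS $\to$ LHS), and you spell out the convergence argument that the paper leaves implicit.
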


\begin{proof}
  \[\begin{split}
  \text{RHS}&= \frac{d}{d\xi} \int_0^{\xi} d\xi_1 \frac{1}{2\pi i} \oint_C \sum\limits_{n,m} \frac{\partial^n_q \partial^m_p \hat{f}(\xi-\xi_1,q,p)}{n!m!}{(-\frac{\xi_1}{2})^m} z^{n-m-1} dz \\
     &= \frac{d}{d\xi} \int_0^{\xi} d\xi_1 \sum\limits_{n} \frac{\partial^n_q \partial^n_p \hat{f}(\xi-\xi_1,q,p)}{n!n!}(-\frac{\xi_1}{2})^n  \\
     &= \beta\big( \sum\limits_{n} \frac{\partial^n_q \partial^n_p \tilde{f}(t,q,p)}{n!} \cdot (-\frac{t}{2})^n  \big)  \\
     &= \text{LHS}.
  \end{split}
  \]
\end{proof}
  

\parage
We now consider the case of an abitrary number of degrees of freedom, say~$r$.
We set $q=(q_1,\cdots,q_r)$ and $p=(p_1,\cdots,p_r)$.
If $\hat{f}(\xi,q,p)=\sum\limits_{m=0}^{\infty} f_m(q,p) \xi^m$, $\hat{g}(\xi,q,p)=\sum\limits_{n=0}^{\infty} g_n(q,p) \xi^n$, then
\[
  \hat{f}\ast \hat{g} (\xi,q,p) = \sum_{m,n,k_1,\cdots,k_r\ge0}
  \frac{n!m!}{k_1!\cdots k_r!} (\partial_{p_1}^{k_1} \cdots
  \partial_{p_r}^{k_r} f_m)( \partial_{q_1}^{k_1} \cdots
  \partial_{q_r}^{k_r} g_n )
  \frac{\xi^{k_1+\cdots+k_r+n+m}}{(k_1+\cdots+k_r+n+m)!}.
  \]

  \begin{lemma}   \label{lemBorelOperators}
    There are integral representation formulas analogous to those of
    Lemmas~\ref{integral rep of Borel star},
    \ref{integral rep of Borel starM}
  and~\ref{integral rep of Borel T} for the case of $r$ degrees of freedom.
  For instance, for $\hat f, \hat g \in
  \C[[\xi_1,q_1,\ldots,q_r,p_1,\ldots,p_r]]$, the formula
  generalising~\eqref{rightone} is
  \begin{equation}\label{formulahigh}
  \begin{split}
  \hat{f}\ast \hat{g} (\xi,q,p) =&
  \frac{d^{r+2}}{d \xi ^{r+2}} \int^\xi_0 d\xi_1
  \int^{\xi-\xi_1}_0 d\xi_2
  \cdots \int_0^{\xi-\xi_1-\cdots -\xi_{r+1}} d\xi_{r+2}
  \big( \frac{1}{2\pi} \big)^r \overbrace{\int_0^{2\pi} \cdots \int_0^{2\pi}}^r d\theta_1 \cdots d\theta_r  \\
  &\hat{f}(\xi_{r+1}, q_1,\cdots, q_r,p_1+\sqrt{\xi_1}e^{-i\theta_1}, \cdots, p_r+\sqrt{\xi_r} e^{-i\theta_r}) \\
  & \hat{g}(\xi_{r+2}, q_1+\sqrt{\xi_1}e^{i\theta_1}, \cdots, q_r+\sqrt{\xi_r} e^{i\theta_r}, p_1, \cdots ,p_r).
\end{split}
\end{equation}

These formulas entail that
\beglab{implicwhcQr}
\hat f, \hat g \in \wh\cQ_{2r+1} \Imp
\hat f*\hat g,\;\; \hat f*_M\hat g,\;\; \wh T\hat f,\;\; \wh T\ii\hat f \in \wh\cQ_{2r+1} .
\edla
\end{lemma}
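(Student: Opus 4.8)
The plan is to prove formula~\eqref{formulahigh} by a direct generalisation of the computation in the proof of Lemma~\ref{integral rep of Borel star}, and then to deduce the remaining integral representations and the implication~\eqref{implicwhcQr} with essentially the same bookkeeping.

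First I would expand $\hat f=\sum_m f_m(q,p)\,\xi^m$ and $\hat g=\sum_n g_n(q,p)\,\xi^n$, plug the shifted arguments $p_j+\sqrt{\xi_j}\,e^{-i\theta_j}$ into $\hat f$ and $q_j+\sqrt{\xi_j}\,e^{i\theta_j}$ into $\hat g$, and Taylor-expand $f_m$ and $g_n$ in those variables. Performing the $\theta_j$-integrations termwise in $\C[e^{\pm i\theta}][[q,p,\sqrt{\xi_1},\ldots,\sqrt{\xi_r},\xi_{r+1},\xi_{r+2}]]$ selects, in each variable $\xi_j$ with $1\le j\le r$, the Hadamard product of the two Taylor series — that is, it forces the order of the $q_j$-derivative acting on $g_n$ to equal the order of the $p_j$-derivative acting on $f_m$ — by the identity $\phi\odot\psi(\xi)=\frac1{2\pi}\int_0^{2\pi}\phi(\sqrt\xi\,e^{-i\theta})\psi(\sqrt\xi\,e^{i\theta})\,d\theta$ recalled in the proof of Lemma~\ref{integral rep of Borel star}, applied once per index. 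One is then left, under the $(r+2)$-fold integral and the operator $d^{r+2}/d\xi^{r+2}$, with monomials $\xi_1^{k_1}\cdots\xi_r^{k_r}\,\xi_{r+1}^{m}\,\xi_{r+2}^{n}$ carrying the weight $(k_1!\cdots k_r!)^{-2}\,(\partial_{p_1}^{k_1}\cdots\partial_{p_r}^{k_r}f_m)(\partial_{q_1}^{k_1}\cdots\partial_{q_r}^{k_r}g_n)$, and the elementary identity
\[
  \frac{d^{r+2}}{d\xi^{r+2}}\int_0^\xi\! d\xi_1\cdots\!\int_0^{\xi-\xi_1-\cdots-\xi_{r+1}}\!\! d\xi_{r+2}\;\;\xi_1^{a_1}\cdots\xi_{r+2}^{a_{r+2}}
  =\frac{a_1!\cdots a_{r+2}!}{(a_1+\cdots+a_{r+2})!}\,\xi^{\,a_1+\cdots+a_{r+2}},
\]
obtained by iterating the one-dimensional Beta integral, turns the right-hand side of~\eqref{formulahigh} into the series for $\hat f*\hat g$ recalled just before the statement: the factor $(k_1!\cdots k_r!)^{-2}$ coming from the $r$ Hadamard products combines with the $k_1!\cdots k_r!$ produced by the identity, and together with the $n!\,m!$ also produced by the identity this yields the claimed coefficient $n!\,m!/(k_1!\cdots k_r!)$.

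For $\wh T$ and $\wh T\ii$, the point is that $T=\exp\!\big(-\tfrac t2\sum_{j}\partial_{q_j}\partial_{p_j}\big)=T_1\cdots T_r$ with $T_j:=\exp\!\big(-\tfrac t2\partial_{q_j}\partial_{p_j}\big)$, hence $\wh T=\wh T_1\circ\cdots\circ\wh T_r$ and $\wh T\ii=\wh T_1\ii\circ\cdots\circ\wh T_r\ii$, each $\wh T_j^{\pm1}$ being the one-degree-of-freedom operator of Lemma~\ref{integral rep of Borel T} acting on the variables $(\xi,q_j,p_j)$. The $r$-degrees-of-freedom analogue of Lemma~\ref{integral rep of Borel starM} then follows from the identity $\hat f*_M\hat g=\wh T\big(\wh T\ii\hat f*\wh T\ii\hat g\big)$, which is the Borel transform of $T(f'\star_S g')=(Tf')\star_M(Tg')$; alternatively one can repeat the expansion of the proof of Lemma~\ref{integral rep of Borel starM}, now with $2r$ contour variables and $r$ further pairs of convolution variables.

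Finally I would establish~\eqref{implicwhcQr} as in the last paragraph of the proof of Lemma~\ref{integral rep of Borel star}. If $\hat f,\hat g\in\wh\cQ_{2r+1}$, then for $(\xi_{r+1},\xi_{r+2},s_1,\ldots,s_r,q,p)$ close enough to the origin the function $G(\xi_{r+1},\xi_{r+2},s_1,\ldots,s_r,q,p):=\frac1{(2\pi)^r}\int_0^{2\pi}\!\!\cdots\!\int_0^{2\pi}\hat f(\xi_{r+1},q,p_1+s_1e^{-i\theta_1},\ldots)\,\hat g(\xi_{r+2},q_1+s_1e^{i\theta_1},\ldots,p)\,d\theta_1\cdots d\theta_r$ is holomorphic, being a parameter integral of a holomorphic integrand over a fixed torus; the substitution $\theta_j\mapsto\theta_j+\pi$, carried out one index at a time, shows that $G$ is even in each $s_j$, so that $F(\xi_1,\ldots,\xi_{r+2},q,p):=G(\xi_{r+1},\xi_{r+2},\sqrt{\xi_1},\ldots,\sqrt{\xi_r},q,p)$ is a convergent germ; hence so is $\hat f*\hat g=\frac{d^{r+2}}{d\xi^{r+2}}\int_0^\xi d\xi_1\cdots\int_0^{\xi-\xi_1-\cdots-\xi_{r+1}}d\xi_{r+2}\,F$, i.e. $\hat f*\hat g\in\wh\cQ_{2r+1}$. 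For $\wh T,\wh T\ii$ one argues the same way, the contours being fixed circles, and then $\hat f*_M\hat g\in\wh\cQ_{2r+1}$ either from the composition formula or directly. The only genuinely delicate point is the bookkeeping of which $\xi_j$ plays a ``Hadamard'' role ($1\le j\le r$) and which a ``convolution'' role ($j=r+1,r+2$), together with the evenness remark that upgrades holomorphy in $\sqrt{\xi_j}$ to holomorphy in $\xi_j$; everything else is routine.
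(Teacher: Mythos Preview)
Your proposal is correct. The paper itself does not write out a proof of this lemma: it is stated immediately after the $r$-degrees-of-freedom expansion of $\hat f*\hat g$ and simply asserts that the formulas are ``analogous'' to those of Lemmas~\ref{integral rep of Borel star}--\ref{integral rep of Borel T}, leaving the details to the reader. Your argument is exactly the intended routine generalisation---the termwise $\theta_j$-integrations as iterated Hadamard products, the Dirichlet/Beta identity for the simplex integral, and the evenness-in-$s_j$ trick to pass from $\sqrt{\xi_j}$ to $\xi_j$---and the bookkeeping you carry out (in particular the cancellation $(k_1!\cdots k_r!)^{-2}\cdot k_1!\cdots k_r!\cdot m!\,n!$) checks against the series displayed in the paper. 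Your treatment of $\wh T$ via the factorisation $T=T_1\cdots T_r$ and of $*_M$ via $\hat f*_M\hat g=\wh T(\wh T\ii\hat f*\wh T\ii\hat g)$ is a clean way to avoid rewriting the longer direct computation; the paper's phrasing (``formulas analogous to\ldots'') would cover either route.
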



\section{Algebro-resurgent germs}   \label{secAlgRes}

\begin{notation}
  We shall use the following notation:
  \begin{equation}
    \mathbb{D}_{\tau} :=\{ z \in \mathbb{C} \mid |z| < \tau\}.
  \end{equation}
\end{notation}

We know that if $f \in \mathbb{C}\{z_1,\cdots,z_n\}$, then there
exists $\tau>0$, such that $f$ is the germ of a function holomorphic
in the polydisc $\mathbb{D}_\tau^n \subset \mathbb{C}^n$.
Following \cite{GGS}, we set the

\begin{definition}
The set of algebro-resurgent germs is defined as follows: for any non-negative integer $n$,
\begin{equation}
\begin{split}
{\widehat{\mathcal{Q}}}^{\mathcal{A}}_{n+1}:=
\Big\{ &f\in\mathbb{C}\{\xi,z_1,\cdots,z_n\}
\mid \exists\text{ proper algebraic subvariety }V\subset\mathbb{C}^{n+1}, \\
&s.t.\ f\ \text{admits analytic} \text{ continuation along any $C^1$ path $\gamma$ }\\
&\text{contained in  $\mathbb{C}^{n+1}- V$ and having initial
  point $\gamma(0)$ close enough to~$0$}
\Big\}.
\end{split}
\end{equation}
Here ``$\gamma(0)$ close enough to~$0$'' means that
$\gamma(0)\in\mathbb{D}_\tau^{n+1}$ where $\mathbb{D}_\tau^{n+1}$ is a
polydisc where $f$ induces a holomorphic function. \ \\

We denote by $\widehat{\mathcal{Q}}^{\mathcal{A}}$ the disjoint union of $\widehat{\mathcal{Q}}^{\mathcal{A}}_i$ :
\[\widehat{\mathcal{Q}}^{\mathcal{A}}:=\bigsqcup\limits_{i\in\mathbb{N}^\ast}
\widehat{\mathcal{Q}}^{\mathcal{A}}_i
.\]

We define $\widetilde{\mathcal{Q}}^{\mathcal{A}}$ and $\widetilde{\mathcal{Q}}^{\mathcal{A}}_i$ the Borel inverse of $\widehat{\mathcal{Q}}^{\mathcal{A}}$ and $\widehat{\mathcal{Q}}^{\mathcal{A}}_i$ correspondingly:
\[\widetilde{\mathcal{Q}}^{\mathcal{A}}
:=\beta^{-1}(\widehat{\mathcal{Q}}^{\mathcal{A}}), \indent
\widetilde{\mathcal{Q}}^{\mathcal{A}}_i
:=\beta^{-1}(\widehat{\mathcal{Q}}^{\mathcal{A}}_i).\]
\end{definition}


The $1$-degree-of-freedom version of Theorem~\ref{thmStdN} can be
rephrased as

\begin{theorem}   \label{thmrephrased}
\quad If $\ti{f}(t,q,p),\ \ti{g}(t,q,p)\in \widetilde{\mathcal{Q}}^{\mathcal{A}}_3$, then
$\ti{f}\star\ti{g}(t,q,p)\in \widetilde{\mathcal{Q}}^{\mathcal{A}}_3$.
\smallskip

Equivalently,
\begla
\hat{f}(t,q,p),\ \hat{g}(t,q,p)\in \wh{\mathcal{Q}}^{\mathcal{A}}_3
\imp
\hat{f}*\hat{g}(t,q,p)\in \wh{\mathcal{Q}}^{\mathcal{A}}_3.
\edla
\end{theorem}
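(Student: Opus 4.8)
The plan is to read off the structure of $\hat f*\hat g$ from the integral representation~\eqref{rightone} of Lemma~\ref{integral rep of Borel star}, which exhibits it as the composition of three operations: a Hadamard-type product producing an auxiliary germ in five variables, a convolution-type integration over a simplex, and a third $\xi$-derivative. Concretely, I would introduce the auxiliary germ
\[
F(\xi_1,\xi_2,\xi_3,q,p):=\oint_C\hat f\big(\xi_1,q,p+\tfrac{\xi_3}{z}\big)\,\hat g(\xi_2,q+z,p)\,\frac{dz}{2\pi i z}
\]
(with $C$ an appropriate circle, as in the Remark following Lemma~\ref{integral rep of Borel star}, this $z$-representation being preferable to the $\theta$-representation since it is manifestly holomorphic in $\xi_3$ with no spurious branch point at $\xi_3=0$), so that $\hat f*\hat g(\xi,q,p)=\frac{d^3}{d\xi^3}\int_0^\xi d\xi_1\int_0^{\xi-\xi_1}d\xi_2\int_0^{\xi-\xi_1-\xi_2}d\xi_3\,F(\xi_1,\xi_2,\xi_3,q,p)$. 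The proof then splits into three steps: (Step~1) if $\hat f,\hat g\in\widehat{\mathcal{Q}}^{\mathcal{A}}_3$ then $F\in\widehat{\mathcal{Q}}^{\mathcal{A}}_5$ (in the variables $\xi_1,\xi_2,\xi_3,q,p$); (Step~2) the triple iterated integral over the simplex $\{\xi_1+\xi_2+\xi_3\le\xi\}$ sends such an $F$ to a germ whose analytic continuation avoids a proper algebraic subvariety of $\mathbb{C}^3$ in the variables $\xi,q,p$; (Step~3) $\frac{d^3}{d\xi^3}$ is a local operation and preserves that property with the same singular variety. Convergence near the origin is already the last assertion of Lemma~\ref{integral rep of Borel star}.

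For Step~1 (the ``Hadamard part'', Section~\ref{proofHa}) I would follow the analytic continuation of $F$ by deforming the contour $C$. Writing the singular varieties of $\hat f$ and $\hat g$ as (subvarieties of) hypersurfaces $\{P_f=0\}$, $\{P_g=0\}$ in $\mathbb{C}^3$, for a fixed base point the integrand of $F$ has forbidden $z$-values at the roots of $z\mapsto P_f(\xi_1,q,p+\xi_3/z)$, at the roots of $z\mapsto P_g(\xi_2,q+z,p)$, and at $z=0$ and $z=\infty$. The integral continues analytically as long as the ``inner'' forbidden set $\{P_f(\xi_1,q,p+\xi_3/z)=0\}\cup\{0\}$ stays separated from the ``outer'' forbidden set $\{P_g(\xi_2,q+z,p)=0\}\cup\{\infty\}$ by $C$, i.e.\ as long as no pinch between an inner and an outer forbidden value is forced. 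The locus in $\mathbb{C}^5$ of base points for which such a pinch is forced is cut out by a resultant in $z$, hence is algebraic, and one checks it is proper; taking it to be the singular variety of $F$ proves the claim. This is the mechanism of \cite{LSS}, but here the forbidden sets are finite, which dispenses with the endless-continuability complications of that reference.

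For Step~2 (the ``convolution part'', Section~\ref{proofconvo}) I would invoke the classical technique for following the analytic continuation of convolution products (\cite{Eca81}, \cite{CNP93}, \cite{Sau13}, \cite{S14}): as $\xi$ runs along a path, one endlessly deforms the path of the iterated integral in the $(\xi_1,\xi_2,\xi_3)$-variables so that it stays on the sheet of $F$ furnished by Step~1 and avoids the singularities of $F$. Since the singular variety of $F$ is algebraic, for fixed $(q,p)$ only finitely many singular points of $F$ occur, so the deformation is unobstructed except along an algebraic ``pinching'' locus, which one adjoins to the singular variety of $\hat f*\hat g$; the outcome is again a proper algebraic subvariety of $\mathbb{C}^3$. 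Step~3 is then immediate, since differentiation commutes with analytic continuation and leaves the singular variety unchanged, and combining Steps~1--3 yields $\hat f*\hat g\in\widehat{\mathcal{Q}}^{\mathcal{A}}_3$.

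I expect the main obstacle to be Step~1: verifying that the pinching (collision) locus is a genuinely algebraic and proper subvariety of $\mathbb{C}^5$, and that along an arbitrary admissible path the contour $C$ can be deformed coherently --- keeping careful track of which forbidden $z$-values start inside versus outside $C$ and showing that no forced pinch arises off the collision locus. Once the algebraicity of the singular locus of $F$ is established, Step~2 is a lengthy but essentially routine adaptation of standard convolution-continuation arguments, and Step~3 is trivial.
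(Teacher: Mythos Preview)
Your proposal is correct and follows the same route as the paper: the decomposition of~\eqref{rightone} into a Hadamard-type step (your Step~1 is exactly Lemma~\ref{Hadalemma}) followed by iterated convolution-type integrals (your Step~2 is Lemmas~\ref{convolemma}--\ref{convolemma2}), with the trivial Step~3. The only notable difference is in executing Step~1: rather than your resultant/inner--outer pinching argument, the paper first reduces the defining polynomials of $V_f,V_g$ to ``$z_1$-simple'' form (Section~\ref{secSimpPoly}) so that the forbidden $z$-values are generically distinct, and then constructs the contour homotopy explicitly as the flow of a non-autonomous vector field tracking all of them individually---which forces it to also exclude same-side collisions and leading-coefficient vanishing, yielding a somewhat larger (but still proper algebraic) singular variety for~$F$ than the pure pinching locus you describe.
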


\medskip

Sections~\ref{secSimpPoly}--\ref{proofHa} are devoted to the proof of
Theorem~\ref{thmrephrased}.
(Then Section~\ref{secHigher} will show how to deduce
Theorem~\ref{thmStdN}, and also  Theorem~\ref{thmMoyalN}.)
Using Formula (\ref{rightone}), %
the proof will be divided into the following three lemmas.\ \\

\begin{lemma}\label{Hadalemma}
\quad If $\hat{f}(\xi,q,p),\hat{g}(\xi,q,p)\in \widehat{\mathcal{Q}}^{\mathcal{A}}_3$, then
\begin{equation}\label{Hadafor}
F(\xi_1,\xi_2,\xi_3,q,p):=\frac{1}{2\pi}\int_0^{2\pi}
\hat{f}(\xi_1,q,p+\sqrt{\xi_3}e^{-i\theta})
\hat{g}(\xi_2,q+\sqrt{\xi_3}e^{i\theta},p)d\theta
\end{equation}
$\in\widehat{\mathcal{Q}}^{\mathcal{A}}_5$.\ \\
\end{lemma}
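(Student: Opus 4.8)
The plan is to recognise $F$ as a Hadamard product in the variable $\xi_3$, with $(\xi_1,\xi_2,q,p)$ playing the role of holomorphic parameters, and then to adapt to the present purely algebraic situation the theory of analytic continuation of Hadamard products developed in \cite{LSS}. First, convergence of $F$ as a germ in $\C\{\xi_1,\xi_2,\xi_3,q,p\}$ is already contained in the proof of Lemma~\ref{integral rep of Borel star}: the function $(\xi_1,\xi_2,s,q,p)\mapsto\frac1{2\pi}\int_0^{2\pi}\hat f(\xi_1,q,p+se^{-i\theta})\hat g(\xi_2,q+se^{i\theta},p)\,d\theta$ is holomorphic near the origin and even in~$s$, hence factors through $s^2=\xi_3$ to give a holomorphic germ. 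So the only real task is to produce a proper algebraic subvariety $W\subset\C^5$ such that $F$ admits analytic continuation along every $C^1$ path in $\C^5\setminus W$ issued from a neighbourhood of~$0$.

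To this end I would fix polynomials $P_f,P_g$ whose zero sets contain the singular loci of $\hat f,\hat g$ respectively; since $\hat f,\hat g$ are holomorphic near $0\in\C^3$ one may assume $P_f(0)\ne0$ and $P_g(0)\ne0$. Using the Cauchy form of the Hadamard product from the Remark following Lemma~\ref{integral rep of Borel star},
\[
F(\xi_1,\xi_2,\xi_3,q,p)=\frac1{2\pi i}\oint_C \hat f\bigl(\xi_1,q,p+\tfrac{\xi_3}{z}\bigr)\,\hat g(\xi_2,q+z,p)\,\frac{dz}{z},
\]
valid for the arguments near~$0$ and $C$ a suitable circle, one sees that the integrand, viewed as a function of $z$, has an ``inner'' set of singularities ($z=0$, together with the points $z=\xi_3/\zeta$ where $P_f(\xi_1,q,p+\zeta)=0$) and an ``outer'' set ($z=\infty$, together with the points $z=\eta$ where $P_g(\xi_2,q+\eta,p)=0$). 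I would then let $W$ be the Zariski closure of
\[
\bigl\{(\xi_1,\xi_2,\xi_3,q,p):\ \exists\,\zeta,\eta\in\C,\ P_f(\xi_1,q,p+\zeta)=0,\ P_g(\xi_2,q+\eta,p)=0,\ \xi_3=\zeta\eta\bigr\},
\]
enlarged by the hypersurfaces $\{P_f(\xi_1,q,p)=0\}$ and $\{P_g(\xi_2,q,p)=0\}$, and, if needed, by the loci where the leading coefficient in the elimination drops degree. Eliminating $\zeta,\eta$ by resultants shows that $W$ is an algebraic subvariety; it is proper because, evaluating at a point $(0,0,\varepsilon,0,0)$ with $\varepsilon\ne0$ small, the zeros of $\zeta\mapsto P_f(0,0,\zeta)$ and of $\eta\mapsto P_g(0,\eta,0)$ are bounded away from $0$ (as $P_f(0),P_g(0)\ne0$), so $\varepsilon\neq\zeta\eta$, none of the added hypersurfaces passes through that point, and $\C^5$ being irreducible is not a finite union of proper subvarieties.

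Given a $C^1$ path $\Gamma$ in $\C^5\setminus W$ starting near $0$, I would then run the ``moving contour'' argument of \cite{LSS}: along $\Gamma$ one continues $\hat f$ and $\hat g$, the inner singularities $z=\xi_3(s)/\zeta_i(\Gamma(s))$ and the outer singularities $z=\eta_j(\Gamma(s))$ depend continuously on $s$, and because $\Gamma(s)\notin W$ no inner singularity ever meets an outer one (nor does any of them migrate across $z=0$ or $z=\infty$, those degenerations being exactly the extra hypersurfaces included in $W$). One builds a continuous family $C_s$ of circles-with-detours, $C_0=C$, separating the inner from the outer singularities, and sets $F(\Gamma(s))=\frac1{2\pi i}\oint_{C_s}(\cdots)\,\frac{dz}{z}$ using the continued germs; the standard argument — the region swept being, locally in $s$, simply connected off the singularities, so that the continuation of $\hat f(\xi_1(s),q(s),p(s)+\xi_3(s)/z)$ is single-valued in $z$ there — shows this is well defined, holomorphic in $s$, and agrees with $F$ near $s=0$. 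Since our singular loci are finite for fixed parameters and globally algebraic, the bookkeeping here is considerably lighter than in \cite{LSS}.

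The main obstacle is precisely this last step: one must verify that the only obstruction to deforming the contour is a genuine pinch, match the complete list of pinch (and escape-to-$0$/$\infty$) configurations with the algebraic description of $W$, and confirm single-valuedness of the continued $\hat f$ along the swept family of contours. Everything else — the convergence statement and the elementary algebro-geometric properties of $W$ — is routine.
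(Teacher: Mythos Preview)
Your strategy is the same as the paper's: pass to the Cauchy form of the Hadamard integral in one complex variable~$z$, classify the $z$-singularities of the integrand as ``inner'' (coming from~$\hat f$, near $z=0$) and ``outer'' (coming from~$\hat g$, near $z=\infty$), and continue~$F$ along a path~$\gamma$ by continuously deforming the contour so that inner and outer never meet.

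The paper's implementation differs from yours only in its explicitness. Instead of defining~$W$ by resultant elimination and taking a Zariski closure, the paper first arranges (Proposition~\ref{propositionz1}) that the defining polynomials are $p$-simple and $q$-simple, so that the roots $\omega_\alpha$ of $z\mapsto P_f^J(\xi_1,q,p+z)$ and $\Omega_\beta$ of $z\mapsto Q_g^K(\xi_2,q+z,p)$ are generically simple and can be \emph{labeled}; it then writes out the avoidant set $V_F^{JK}$ by seven polynomial conditions (vanishing leading coefficients, discriminants, $\omega_\alpha=0$, $\xi_3=0$, and the pinch $\xi_3=\omega_\alpha\Omega_\beta$), and builds the contour isotopy not by appeal to~\cite{LSS} but as the time-$t$ map of a concrete Lipschitz non-autonomous vector field on~$\C$ that fixes~$0$ and drags the $M+N$ labeled moving points $\omega_\alpha(t)$, $\xi_3(t)/\Omega_\beta(t)$ along their prescribed trajectories. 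The discriminant loci and the hyperplane $\{\xi_3=0\}$ in the paper's $V_F^{JK}$ are there only so that this labeled-point vector field is well defined (distinct, nonzero moving points), not because they are genuine singularities of~$F$; your resultant description is accordingly leaner, at the cost of outsourcing the deformation step. If you spell out the deformation yourself rather than citing~\cite{LSS}, you will almost certainly reintroduce precisely those extra loci---or else have to argue separately that collisions among inner (resp.\ outer) singularities and the degeneration $\xi_3\to0$ are harmless, which is the ``bookkeeping'' you acknowledge deferring.
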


The proof will be found in section \ref{proofHa}, which is treated as a Hadamard product part in the formula (\ref{rightone}). \ \\

\begin{lemma}\label{convolemma}
\quad If $F(z_1,\cdots,z_n)\in\widehat{\mathcal{Q}}^{\mathcal{A}}_n$, then
\begin{equation}\label{convofor}
f(z,z_2,\cdots,z_n):=\int_0^{\overline{P}(z,z_2,\cdots,z_n)}F(z_1,z_2,\cdots,z_n)dz_1
\end{equation}
$\in\widehat{\mathcal{Q}}^{\mathcal{A}}_n$, where $\overline{P}$ is a polynomial of $n$ variables and $\overline{P}(0,\cdots,0)=0$.\ \\
\end{lemma}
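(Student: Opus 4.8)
\textbf{Proof plan for Lemma~\ref{convolemma}.}
The plan is to reduce the statement to the standard theory of analytic continuation of convolution integrals (the ``convolution part'' promised in Section~\ref{proofconvo}), combined with an argument showing that the upper limit $\ov P(z,z_2,\ldots,z_n)$ only ``moves the obstacle'' inside a larger, still proper, algebraic subvariety. First I would fix the data: let $V\subset\C^n$ be a proper algebraic subvariety such that $F$ admits analytic continuation along every $C^1$ path in $\C^n-V$ starting close enough to~$0$, and let $\tau>0$ be such that $F$ is holomorphic on $\D_\tau^n$ and $\ov P$ is holomorphic there as well (polynomials being entire, $\ov P$ is holomorphic everywhere, so the only constraint from $\ov P$ is $\ov P(0)=0$). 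The candidate singular locus of $f(z,z_2,\ldots,z_n)$ should be built from $V$ together with the ``pinch locus'': the set of points $(z,z_2,\ldots,z_n)$ for which the path of integration $z_1\mapsto$ (segment from $0$ to $\ov P(z,z_2,\ldots,z_n)$), possibly after deformation, cannot avoid $V$, i.e.\ where an endpoint singularity collides with a moving singularity or two singularities pinch the contour. I would package this as a proper algebraic subvariety $W\subset\C^n$ whose precise description involves the resultant (in the variable $z_1$) of the defining polynomial(s) of $V$ and a polynomial expressing ``$z_1$ lies on the segment $[0,\ov P]$,'' cut down using the material of Section~\ref{secSimpPoly} on polynomials ``simple with respect to one variable''; the key point is that eliminating $z_1$ from algebraic conditions produces again an algebraic condition, hence $W$ is algebraic, and one checks it is proper by exhibiting a point near~$0$ not on it (e.g.\ generic points of $\D_\tau^n$, using $\ov P(0)=0$ so that for $(z,z_2,\ldots,z_n)$ near~$0$ the segment $[0,\ov P]$ stays in $\D_\tau^n\setminus V$).

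Next I would carry out the analytic-continuation argument proper. Given a $C^1$ path $\gamma$ in $\C^n-(V\cup W)$ with $\gamma(0)$ close to~$0$, I want to continue $f$ along~$\gamma$. The classical technique (as in \cite{Eca81}, \cite{CNP93}, \cite{Sau13}, \cite{S14}) is: at each time~$s$ along~$\gamma$, deform the integration contour for $z_1$ — originally the straight segment from $0$ to $\ov P(\gamma(s))$ — so that it continuously avoids the (finitely many, for fixed values of the other variables, since $V$ is a proper algebraic subvariety) singular points of $z_1\mapsto F(z_1,\gamma_2(s),\ldots,\gamma_n(s))$ in the $z_1$-plane. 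The hypothesis $\gamma(s)\notin W$ is exactly what guarantees such a homotopy of contours (with fixed endpoints $0$ and $\ov P(\gamma(s))$) exists and depends continuously (indeed $C^1$) on~$s$; the hypothesis $\gamma(s)\notin V$ handles the possibility that an endpoint hits a singular point. Once the deformed contour is available, one invokes the $\Om$-continuability / endless-continuability machinery for the germ $F$ in the $z_1$-slice to know $F$ continues along it, and then $f$ is defined by integrating over this deformed contour; standard estimates give that $f$ so obtained is holomorphic in a neighbourhood and that the result is independent of the admissible choices of contour deformation, so we really get analytic continuation of the single-valued germ.

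The main obstacle, as I see it, is making the pinch/collision locus $W$ genuinely algebraic and proper in the multivariate setting, and proving that off~$W$ the required contour homotopy exists. In the one-variable endpoint-only situation this is textbook, but here $\ov P$ is a nontrivial polynomial in all remaining variables, so the endpoint $\ov P(z,z_2,\ldots,z_n)$ itself sweeps out a complicated set, and one must control simultaneously the motion of the endpoint and of the singularities coming from $V$. This is precisely where Section~\ref{secSimpPoly} ought to be used: by a change of coordinates one may assume the defining polynomial of (each irreducible component of) $V$ is ``simple with respect to $z_1$,'' which makes the $z_1$-roots behave like a well-separated, algebraically parametrised family, and then the obstruction to deforming the contour is captured by a discriminant-type condition plus a condition of the form ``a root equals $0$ or equals $\ov P$,'' each algebraic; their union, pushed forward by elimination of $z_1$, is the desired~$W$. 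I would spend the bulk of the write-up on this elimination-and-homotopy step; the rest — holomorphy of $f$ off $V\cup W$, and the verification that $V\cup W$ is proper — is routine once the setup is in place, and the conclusion $f\in\wh{\mathcal{Q}}^{\mathcal{A}}_n$ follows with $V\cup W$ as the witnessing subvariety.
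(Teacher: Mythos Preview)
Your approach is essentially the same as the paper's: reduce to $z_1$-simple defining polynomials via Section~\ref{secSimpPoly}, define the candidate singular locus of~$f$ by the algebraic conditions ``leading coefficient vanishes,'' ``discriminant in~$z_1$ vanishes,'' ``a root equals~$0$,'' ``a root equals~$\ov P$,'' and then construct a contour homotopy (the paper does this via the flow of an explicit non-autonomous vector field) to push the integration path past the moving $z_1$-roots. So the strategy is right.

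There is one genuine gap. Your candidate $W$ contains the locus $\{(z,z_2,\ldots,z_n)\mid P_F^J(\ov P(z,z_2,\ldots,z_n),z_2,\ldots,z_n)=0\}$, and you assert properness ``by exhibiting a point near~$0$ not on it.'' But it can happen that $P_F^J(\ov P,z_2,\ldots,z_n)\equiv 0$ identically, i.e.\ the moving endpoint $\ov P$ is \emph{always} one of the $z_1$-roots of the defining polynomial; then your $W$ is all of~$\C^n$ and the argument collapses. The paper singles this out as ``Case~2'' (see~\eqref{eqLideux} for a concrete instance) and handles it by a separate argument: after the change of variable $z_1=\ov P\cdot(1+\zeta)$ one shows that the apparent singularity at the endpoint is harmless because the relevant branch stays on the principal sheet, and the correct $V_f^J$ drops the condition ``root $=\ov P$'' (formula~\eqref{defVfcase2}). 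You need to treat this case.

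A minor point: your phrase ``$\gamma$ in $\C^n-(V\cup W)$'' conflates two different copies of~$\C^n$. The variety $V$ lives in $(z_1,z_2,\ldots,z_n)$-space, whereas~$\gamma$ and~$W$ live in $(z,z_2,\ldots,z_n)$-space; the endpoint constraints you want are $(0,z_2,\ldots,z_n)\notin V$ and $(\ov P(z,z_2,\ldots,z_n),z_2,\ldots,z_n)\notin V$, and these are already absorbed into your~$W$. So the witnessing subvariety is just~$W$, not $V\cup W$.
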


The prove will be found in section \ref{proofconvo}, which is treated as a ``convolution product'' part in the formula (\ref{rightone}).

\begin{lemma}\label{convolemma2}
\quad If $F(z_1,\cdots,z_n)\in\widehat{\mathcal{Q}}^{\mathcal{A}}_n$, then
\begin{equation}
f(z_2,\cdots,z_n):=\int_0^{\overline{P}(z_2,\cdots,z_n)}F(z_1,z_2,\cdots,z_n)dz_1
\end{equation}
$\in\widehat{\mathcal{Q}}^{\mathcal{A}}_{n-1}$, where $\overline{P}$ is a polynomial of $n-1$ variables and $\overline{P}(0,\cdots,0)=0$.\ \\
\end{lemma}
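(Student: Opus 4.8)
The plan is to deduce Lemma~\ref{convolemma2} from Lemma~\ref{convolemma} by a ``specialization'' argument. First I would apply Lemma~\ref{convolemma} with the polynomial $\overline{P}(z_2,\dots,z_n)$ of the statement, reinterpreted as a polynomial $\overline{P}(z,z_2,\dots,z_n)$ of $n$ variables that happens not to involve the new variable~$z$ (it still vanishes at the origin). Lemma~\ref{convolemma} then yields a germ
\[
\tilde f(z,z_2,\dots,z_n):=\int_0^{\overline{P}(z_2,\dots,z_n)}F(z_1,z_2,\dots,z_n)\,dz_1\ \in\ \widehat{\mathcal{Q}}^{\mathcal{A}}_n
\]
which manifestly does not depend on~$z$ and coincides with the function $f(z_2,\dots,z_n)$ of Lemma~\ref{convolemma2}. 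So it suffices to prove the following general fact: if $g\in\widehat{\mathcal{Q}}^{\mathcal{A}}_n$ does not depend on one of its variables, then the germ it induces in the remaining $n-1$ variables belongs to $\widehat{\mathcal{Q}}^{\mathcal{A}}_{n-1}$. One uses here that the definition of $\widehat{\mathcal{Q}}^{\mathcal{A}}$ treats all variables on the same footing, so singling out a ``Borel variable'' is immaterial when relabelling.

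To prove this fact, let $V\subset\mathbb{C}^n$ be a proper algebraic subvariety witnessing $\tilde f\in\widehat{\mathcal{Q}}^{\mathcal{A}}_n$. Since replacing $V$ by any larger proper algebraic subvariety preserves the continuability property, and since the ideal of a proper subvariety is nonzero, I may assume $V=\{P=0\}$ with $P\in\mathbb{C}[z,z_2,\dots,z_n]$ a single nonzero polynomial. Only finitely many $c\in\mathbb{C}$ satisfy $P(c,\,\cdot\,)\equiv 0$ (none when $\deg_z P=0$, and otherwise they lie among the zeros of the leading coefficient of $P$ in~$z$); fix such a $c$ outside that finite set, with $|c|$ small enough that $\tilde f$ and $f$ are holomorphic near $(c,0,\dots,0)$ and $\tilde f(z,w)=f(w)$ there. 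Set $V':=\{\,w\in\mathbb{C}^{n-1}\mid P(c,w)=0\,\}$, a proper algebraic subvariety of $\mathbb{C}^{n-1}$.

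Finally, given any $C^1$ path $\gamma$ in $\mathbb{C}^{n-1}\setminus V'$ starting close enough to the origin, the lifted path $\Gamma(t):=(c,\gamma(t))$ avoids $V$ and starts close enough to $0\in\mathbb{C}^n$, so $\tilde f$ admits analytic continuation along $\Gamma$; restricting this continuation to the affine slice $\{z=c\}\cong\mathbb{C}^{n-1}$ at every time produces a coherent family of holomorphic germs along $\gamma$ that begins with the germ of $f$ at $\gamma(0)$, hence, by uniqueness of analytic continuation, is the analytic continuation of $f$ along $\gamma$. Thus $f\in\widehat{\mathcal{Q}}^{\mathcal{A}}_{n-1}$, with singular locus $V'$. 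The only genuinely delicate point is this last step: one has to verify that restriction to a coordinate slice is compatible with analytic continuation along paths contained in that slice, and to keep track of the ``initial point close enough to~$0$'' requirement through the lift — which is exactly why $|c|$ is taken small; everything else is bookkeeping.
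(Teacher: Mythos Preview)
Your reduction to Lemma~\ref{convolemma} by viewing $\overline P(z_2,\dots,z_n)$ as a polynomial in the $n$ variables $(z,z_2,\dots,z_n)$ is exactly the paper's first move. The difference lies in the concluding step. The paper does not prove the general ``projection'' statement you isolate; instead it goes back to the explicit description of the singular locus $V_g$ produced in the proof of Lemma~\ref{convolemma} (formulas~\eqref{defVf} and~\eqref{defVfcase2}) and observes that, since $\overline P$ does not involve~$z$, every defining condition there is independent of~$z$, so $V_g=\mathbb C\times V_f$ for some proper algebraic $V_f\subset\mathbb C^{n-1}$, whence $f\in\widehat{\mathcal Q}^{\mathcal A}_{n-1}$ immediately. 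Your slicing argument is more self-contained (it would apply to any $g\in\widehat{\mathcal Q}^{\mathcal A}_n$ independent of one variable, without looking inside the proof of Lemma~\ref{convolemma}), while the paper's route is shorter because the product structure is already visible in the constructed variety.

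One small correction: your parenthetical justification that the bad values of~$c$ ``lie among the zeros of the leading coefficient of $P$ in~$z$'' is not right (e.g.\ $P=z^2+zw$ has leading coefficient~$1$, yet $P(0,\cdot)\equiv 0$). The claim that only finitely many~$c$ give $P(c,\cdot)\equiv 0$ is nonetheless true: expand $P=\sum_\alpha p_\alpha(z)\,w^\alpha$ with $p_\alpha\in\mathbb C[z]$; then $P(c,\cdot)\equiv 0$ forces $p_\alpha(c)=0$ for every~$\alpha$, and since some $p_\alpha$ is a nonzero polynomial this confines~$c$ to a finite set. With that fix, your argument goes through.
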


Lemma \ref{convolemma2} follows almost directly from lemma \ref{convolemma}, as will shown at the end of section \ref{proofconvo}. \ \\
\ \\


\section{Simple polynomials with respect to a variable}   \label{secSimpPoly}

In this section, we shall work in $\mathbb{C}^n$ with variables $z_1,\cdots,z_n$ and give the definition of $z_1$-simple polynomial. %
The proposition \ref{propositionz1} is very useful in the following sections and we will prove it carefully. The reason we use the definition `$z_1$-simple polynomial' is that we want the set (\ref{formulaz1simple}) is non-trivial. %
We start with
\[
P(z_1,\cdots,z_n) = \sum\limits _{i=0} ^M
b_i(z_2,\cdots,z_n) z_1^i
\in \C[z_1,\cdots,z_n]=\C[z_2,\cdots,z_n][z_1],
\]
where $b_i(z_2,\cdots,z_n)$'s are polynomials of variables $z_2,\cdots,z_n$ and $b_M \neq 0$.
We denote by $\mathbb{F}$ the fraction field of $\C[z_2,\cdots,z_n]$ and $\ov {\mathbb{F}}$ the algebraic closure of $\mathbb{F}$. Thus,  $P(z_1,\cdots,z_n)$ can be written as
\begin{equation}\label{formulaFz1}
  b_M(z_2,\cdots,z_n) \prod\limits_{\alpha=1}^M \big(z_1-\om_{\alpha}(z_2,\cdots,z_n) \big)
\end{equation}
with $\om_{\alpha}(z_2,\cdots,z_n) \in \overline{\mathbb{F}}$. \ \\
\begin{definition}
\quad Given a non-zero polynomial $F(z_1,\cdots,z_n)\in \C[z_1,\cdots,z_n]$ and the representation of it (formula (\ref{formulaFz1})), $F$ is called $z_1$-simple polynomial if for any $\alpha_1,\alpha_2$, $1\leq \alpha_1<\alpha_2 \leq M$, we have $\om_{\alpha_1}(z_2,\cdots,z_n) \neq \om_{\alpha_2}(z_2,\cdots,z_n) $. Specially, $F$ is $z_1$-simple polynomial if the order of $z_1$ in $F$ is zero.
\end{definition}

\begin{proposition}\label{propositionz1}
  Any proper algebraic subvariety $V$ of $ \C^n$ can be written as $V= \bigcap\limits_{J=1}^K P_J^{-1}(0)$, where $K$ is a positive integer and $P_1,\cdots,P_k$ are $z_1$-simple polynomials.
\end{proposition}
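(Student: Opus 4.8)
The plan is to reduce the statement to one elementary fact: \emph{a squarefree polynomial is automatically $z_1$-simple}. Since $\C[z_1,\cdots,z_n]$ is Noetherian, $V$ is the common zero set of finitely many polynomials $Q_1,\cdots,Q_m$; because $V\neq\C^n$ not all of them vanish identically, and after dropping the ones that do (this does not change the intersection) we may assume each $Q_j\neq0$ (if $V=\varnothing$ we simply take $Q_1=1$). Replacing each $Q_j$ by the product $P_j$ of its distinct irreducible factors does not change its zero locus, so $V=\bigcap_{j=1}^{m}P_j^{-1}(0)$ and every $P_j$ is squarefree. Taking $K=m$, it then remains only to show that a squarefree polynomial is $z_1$-simple.

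For this — the heart of the matter — let $P\in\C[z_1,\cdots,z_n]$ be squarefree; if $\deg_{z_1}P=0$ there is nothing to prove, so assume $\deg_{z_1}P\geq1$ and write $P=c\,h_1\cdots h_s$ with $c\in\C^\ast$ and the $h_i$ pairwise non-associate irreducibles. I would pass to $\mathbb{F}[z_1]$, with $\mathbb{F}$ the fraction field of $\C[z_2,\cdots,z_n]$, and analyse the roots of $P$ in $\ov{\mathbb{F}}$ — precisely the $\om_\alpha$ of the factorisation~\eqref{formulaFz1}. The factors $h_i$ with $\deg_{z_1}h_i=0$ become units of $\mathbb{F}[z_1]$ and contribute nothing. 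Each $h_i$ with $\deg_{z_1}h_i\geq1$ is primitive over $\C[z_2,\cdots,z_n]$ (a non-primitive polynomial of positive degree would be reducible), hence irreducible in $\mathbb{F}[z_1]$ by Gauss's lemma; being irreducible over a field of characteristic zero it is separable, so it contributes exactly $\deg_{z_1}h_i$ pairwise distinct roots in $\ov{\mathbb{F}}$. Finally, two distinct such factors $h_i,h_{i'}$, being coprime and primitive over $\C[z_2,\cdots,z_n]$, remain coprime in $\mathbb{F}[z_1]$ (Gauss's lemma again), so their root sets in $\ov{\mathbb{F}}$ are disjoint. Hence the roots $\om_1,\cdots,\om_M$ of $P$ are pairwise distinct, i.e.\ $P$ is $z_1$-simple, which concludes the proof.

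The only real obstacle is this last step: one must make sure that passing from the unique factorisation domain $\C[z_1,\cdots,z_n]$ to the one-variable ring $\mathbb{F}[z_1]$ neither fuses the roots coming from two distinct irreducible factors nor manufactures a repeated root inside a single factor. Both pathologies are ruled out by Gauss's lemma together with the characteristic-zero hypothesis; note, incidentally, that the (possibly vanishing) leading coefficient $b_M(z_2,\cdots,z_n)$ in~\eqref{formulaFz1} plays no role, the $\om_\alpha$ being algebraic functions that are only defined generically on $\{b_M\neq0\}$. Everything else — Noetherianity, the reduction to squarefree generators, and re-indexing the $P_J$ — is routine.
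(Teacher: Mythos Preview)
Your proof is correct and rests on the same idea as the paper's: replace each generator~$Q^J$ of the ideal of~$V$ by its squarefree part, which has the same zero locus and is $z_1$-simple. The only difference is in how the implication ``squarefree $\Rightarrow$ $z_1$-simple'' is established. The paper works top-down: it factors~$Q$ over~$\ov{\mathbb{F}}$, forms the product~$\widetilde R$ of the distinct linear factors, descends to~$\mathbb{F}[z_1]$ via minimal polynomials, and then clears denominators (using Gauss's lemma to check that the end result lies in $\C[z_2,\ldots,z_n][z_1]$). You work bottom-up: you take the squarefree part directly in the UFD $\C[z_1,\ldots,z_n]$ and use Gauss's lemma in the opposite direction, to lift irreducibility of each factor of positive $z_1$-degree from $\C[z_1,\ldots,z_n]$ to $\mathbb{F}[z_1]$, where separability (characteristic~$0$) and coprimality of distinct factors finish the job. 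Your route is shorter and sidesteps the bookkeeping with the $L_j$'s and the powers $\sigma_j$; the paper's route is a bit more constructive about what the resulting $z_1$-simple polynomial looks like, but the output is the same object up to a unit.
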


\begin{proof}
Hilbert's basis theorem states that every algebraic variety can be described as a common zero locus of finitely many polynomials. Thus we assume
\[V=\bigcap\limits_{J=1}^K V^J,
\quad
V^J:=\{(z_1,\cdots,z_n)\in\mathbb{C}^n\mid Q^J(z_1,\cdots,z_n)=0\},\text{ for } J=1,\cdots,K , \]
where $Q^J,J=1,\cdots,K$, are non-zero polynomials over $\mathbb{C}^n$. What we want to prove is, for each $Q^J$%
, there exists a non-zero $z_1$-simple polynomial $P^J$ s.t.
 \[Q^{-1}(0)= P^{-1}(0).\]
We will use the abridge notations $Q$ or $P$ later.
Suppose $Q=\sum\limits _{i=0} ^{M}
b_i(z_2,\cdots,z_n) z_1^i$ with $b_i(z_2,\cdots,z_n)$'s are polynomials of variables $z_2,\cdots,z_n$ and $b_M$ non-zero polynomial,
then it has the following factorization in $\ov {\mathbb{F}}[z_1]$:
\[b_M(z_2,\cdots,z_n) \prod\limits_{\alpha=1}^{N} \big(z_1-\om_{\alpha}(z_2,\cdots,z_n) \big)^{s_{\alpha}},\]
where $\om_{\alpha}\in\ov{\mathbb{F}}$,  $\om_{\alpha_1}(z_2,\cdots,z_n) \neq \om_{\alpha_2}(z_2,\cdots,z_n) $ for $1\leq \alpha_1<\alpha_2 \leq N$, integer multiplicities $s_{\alpha}\geq 1$ and $\sum_{\alpha=1}^{N} s_{\alpha}=M$.
Let us suppose that for some $\alpha$, $s_{\alpha}>1$ (if not, the proof is trivial).
We shall use the following notation:
\[\begin{split}
R(z_1,\cdots,z_n)
&\defeq \frac{Q(z_1,\cdots,z_n)}{b_M(z_2,\cdots,z_n)} = \prod\limits_{\alpha=1}^{N} \big(z_1-\om_{\alpha}(z_2,\cdots,z_n) \big)^{s_{\alpha}}, \\
\widetilde R(z_1,\cdots,z_n)
&\defeq \prod\limits_{\alpha=1}^{N} \big(z_1-\om_{\alpha}(z_2,\cdots,z_n) \big).
\end{split}\]
First, we shall prove $\widetilde R(z_1,\cdots,z_n) \in \mathbb{F}[z_1]$. In fact, $R(z_1,\cdots,z_n)$ is reducible in $\mathbb{F}[z_1]$ (irreducible polynomials are separable polynomials). If we consider the minimal polynomial of each root $\om_i(y)$, with Abel's irreducibility theorem, then we get:
\[
\widetilde R = R_1 \cdots R_m, \qquad
R = (R_1)^{\sig_1} \cdots {(R_m)}^{\sig_m}
\]
with $R_1,\ldots,R_m \in \mathbb{F}[z_1]$ and $\sig_i$'s are chosen from
$\{s_1,\ldots,s_{N}\}$.
The idea would be to construct inductively $R_1$ as the minimal
polynomial in $\mathbb{F}[z_1]$ of $\om_1\in\ov {\mathbb{F}}$, then $\sig_1 = s_1$ and $R_1$
is a product of some of the factors $z_1-\om_i(z_2,\cdots,z_n)$ including $i=1$,
and we go on with $R_2$ minimal polynomial of one of the~$\om_i^J$'s
which has not been included in~$R_1^J$, etc. \ \\
\ \\
Up to now, we have $\widetilde R \in \mathbb{F}[z_1]$ as announced, and we have
decompositions for~$Q$ in $\mathbb{F}[z_1]$:
\[
Q(z_1,\cdots,z_n) = b_M(z_2,\cdots,z_n) R_1(z_1,\cdots,z_n)^{\sig_1}\cdots R_m(z_1,\cdots,z_n)^{\sig_m}.
\]
And each factor~$R_j\ (j=1,\cdots,m)$ can be written as
%
%
%
%
%
%
%
%
\[R_j(z_1,\cdots,z_n)=\frac{1}{L_j(z_2,\cdots,z_n)} \widehat R_j(z_1,\cdots,z_n)\]
taking for~$L_j$ the l.c.m. of the denominators of the coefficients
of~$R_j$ in~$\mathbb{F}$, and $\widehat{R}_j(z_1,\cdots,z_n)$ is a primitive polynomial in $\C[z_2,\cdots,z_n][z_1]$.
Guass's lemma implies that the coefficients of $\widehat R_1^{\sig_1}\cdots \widehat R_m^{\sig_m}$ are relatively prime in $\C[z_2,\cdots,z_n]$.
Hence the coefficients of $Q$ are also in $\C[z_2,\cdots,z_n]$, which implies that $\frac{b_M}{(L_1)^{\sigma_1} \cdots (L_m)^{\sigma_{m}}} \in \C[z_2,\cdots,z_n]$. We define
\[P=\frac{b_M}{L_1^{\sig_1 -1} \cdots L_m^{\sig_m -1}}R_1\cdots R_m = \frac{b_M}{L_1^{\sig_1} \cdots L_m^{\sig_m}} \widehat R_1 \cdots \widehat R_m \]
which is the desired $z_1$-simple polynomial since $R_1 \cdots R_m$ have distinct root in $z_1$ by the construction. Finally, $P^{-1}(0)=Q^{-1}(0)$ is obviously because both $P$ and $Q$ have common factors $\frac{b_M}{L_1^{\sig_1} \cdots L_m^{\sig_m}}, \widehat R_1, \cdots, \widehat R_m$ which are all polynomials in $\C[z_2,\cdots,z_n][z_1]$.
\end{proof}
\begin{lemma}\label{lemmaz1simple}
Given a $z_1$-simple polynomial $F(z_1,\cdots,z_n)\in \C[z_1,\cdots,z_n]$, $M$ be the highest power of $z_1$, then, \ \\
$\diamond$ $G(p,z,z_2,\cdots,z_n):=F(p+z,z_2,\cdots,z_n)$ which is contained in $\C[p,z,z_2,\cdots,z_n]$ is both $p$-simple polynomial and $z$-simple polynomial.   \ \\
$\diamond$ $G(\xi,z,z_2,\cdots,z_n):=z^MF(\frac{\xi}{z},z_2,\cdots,z_n)$ which is contained in $\C[\xi,z,z_2,\cdots,z_n]$ is both $\xi$-simple polynomial and $z$-simple polynomial.
\end{lemma}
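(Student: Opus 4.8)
The plan is to argue directly from the characterisation of $z_1$-simplicity encoded in~\eqref{formulaFz1}: a nonzero $F$ is $z_1$-simple exactly when, over $\ov{\mathbb F}$, one has $F = b_M(z_2,\dots,z_n)\prod_{\alpha=1}^M\big(z_1-\om_\alpha\big)$ with $b_M\neq0$ and the roots $\om_\alpha\in\ov{\mathbb F}$ pairwise distinct (equivalently, $F$ is squarefree in $\mathbb F[z_1]$). For each of the two substitutions I will push this factorisation through, checking that (i) the relevant leading coefficient remains nonzero and (ii) the new roots remain pairwise distinct, both computed over the algebraic closure of the appropriate fraction field. The case $M=0$ (namely $F$ independent of~$z_1$) is vacuous, so assume $M\geq1$. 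The soft fact used throughout is that adjoining a fresh indeterminate ($z$, $p$ or~$\xi$) to $\mathbb F$ produces a field extension, hence neither annihilates $b_M$ nor identifies any two of the $\om_\alpha$.

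For the first item, set $G=F(p+z,z_2,\dots,z_n)$ and view it in $\mathbb F_1[p]$, where $\mathbb F_1=\operatorname{Frac}\big(\C[z,z_2,\dots,z_n]\big)\supset\mathbb F$. Then $G=b_M\prod_{\alpha=1}^M\big(p-(\om_\alpha-z)\big)$; the $\om_\alpha-z\in\ov{\mathbb F_1}$ are pairwise distinct because the $\om_\alpha$ are, and $b_M\in\mathbb F^{\times}\subset\mathbb F_1^{\times}$, so $G$ has degree~$M$ in~$p$ with distinct roots, \ie $G$ is $p$-simple. Since $p$ and~$z$ enter only through the symmetric expression $p+z$, exchanging their roles (base field $\mathbb F_2=\operatorname{Frac}(\C[p,z_2,\dots,z_n])$, roots $\om_\alpha-p$) gives in the same way that $G$ is $z$-simple.

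For the second item, set $G=z^M F(\xi/z,z_2,\dots,z_n)$; clearing denominators in the factorisation gives $G=b_M\prod_{\alpha=1}^M\big(\xi-z\,\om_\alpha\big)=\sum_{i=0}^M b_i\,\xi^i z^{M-i}$. Viewed in $\mathbb F_1[\xi]$ (the same $\mathbb F_1$ as above) it has degree~$M$, leading coefficient~$b_M$, and roots $z\,\om_\alpha$, which are pairwise distinct since $z\neq0$ in~$\mathbb F_1$; hence $G$ is $\xi$-simple. Viewing $G$ instead in $\mathbb F_3[z]$, with $\mathbb F_3=\operatorname{Frac}(\C[\xi,z_2,\dots,z_n])$, is the only step that requires care: a factor $\xi-z\,\om_\alpha$ involves~$z$ precisely when $\om_\alpha\neq0$, and $z_1$-simplicity forces at most one $\om_\alpha$ to vanish; dropping that exceptional factor (which contributes the unit $\xi\in\mathbb F_3^{\times}$) and rewriting each remaining factor as $\xi-z\,\om_\alpha=-\om_\alpha\big(z-\xi/\om_\alpha\big)$ displays $G$, up to multiplication by a nonzero element of~$\mathbb F_3$, as $\prod_{\om_\alpha\neq0}\big(z-\xi/\om_\alpha\big)$, whose roots $\xi/\om_\alpha$ are pairwise distinct because the nonzero $\om_\alpha$ are; thus $G$ is $z$-simple, of degree $M$ or $M-1$ in~$z$ according as none or one of the $\om_\alpha$ vanishes. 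This bookkeeping around a possible zero root under the inversion $z_1\mapsto\xi/z$ is the only genuine subtlety; everything else is a mechanical transcription of the definition, and Lemma~\ref{lemmaz1simple} follows once these four verifications are assembled.
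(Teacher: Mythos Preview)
Your proof is correct; the paper itself omits the argument entirely (``The proof is standard and left to the reader''), so there is no alternative approach to compare against. Your write-up supplies exactly the routine verification the authors had in mind, and your handling of the one nontrivial point---the possible drop in $z$-degree of $z^M F(\xi/z,\dots)$ when some $\om_\alpha=0$---is the right thing to single out. One cosmetic remark: the unit you factor out in that step lies in $\ov{\mathbb F_3}$ rather than $\mathbb F_3$ (since the $\om_\alpha$ live in $\ov{\mathbb F}\subset\ov{\mathbb F_3}$), but this is immaterial for checking that the roots $\xi/\om_\alpha$ are distinct.
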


\begin{proof}
  The proof is standard and left to the reader.
\end{proof}

\begin{lemma}\label{lemmaz1Syl}
If $F(z_1,\cdots,z_n)$ is a $z_1$-simple polynomial, which means that
\[F(z_1,\cdots,z_n)=b_M(z_2,\cdots,z_n) \prod\limits_{\alpha=1}^M \big(z_1-\om_{\alpha}(z_2,\cdots,z_n) \big)\]
with $b_M\neq0$, $\om_{\alpha_1} \neq\om_{\alpha_2} $ if $1\leq \alpha_1<\alpha_2 \leq M$, then
\begin{equation}\label{formulaz1simple}
\big\{ (z_2,\cdots,z_n) \mid \om_{\alpha_1}(z_2,\cdots,z_n)=\om_{\alpha_2}(z_2,\cdots,z_n) \text{ for some } 1\leq \alpha_1<\alpha_2 \leq M
 \big\}
\end{equation}
 is an algebraic variety.
\end{lemma}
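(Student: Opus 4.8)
The plan is to realize the set in~\eqref{formulaz1simple} as the zero locus of a single polynomial in $z_2,\dots,z_n$: the discriminant of~$F$ with respect to~$z_1$, or equivalently (up to a power of~$b_M$) the resultant $\operatorname{Res}_{z_1}\!\big(F,\partial_{z_1}F\big)$ --- whence the ``Sylvester'' of the name, the latter being the determinant of the Sylvester matrix of~$F$ and $\partial_{z_1}F$ regarded as polynomials in~$z_1$ over the ring $\C[z_2,\dots,z_n]$, hence manifestly an element of $\C[z_2,\dots,z_n]$. Write $R(z_2,\dots,z_n)\defeq\operatorname{Res}_{z_1}\!\big(F,\partial_{z_1}F\big)$. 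Since the roots $\om_1,\dots,\om_M$ of~$F$ are all simple, one has in $\ov{\mathbb F}$ the classical identity
\[
  R \;=\; (-1)^{M(M-1)/2}\, b_M^{\,2M-1}\!\!\prod_{1\le\alpha<\beta\le M}\!\!(\om_\alpha-\om_\beta)^2 .
\]
By hypothesis $b_M\neq0$ in $\C[z_2,\dots,z_n]$, and $z_1$-simplicity says exactly that $\om_\alpha\neq\om_\beta$ in $\ov{\mathbb F}$ for all $\alpha<\beta$; so the right-hand side is a nonzero element of $\ov{\mathbb F}$, $R$ is a nonzero polynomial, and $V(R)\defeq R^{-1}(0)$ is a proper algebraic subvariety of $\C^{n-1}$.

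Next I would identify~\eqref{formulaz1simple} with $V(R)$. We may assume $M\ge2$: if $M\le1$ there is no admissible pair $\alpha_1<\alpha_2$, so the set~\eqref{formulaz1simple} is empty, trivially an algebraic variety. Let $c=(c_2,\dots,c_n)\in\C^{n-1}$. If $b_M(c)\neq0$, the $M$ values $\om_1(c),\dots,\om_M(c)$ are precisely the roots, with multiplicity, of the degree-$M$ polynomial $F(\,\cdot\,,c)$; thus $\om_{\alpha_1}(c)=\om_{\alpha_2}(c)$ for some $\alpha_1<\alpha_2$ occurs iff $F(\,\cdot\,,c)$ has a multiple root, iff $F(\,\cdot\,,c)$ and $\partial_{z_1}F(\,\cdot\,,c)$ share a root. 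As we work in characteristic~$0$, the leading coefficients $b_M(c)$ of $F(\,\cdot\,,c)$ and $M\,b_M(c)$ of $\partial_{z_1}F(\,\cdot\,,c)$ are both nonzero, so evaluation at~$c$ commutes with the resultant and the last condition is equivalent to $R(c)=0$. If instead $b_M(c)=0$ then $R(c)=0$ automatically (as $2M-1\ge1$), and such a~$c$ is naturally counted in~\eqref{formulaz1simple} as the degeneration where some root $\om_\alpha$ escapes to infinity; in any case $V(b_M)\subseteq V(R)$, so with the natural convention at this locus the set~\eqref{formulaz1simple} coincides with $V(R)$ --- and even under the strict reading it differs from $V(R)$ only by a subset of $V(b_M)$, whose Zariski closure is a union of irreducible components of $V(R)$, hence still algebraic.

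I expect the only point needing care to be precisely this bookkeeping at the leading-coefficient locus $b_M=0$, where the $\om_\alpha$ are not all finite and one must fix a convention for the equality $\om_{\alpha_1}=\om_{\alpha_2}$; as explained, this does not affect the conclusion, since that locus sits inside $V(R)$. The argument uses only the resultant/discriminant formalism together with the definition of $z_1$-simplicity; in particular it needs neither Proposition~\ref{propositionz1} nor Lemma~\ref{lemmaz1simple}.
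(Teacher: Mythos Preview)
Your proposal is correct and takes essentially the same approach as the paper: the paper's proof identifies the set~\eqref{formulaz1simple} with the zero locus of the determinant of the Sylvester matrix of~$F$ and~$\partial_{z_1}F$ (i.e.\ the resultant $\operatorname{Res}_{z_1}(F,\partial_{z_1}F)$), citing \cite{GKZ94} for the standard facts. Your write-up is more detailed---you verify that~$R$ is nonzero via $z_1$-simplicity and you explicitly address the behaviour on the locus $b_M=0$---but the underlying idea is identical.
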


\begin{proof}
The set (\ref{formulaz1simple}) is actually $\{(z_2,\cdots,z_n) \mid Syl(F,\partial_{z_1}F)=0\}$, where $Syl(\cdot,\cdot)$ means the Sylvester matrix, with considering $F$ and $\partial_{z_1}F$ as polynomials in $z_1$ variable with coefficients in $\C[z_2,\cdots,z_n]$. So that (\ref{formulaz1simple}) is an algebraic variety generated by one polynomial since every element in the Sylvester matrix is a polynomial of $z_2,\cdots,z_n$. See \cite{GKZ94} for details.
\end{proof}

%
%


\section{Convolution Product} \label{proofconvo}

%

In this section, our goal is to prove lemma \ref{convolemma}. Let $F(z_1,\cdots,z_n)\in\widehat{\mathcal{Q}}_{\mathcal{A}}^n$, which means it is holomorphic at origin and there exists an algebraic variety $V_F\subset\mathbb{C}^n$ such that $F$ admits analytic continuation along any path which starts near origin and avoids $V_F$.
From the definition of $f(z,z_2,\cdots,z_n)$ in formula (\ref{convofor}), it is obvious that  $f(z,z_2,\cdots,z_n)\in \mathbb{C}\{z,z_2,\cdots,z_n\}$ since $F(z_1,\cdots,z_n)\in \mathbb{C}\{z_1,\cdots,z_n\}$ and $\overline{P}$ is a polynomial which vanishes at origin. The remaining part will be proved by constructing an algebraic variety $V_f$ which $f$ should avoid in general. \ \\

By the proposition \ref{propositionz1}, let us assume
\[V_F=\bigcap\limits_{J=1}^K V_F^J,
\quad
V_F^J:=\{(z_1,\cdots,z_n)\in\mathbb{C}^n\mid P_F^J(z_1,\cdots,z_n)=0\},\text{ for } J=1,\cdots,K , \]
where $P_F^J,J=1,\cdots,K$, are $z_1$-simple polynomials over $\mathbb{C}^n$. \ \\

We shall construct algebraic variety $V_f^J,\ J=1,\cdots,K$ correspondingly, s.t. if $F$ admits analytic continuation along any path which avoids the set $(P_F^J)^{-1}(0)$, then $f$ admits analytic continuation along any path which avoids the set $V_f^J$. Thus finally, the avoidant set of $f$ is an algebraic variety
\[V_f=\bigcap\limits_{J=1}^K V_f^J.\]

%


%


With a slight abuse of the notation, let
\begin{equation}\label{PJ}
 P_F^J(z_1,\cdots,z_n)=b_M(z_2,\cdots,z_n) \prod\limits_{\alpha=1}^{M} \big(z_1-\om_{\alpha}(z_2,\cdots,z_n) \big)
\end{equation}
with $b_M\neq0$, $\om_{\alpha_1} \neq\om_{\alpha_2} $ if $1\leq \alpha_1<\alpha_2 \leq M$, $\omega_{\alpha}$'s are contained in $\ov{\mathbb{F}}$.
One may keep in mind that the notations $b_M$, $M$, and $\om_{\alpha}$'s are actually depend on $J$.

\begin{definition}\label{defSG}
We define $V_f^J$ in following two cases. \ \\
\textbf{Case 1:} If
\begin{equation}\label{convassumption}
  P_F^J(\overline{P}(z,z_2,\cdots,z_n),z_2,\cdots,z_n)\neq0,   \quad  \text{for some } z,z_2,\cdots,z_n,
\end{equation}
then
\begin{equation}\label{convsimcondition}
V_f^J:=\mathbb{C}^n-
\left\{
(z,z_2,\cdots,z_n)\in\mathbb{C}^n \
\left| \
\begin{split}
  &{(P_F^J)}_{(z_2,\cdots,z_n)}(z_1) \text{ has $N$ distinct non-zero roots } and \\
  &{(P_F^J)}_{(z_2,\cdots,z_n)}(\overline{P})\neq0
\end{split}
\right.
\right\},
\end{equation}
where ${(P_F^J)}_{(z_2,\cdots,z_n)}(z_1):=P_F^J(z_1,\cdots,z_n)$ is treated as a polynomial of one variable $z_1$ with the coefficients in $\mathbb{C}[z_2,\cdots,z_n]$.  \ \\
\textbf{Case 2:} If
\begin{equation}
P_F^J(\overline{P}(z,z_2,\cdots,z_n),z_2,\cdots,z_n)=0,  \quad  \text{for all } z,z_2,\cdots,z_n,
\end{equation}
then
  \begin{equation}\label{defVfspecial}
V_f^J:=\mathbb{C}^n-
\left\{
(z,z_2,\cdots,z_n)\in\mathbb{C}^n \
\left| \
  (P_F^J)_{(z_2,\cdots,z_n)}(z_1) \text{ has $N$ distinct non-zero roots }
\right.
\right\}.
\end{equation}

\end{definition}

More precisely, the set
(\ref{convsimcondition}) is equivalent to
\begin{equation}\label{defVf}
V_f^J=\left\{(z,z_2,\cdots,z_n)\in\mathbb{C}^n \left|
 \begin{split}
 &b_M(z_2,\cdots,z_n)=0   &\text{ or } \\
 &P_F^J(0,z_2,\cdots,z_n)=0  &\text{ or }        \\
 &(P_F^J)_{(z_2,\cdots,z_n)}(z_1)  \text{ has multiple root}  &\text{ or }      \\
 &P_F^J(\overline{P},z_2,\cdots,z_n)=0  &
 \end{split}\right.
 \right\},
\end{equation}
and the set (\ref{defVfspecial}) is equivalent to
\begin{equation}\label{defVfcase2}
V_f^J=\left\{(z,z_2,\cdots,z_n)\in\mathbb{C}^n \left|
 \begin{split}
 &b_M(z_2,\cdots,z_n)=0   &\text{ or } \\
 &P_F^J(0,z_2,\cdots,z_n)=0  &\text{ or }        \\
 &(P_F^J)_{(z_2,\cdots,z_n)}(z_1)  \text{ has multiple root}  &\text{ or }
 \end{split}\right.
 \right\}.
\end{equation}

In both case, one may observe that, by the assumption (\ref{PJ}) and lemma \ref{lemmaz1Syl}, $V_f^J$ we defined above is an algebraic variety.

From the discussion above, to prove lemma \ref{convolemma}, the following lemma is needed.\ \\

\begin{lemma} \label{convlemmaJ}
  We suppose $f(z,z_2,\cdots,z_n):=\int_0^{\overline{P}(z,z_2,\cdots,z_n)} F(z_1,\cdots,z_n) dz_1 $, where $F$ holomorphic at origin, admits analytic continuation along any multi-path which avoids the algebraic variety $V_F^J=\{(z_1,\cdots,z_n)\in\mathbb{C}^n \mid P_F^J(z_1,\cdots,z_n)=0\}$. Then $f$ holomorphic at origin and it admits analytic continuation along any $\gamma$ which avoids $V_f^J$ defined above.
\end{lemma}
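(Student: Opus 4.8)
The plan is to adapt the classical technique for analytic continuation of convolution products (\cite{Eca81}, \cite{CNP93}, \cite{Sau13}, \cite{S14}) to the two extra features of the present situation: the spectator variables $z_2,\dots,z_n$, and the fact that the upper limit of integration is a polynomial $\overline{P}$ rather than $z_1$ itself. Holomorphy of $f$ at the origin is immediate: $F$ is holomorphic on some polydisc $\mathbb{D}_\tau^n$ and $\overline{P}$ vanishes at $0$, so for $(z,z_2,\dots,z_n)$ in a small enough polydisc the segment from $0$ to $\overline{P}(z,z_2,\dots,z_n)$ stays inside $\mathbb{D}_\tau$, and \eqref{convofor} defines a holomorphic germ. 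For the continuation statement I would fix a $C^1$ path $\gamma(t)=(z(t),y(t))$, $t\in[0,1]$, with $y(t)=(z_2(t),\dots,z_n(t))$, $\gamma(0)$ close to the origin and $\gamma([0,1])\cap V_f^J=\varnothing$, and build \emph{directly} a continuous family $(f_t)_{t\in[0,1]}$ of germs, each $f_t$ holomorphic near $\gamma(t)$, with $f_0$ the germ of $f$ and the $f_t$ fitting together over overlaps; such a family is by definition an analytic continuation of $f$ along $\gamma$, and carrying this out for each $J$ and intersecting gives Lemma~\ref{convolemma} with $V_f=\bigcap_J V_f^J$.

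First I would unwind what ``$\gamma$ avoids $V_f^J$'' means. In Case~1 the failure of the four alternatives of \eqref{defVf} along $\gamma$ says precisely: (i) $b_M(y(t))\neq0$, so $(P_F^J)_{y(t)}(z_1)$ has degree exactly $M$; (ii) $\operatorname{disc}_{z_1}(P_F^J)(y(t))\neq0$, so its roots $\om_1(t),\dots,\om_M(t)$ are pairwise distinct and, being simple, can be chosen $C^1$ in $t$ by the implicit function theorem; (iii) $P_F^J(0,y(t))\neq0$, so $\om_\alpha(t)\neq0$ for all $\alpha$; (iv) $P_F^J(\overline{P}(\gamma(t)),y(t))\neq0$, so the moving endpoint $p(t):=\overline{P}(\gamma(t))$ never equals any $\om_\alpha(t)$. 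Thus in the $z_1$-plane I have $M$ simple singular points $\om_\alpha(t)$ that never collide, never escape to infinity, never reach $0$, and never reach $p(t)$. (Case~2 is a degeneration: $P_F^J(\overline{P},\cdot)\equiv0$ forces $\overline{P}$ to be independent of $z$ and equal to one root $\om_{\alpha_0}(y)$, which is then a polynomial; I would treat it with the same scheme, using that near the origin $\om_{\alpha_0}(y)=\overline{P}(y)$ is a point where $F(\cdot,y)$ is holomorphic, and that singularities of $F$ sit on $V_F=\bigcap_{J'}V_F^{J'}$, so the endpoint clause is not needed.)

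The key step is to produce a piecewise-$C^1$ family of integration paths $\Gamma_t\colon[0,1]\to\mathbb{C}$ with $\Gamma_t(0)=0$, $\Gamma_t(1)=p(t)$, $\Gamma_t(u)\neq\om_\alpha(t)$ for all $u,\alpha$, and $\Gamma_0$ the straight segment $[0,p(0)]$. This exists by a standard isotopy-plus-compactness argument: subdivide $[0,1]$ so finely that over each subinterval $\om_\alpha,0,p$ move within small pairwise disjoint disks, and prolong $\Gamma$ first by a short arc carrying the endpoint from $p(t)$ to $p(t')$ and then by a small ambient isotopy of $\mathbb{C}$ dragging the $\om_\alpha$ along; conditions (i)--(iv) are exactly what keeps this from pinching. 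I then set $\delta(t,u):=(\Gamma_t(u),y(t))\in\mathbb{C}^n$; since $P_F^J(\Gamma_t(u),y(t))=b_M(y(t))\prod_\alpha(\Gamma_t(u)-\om_\alpha(t))\neq0$, the family $\delta$ lies in $\mathbb{C}^n\setminus V_F^J$, and the base path $t\mapsto(0,y(t))$ starts near $0$ and stays off $V_F^J$ by (iii); so by the hypothesis on $F$ and homotopy invariance of analytic continuation, $F$ continues along the base path and then along every $\delta_t:=\delta(t,\cdot)$, coherently in $t$. Writing $F^{\delta_t}$ for the continuation of $z_1\mapsto F(z_1,y(t))$ along $\Gamma_t$, I set
\[
  f_t \;:=\; \int_{\Gamma_t} F^{\delta_t}(z_1)\,dz_1 \;=\; \int_0^1 F^{\delta_t}\!\big(\Gamma_t(u)\big)\,\Gamma_t'(u)\,du .
\]
Then $t\mapsto f_t$ is continuous; for $t$ near $0$ the path $\Gamma_t$ is homotopic rel endpoints, inside the polydisc of holomorphy of $F$, to $[0,p(t)]$, so $f_t=f(\gamma(t))$ there; and near any $t_0$, for $(z,y)$ close to $\gamma(t_0)$, appending to $\Gamma_{t_0}$ the short segment from $p(t_0)$ to $\overline{P}(z,y)$ (which still avoids the roots $\om_\alpha(y)$) exhibits $f_t$ as the restriction to $\gamma$ of $\int_{\Gamma_{t_0}}F(z_1,y)\,dz_1+\int_{p(t_0)}^{\overline{P}(z,y)}F(z_1,y)\,dz_1$, a function manifestly holomorphic in $(z,y)$ near $\gamma(t_0)$. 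Hence $(f_t)$ is the desired analytic continuation.

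The step I expect to be the main obstacle is the construction of the family $\Gamma_t$ together with the precise verification that (i)--(iv) are exactly the obstructions to it --- the ``no pinching'' bookkeeping --- since it is there that the shape of $V_f^J$ in \eqref{defVf} and \eqref{defVfcase2} is really pinned down; a secondary nuisance is the clean handling of the degenerate Case~2 and of the behaviour of the integrand near the endpoint there.
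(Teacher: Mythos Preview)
For Case~1 your strategy is essentially the paper's: both build a continuously moving integration contour $\Gamma_t$ (the paper writes $H_t(s)$ and calls it a ``$\gamma$-homotopy'') from $0$ to $\overline{P}(\gamma(t))$ avoiding the $M$ moving roots $\omega_\alpha(t)$, and both identify the four clauses of~\eqref{defVf} as exactly the conditions preventing the configuration $\{0,\overline{P}(\gamma(t)),\omega_1(t),\dots,\omega_M(t)\}$ from degenerating. The only real difference is how the isotopy is produced: you sketch a subdivision-plus-local-ambient-isotopy argument, whereas the paper writes down an explicit non-autonomous Lipschitz vector field (on $\mathbb{C}\times\mathbb{R}$, with an arclength coordinate) whose flow carries $H_0$ to $H_t$. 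The explicit flow makes existence/uniqueness and global $C^1$-dependence immediate; your route is workable but would need the bookkeeping (fixing~$0$, moving the endpoint, dragging the $\omega_\alpha$ simultaneously without collision) spelled out.

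Your handling of Case~2 has a genuine gap. You correctly note that $P_F^J(\overline{P},\cdot)\equiv0$ forces $\overline{P}$ to be independent of~$z$ and to coincide with one root~$\omega_{\alpha_0}(y)$. But your reason for dispensing with the endpoint clause --- ``singularities of $F$ sit on $V_F=\bigcap_{J'}V_F^{J'}$'' --- is (a) not a hypothesis of Lemma~\ref{convlemmaJ}, which only assumes continuability off $V_F^J$, and (b) insufficient anyway: in the paper's own Case~2 example~\eqref{eqLideux} there is a single~$J$, so $V_F=V_F^1$, and the endpoint $(z_2,z_2)$ \emph{does} lie on~$V_F$. What actually saves the situation is not that the endpoint avoids the singular variety, but that the \emph{particular branch} of~$F$ reached along the integration path is holomorphic there (it started holomorphic at the origin and never left the principal sheet in the $z_1-\overline{P}$ variable). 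The paper makes this rigorous via the substitution $z_1=\overline{P}\,(1+\zeta)$ and a Cauchy integral: setting $G(\zeta,y)=F(\overline{P}(1+\zeta),y)$, one writes $G(\zeta,y)=\tfrac{1}{2\pi i}\oint_{\partial\mathbb{D}_R}\tfrac{G(\xi,y)}{\xi-\zeta}\,d\xi$ for $R$ small enough that the circle avoids the other moving roots $\eta_i(t)=\omega_i(t)/\omega_1(t)-1$, which are nonzero precisely by the conditions in~\eqref{defVfcase2}. Since that circle lies off~$V_F^J$, the integrand continues along~$\gamma$ by hypothesis, hence so does~$G$ near $\zeta=0$, and with it the germ of~$F$ at the endpoint. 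You would need to supply this (or an equivalent) argument.
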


%
%

%

Now we only think about case 1. Case 2 will be discussed at the end of this section. We begin with a definition of $\gamma$- homotopy. \ \\

\begin{definition}
\quad For a path $\gamma(t):=(\gamma_{z}(t),\gamma_{z_2}(t),\cdots,\gamma_{z_n}(t))\in \mathbb{C}^n$, a continuous map $H:[0,1]\times[0,1]\rightarrow\mathbb{C},(t,s)\mapsto H_t(s):=H(t,s)$ is called a $\gamma$-homotopy if for any $s,t\in[0,1]$,
\begin{equation}\label{convgammahomotopy}
H_t(0)=0;\quad H_0(s)=s\cdot \overline{P}(\gamma(0));\quad H_t(1)=\overline{P}(\gamma(t));\quad P_F^J(H_t(s),\gamma_{z_2}(t),\cdots,\gamma_{z_n}(t))\neq0.
\end{equation}
\end{definition}

To prove the lemma \ref{convlemmaJ} in case 1 is sufficient to prove the following two claims.

\begin{claim}\label{germ}
Let $\gamma:[0,1]\rightarrow \mathbb{C}^n-{V}_{f}^J$ be a path such that $\gamma(0)$ near origin. If there exists a $\gamma$-homotopy, then $f$ can be analytically continued along $\gamma$.\ \\
\end{claim}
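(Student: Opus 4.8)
The plan is to prove Claim~\ref{germ} by a standard ``continuation-under-the-integral-sign'' argument, deforming the path of integration in the $z_1$-variable in step with the outer path~$\gamma$, using the $\gamma$-homotopy as the family of admissible integration paths. First I would fix the path $\gamma\col[0,1]\to\C^n-V_f^J$ with $\gamma(0)$ in the polydisc where $F$ is holomorphic, and a $\gamma$-homotopy $H$ as in~\eqref{convgammahomotopy}. For each $t\in[0,1]$ the map $s\mapsto H_t(s)$ is a path in the $z_1$-plane from~$0$ to $\ov P(\gamma(t))$ which, together with the ``frozen'' parameters $(\gamma_{z_2}(t),\ldots,\gamma_{z_n}(t))$, stays off $(P_F^J)^{-1}(0)$; hence $z_1\mapsto F(z_1,\gamma_{z_2}(t),\ldots,\gamma_{z_n}(t))$ admits analytic continuation along it (this is exactly the hypothesis of Lemma~\ref{convlemmaJ}, once one checks that the multi-path obtained by first running $\gamma$ up to time~$t$ in the last $n-1$ coordinates and then running $H_t$ in the first coordinate avoids $V_F^J$). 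I would then \emph{define} the continuation of~$f$ along~$\gamma$ by
\[
f(\gamma(t)) \defeq \int_{H_t} F\big(z_1,\gamma_{z_2}(t),\ldots,\gamma_{z_n}(t)\big)\,dz_1,
\]
where the integral is taken along the path $H_t$ in the $z_1$-plane.

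The key steps, in order, are: (i) check that at $t=0$ this formula reproduces the germ of~$f$ at $\gamma(0)$, which holds because $H_0(s)=s\cdot\ov P(\gamma(0))$ is the straight segment and the integrand is holomorphic on a neighbourhood of it, so the line integral equals the Taylor-series definition~\eqref{convofor}; (ii) show that $t\mapsto f(\gamma(t))$ is well-defined, i.e.\ independent of the choice of $\gamma$-homotopy: given two $\gamma$-homotopies $H,H'$, the map $(t,s)\mapsto$ the obvious interpolation gives, for each fixed~$t$, a homotopy of paths in the $z_1$-plane with fixed endpoints inside the domain of the continued integrand, so the integrals agree by the homotopy invariance of analytic continuation / Cauchy's theorem; (iii) show $t\mapsto f(\gamma(t))$ is the analytic continuation of the germ, i.e.\ that locally in~$t$ it agrees with the analytic continuation of the holomorphic germ of~$f$ — this is a local statement: near a fixed $t_0$ one can take $H_t$ depending holomorphically on the parameters, differentiate under the integral sign, and recognize the result as an honest holomorphic function of $(z,z_2,\ldots,z_n)$ near $\gamma(t_0)$ whose restriction to a segment is $\int_0^{\ov P}F\,dz_1$.

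For step~(iii) the mechanism is: near $\gamma(t_0)$ choose a small polydisc $U$ and, shrinking if necessary, a path $H_{t_0}$ of integration in the $z_1$-plane from $0$ to $\ov P(\gamma(t_0))$ avoiding the (finitely many, for frozen parameters) branch locus; because $(P_F^J)_{(z_2,\ldots,z_n)}$ has, for $(z_2,\ldots,z_n)$ in a neighbourhood, $N$ distinct nonzero roots depending continuously (indeed holomorphically off a variety) on the parameters — this is where the defining conditions of $V_f^J$ in~\eqref{defVf} enter, namely $b_M\ne0$ so the degree does not drop, $P_F^J(0,\cdot)\ne0$ so no root collides with the basepoint~$0$, no multiple root so the roots stay separated, and $P_F^J(\ov P,\cdot)\ne0$ so the endpoint $\ov P$ is not a root — one can keep the same path $H_{t_0}$ for all parameters in $U$ and hence $\int_{H_{t_0}}F(z_1,z_2,\ldots,z_n)\,dz_1$ is holomorphic on $U$; evaluating it along the ``straight-line in $z_1$'' representative shows it equals the relevant branch of~$f$. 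Patching these local pieces along $[0,1]$ by compactness gives the continuation of~$f$ along all of~$\gamma$.

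The main obstacle is step~(ii)/(iii): one must be careful that the ``frozen-parameter'' roots of $P_F^J$ — whose number and non-collision are guaranteed precisely by the conditions cut out by~$V_f^J$ — genuinely prevent the integration path $H_t$ from being pinched, and that the various $z_1$-homotopies used to compare integrals can be taken to avoid the moving root configuration; this is really a monodromy bookkeeping argument about how $N$ points in $\C^*$ move as $(z_2,\ldots,z_n)$ traverses $\gamma$ while staying off the discriminantal variety, and organizing it cleanly (rather than the routine differentiation-under-the-integral estimates) is where the work lies. The existence of a $\gamma$-homotopy for every admissible~$\gamma$ — which is the content of the companion claim — is what makes the construction non-vacuous, but that is handled separately.
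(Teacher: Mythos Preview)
Your approach is essentially the same as the paper's: use the $\gamma$-homotopy $H_t$ as a moving path of integration in the $z_1$-variable, and check that the resulting integrals patch into an analytic continuation of~$f$ along~$\gamma$. The paper's sketch is actually shorter than yours; it writes down the candidate germ at $\gamma(t)$ (formula~\eqref{convanacont}) and says that equality of the germs for nearby $t_1,t_2$ follows from Cauchy's theorem, referring to \cite{S14} for the details.

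There is one genuine imprecision in your step~(iii). You propose to take, near $\gamma(t_0)$, the holomorphic function
\[
(z,z_2,\ldots,z_n)\longmapsto \int_{H_{t_0}} F(z_1,z_2,\ldots,z_n)\,dz_1
\]
with $H_{t_0}$ a \emph{fixed} path from~$0$ to $\ov P(\gamma(t_0))$. But this integral has the wrong upper endpoint: for $(z,z_2,\ldots,z_n)$ near $\gamma(t_0)$, the germ of~$f$ must be $\int_0^{\ov P(z,z_2,\ldots,z_n)}F\,dz_1$, whose endpoint varies with the outer variables. The paper's formula~\eqref{convanacont} handles this by adding a second, short ``tail'' integral
\[
\int_{\ov P(\gamma(t))}^{\ov P(z,z_2,\ldots,z_n)} \big(\cont_{(H|_t(1),\gamma_{z_2}|_t,\ldots,\gamma_{z_n}|_t)}F\big)(z_1,\ldots,z_n)\,dz_1,
\]
which is well-defined because the condition $P_F^J(\ov P,z_2,\ldots,z_n)\neq0$ (fourth line of~\eqref{defVf}) keeps the endpoint away from the branch locus. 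Once you insert this correction your argument is complete and matches the paper's.
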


\begin{claim}\label{exist}
For any path $\gamma:[0,1]\rightarrow\mathbb{C}^n-{V}_f^J$ such that $\gamma(0)$ near origin, there exists a $\gamma$-homotopy.\ \\
\end{claim}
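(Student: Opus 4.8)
\textbf{Proof proposal for Claim~\ref{exist}.}
The plan is to construct the $\gamma$-homotopy $H$ explicitly, by following along $\gamma$ and, at each ``time'' $t$, choosing a path in the $z_1$-plane from $0$ to $\overline{P}(\gamma(t))$ that avoids the finitely many roots of the one-variable polynomial $(P_F^J)_{(\gamma_{z_2}(t),\dots,\gamma_{z_n}(t))}(z_1)$. First I would observe that, since $\gamma$ avoids $V_f^J$ as described in~\eqref{defVf}, for every $t\in[0,1]$ we have $b_M(\gamma_{z_2}(t),\dots,\gamma_{z_n}(t))\neq0$, the polynomial $(P_F^J)_{(z_2,\cdots,z_n)}(z_1)$ evaluated along $\gamma$ has exactly $M$ distinct roots $\om_1(t),\dots,\om_M(t)$ none of which is $0$, and $\overline{P}(\gamma(t))$ is never one of these roots. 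Continuity of the coefficients $b_i$ along $\gamma$ together with simplicity of the roots gives (via the implicit function theorem, or continuity of roots of a polynomial with no collisions) that the maps $t\mapsto\om_\alpha(t)$ are continuous on $[0,1]$, and likewise $t\mapsto\overline{P}(\gamma(t))$ is continuous; all stay in a fixed large disc.

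The core step is then a standard ``avoiding finitely many moving points'' argument. Consider the planar configuration at time $t$: the distinguished endpoint $\overline{P}(\gamma(t))$, the base point $0$, and the obstacles $\{\om_1(t),\dots,\om_M(t)\}$, all moving continuously, with $0$ and $\overline{P}(\gamma(t))$ staying off the obstacle set. I would build $H_t$ by an isotopy-type argument: since the configuration space of ``$2+M$ labelled distinct points in $\C$ with the obstacles avoided by the two marked ones'' is path-connected along the given continuous family, one can transport the initial choice $H_0(s)=s\cdot\overline{P}(\gamma(0))$ (a straight segment, which avoids the obstacles after a generic small perturbation if needed, using that $0$ and $\overline{P}(\gamma(0))$ are near the origin and the roots $\om_\alpha$ are bounded away from a neighbourhood of the origin by continuity at $t=0$) continuously in $t$ to a path $H_t(\cdot)$ from $0$ to $\overline{P}(\gamma(t))$ avoiding $\{\om_\alpha(t)\}$. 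Concretely, one can cover $[0,1]$ by finitely many subintervals on which the obstacles move by less than half their minimal mutual distance and less than half their distance to $\{0,\overline P(\gamma(t))\}$, and on each piece push the previously constructed path by the ``drift'' of the obstacles via a compactly supported ambient isotopy of $\C$; concatenating and reparametrising in $t$ yields a continuous $H$ with $H_t(0)=0$, $H_t(1)=\overline P(\gamma(t))$, and $P_F^J(H_t(s),\gamma_{z_2}(t),\dots,\gamma_{z_n}(t))\neq0$ throughout, which is exactly~\eqref{convgammahomotopy}.

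The main obstacle, and the point that needs care, is the uniformity: one must ensure the minimal distance between the moving roots $\om_\alpha(t)$, and between them and the moving endpoints $0,\overline P(\gamma(t))$, is bounded below by a positive constant, so that the finite subdivision of $[0,1]$ and the associated ambient isotopies exist. This follows from compactness of $[0,1]$, continuity of all the data, and the hypothesis that $\gamma$ avoids $V_f^J$ — precisely the clauses ``$b_M=0$'', ``$P_F^J(0,\cdot)=0$'', ``multiple root'', ``$P_F^J(\overline P,\cdot)=0$'' in~\eqref{defVf} are the four ways uniformity could fail, and excluding them on the compact interval gives the needed lower bound. A secondary technical point is the initialisation at $t=0$: one should check the straight segment from $0$ to $\overline P(\gamma(0))$ already avoids all $\om_\alpha(0)$; this holds because $\gamma(0)$ is close enough to $0$ that $\overline P(\gamma(0))$ is close to $0$, while each $\om_\alpha(0)$ is a nonzero root of $(P_F^J)_{(0,\dots,0)}$-perturbed polynomial and hence bounded away from $0$ — shrinking the initial neighbourhood if necessary. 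With these uniform estimates in hand, the construction of $H$ is routine, and Claim~\ref{exist} follows.
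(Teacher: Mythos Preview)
Your proposal is correct and follows the same overall strategy as the paper: identify the finitely many ``moving obstacles'' $\om_1(t),\dots,\om_M(t)$ in the $z_1$-plane, observe that the four clauses of~\eqref{defVf} guarantee these stay distinct, nonzero, and away from the moving endpoint $\overline P(\gamma(t))$ uniformly in $t\in[0,1]$, and then transport the initial segment $H_0$ by an ambient isotopy of the plane that tracks the obstacles and the endpoints.

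Where you differ from the paper is in how that isotopy is produced. You invoke a compactness/finite-subdivision argument plus the isotopy extension theorem for point configurations in~$\C$, gluing local pushes together. The paper instead writes down a single explicit non-autonomous vector field---in fact on $\C\times\R$ rather than on~$\C$, using an auxiliary arclength coordinate---and obtains $\Psi_t$ as the time-$t$ flow via Cauchy--Lipschitz; the homotopy is then $H_t(s)=\Pi_\C\big(\Psi_t(H_0(s),0)\big)$. Your route is shorter and more topological, relying on the standard fact that a continuous motion of $M{+}2$ distinct marked points in the plane extends to an ambient isotopy; the paper's route is more self-contained, giving a concrete formula for the flow and avoiding any appeal to isotopy extension. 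Both reach the same conclusion. One small point: your parenthetical ``after a generic small perturbation if needed'' for $H_0$ should be dropped---the definition of $\gamma$-homotopy fixes $H_0(s)=s\cdot\overline P(\gamma(0))$ exactly, and (as you correctly argue later) no perturbation is needed because $\gamma(0)$ close to the origin already forces the segment to miss all $\om_\alpha(0)$.
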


\begin{flushleft}
\textbf{The proof of claim \ref{germ}}\ \\
\end{flushleft}
For all $t\in [0,1]$, $\gamma|_t$ is a truncated path defined as follows:
\[\gamma|_t:[0,t]\rightarrow \mathbb{C}^n, \tau\mapsto \gamma|_t({\tau}):=\gamma(\tau).\]
We denote $cont_{\gamma|_t}f$ by the holomorphic germ at $\gamma(t)$ which is obtained by $f$ continued analytically along $\gamma$.
One may prove that, if there exists $\gamma$-homotopy, then the analytic germ at $\gamma(t)$ of $f$ is
\begin{equation}\label{convanacont}
\begin{split}
  \left(cont_{\gamma|_t} f\right)(z,z_2,\cdots,z_n)&=\int_{H_t} \left(cont_{(H|_t(s),\gamma_{z_2}|_t,\cdots,\gamma_{z_n}|_t)}F\right)(z_1,\cdots,z_n) dz_1 \\
  &+\int_{\overline{P}(\gamma(t))} ^{\overline{P}(z,z_2,\cdots,z_n)} \left(cont_{(H|_t(1),\gamma_{z_2}|_t,\cdots,\gamma_{z_n}|_t)}F\right)(z_1,\cdots,z_n) dz_1,
\end{split}
\end{equation}
where $H|_t(s)$ is the truncated path of $H_t(s)$ when we fixed $s$ in $[0,1]$. The proof of this claim is concluded with $\left(cont_{\gamma|_{t_1}} f\right)(z,z_2,\cdots,z_n) = \left(cont_{\gamma|_{t_2}} f\right)(z,z_2,\cdots,z_n)$ when $t_1$ and $t_2$ close enough by using Cauchy integral. See \cite{S14} for details.\ \\

\begin{flushleft}
\textbf{The proof of claim \ref{exist}}\ \\
\end{flushleft}

When $f$ continued analytically along $\gamma$ which starts near origin and avoids $V_f^J$, the corresponding $\gamma$-homotopy $H_t(s)$ has some moving points to be avoid, i.e. the germ of $F$ at $(H_t(s),\gamma_{z_2}(t),\cdots,\gamma_{z_n}(t))$ should be well-defined (see formula (\ref{convanacont})). From the last condition of (\ref{convgammahomotopy}) and the form of $P_F^J$ (see formula (\ref{PJ})), we know these moving points are
\begin{equation}
  \omega_{i}(t):=\omega_{i}(\gamma_{z_2}(t),\cdots,\gamma_{z_n}(t))\in\mathbb{C}, \ \text{for } i=1,\cdots,M.
\end{equation}
In the set (\ref{defVf}), the first and third conditions mean that there are always $M$ distinguished moving points $\omega_i(t)$. The second and fourth conditions mean that these $w_i(t)$'s will not touch the starting point of the homotopy $H_t(0)=0$ and the ending point of the homotopy $H_t(1)=\overline{P}(\gamma_z(t),\gamma_{z_2}(t),\cdots,\gamma_{z_n}(t))$, correspondingly.

Now we want to find the $\gamma$-homotopy $H_t(s)$.
The idea is to find a family of maps $(\Psi_t)_{t\in[0,1]}:\mathbb{C}\times\mathbb{R}\rightarrow\mathbb{C}\times\mathbb{R}$ such that for any $s$, $H_{t}(s):=\Pi_{\mathbb{C}}( \Psi_t(H_0(s)))$ yield the desired homotopy, where $\Pi_{\mathbb{C}}$ is the projection from $\mathbb{C}\times\mathbb{R}$ to $\mathbb{C}$.
Let $\omega_0(t):=\overline{P}(\gamma(t))$. If $\gamma(t)$ avoids $V_f^J$, thanks to our assumption (\ref{convassumption}), we have
\begin{equation}\label{convhomosingularcondition}
\omega_{i}(t)\neq \omega_0(t),\ \omega_i(t)\neq 0 \ \text{and} \  \omega_i(t)\neq\omega_j(t) \quad \forall i,j=1,\cdots,M,\ i\neq j \text{ and } \forall t\in[0,1].
\end{equation} See the figure 1 below.\ \\

\begin{figure}[h]\label{convpic}
\includegraphics[width=6in]{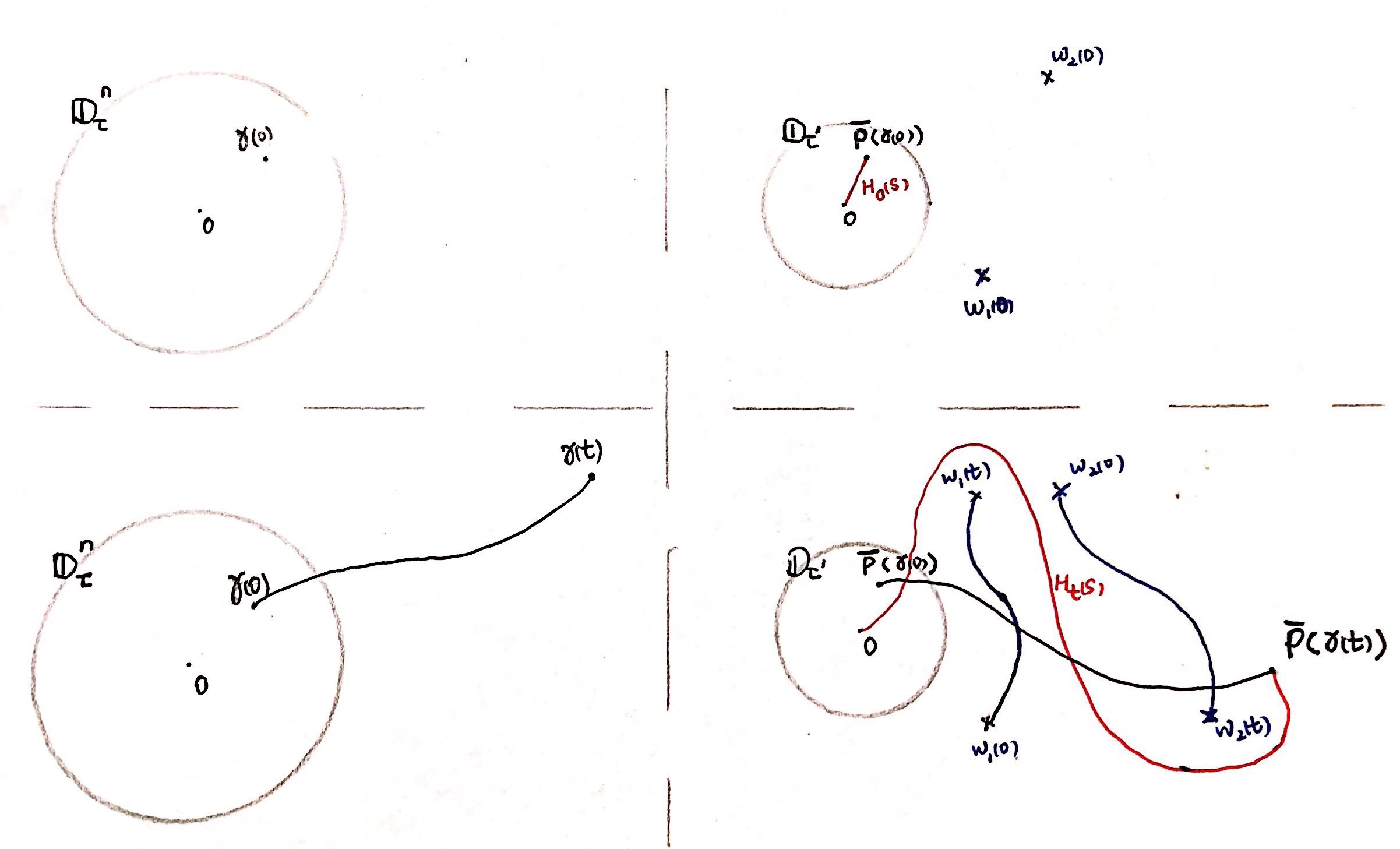}
\caption{Upper-left: there exists some $\tau$ s.t. $f$ holomorphic in $\mathbb{D}_\tau^n$ and $f$ continued analytically start at $\gamma(0)$. Upper-right: Integrate $F$ over a line segment $H_0(s)$ contained in $\mathbb{D}_{\tau^{\prime}}$. And when $f$ continued analytically along $\gamma(t)$ in the lower-left picture, the corresponding homotopy $H_t(s)$(red curve in lower-right) always exists cause the conditions (\ref{convhomosingularcondition}).}
\end{figure}
To find the $\gamma$-homotopy is sufficient to find the injective maps $\Psi_t:\mathbb{C}\times\mathbb{R}\rightarrow\mathbb{C}\times\mathbb{R}$ satisfies the following conditions:\ \\
\begin{equation}\label{conditionflow}
\begin{split}
&(1^\circ): \Psi_0=id; \\
&(2^\circ): \Psi_t(0,0)=(0,0); \\
&(3^\circ):\Psi_t(\omega_0(0),0)=(\omega_0(t),\mathfrak{L}_{\omega_0(t)}), \text{ where } \mathfrak{L}_{\omega_0(t)}:=\int_0^t|\omega_0^\prime(s)|ds. \\
&(4^\circ): \Psi_t(\omega_{i}(0),\lambda)=(\omega_{i}(t),\lambda+\mathfrak{L}_{\omega_{i}(t)}), \text{ for } i=1,\cdots,M.
\end{split}
\end{equation}

Indeed, the maps $\Psi_t$ will be generated by the flow of a non-autonomous vector field $(X(\xi,\lambda,t),|X(\xi,\lambda,t)|)$ defined as follows:
\[X(\xi,\lambda,t)
=\sum\limits_{i=0}^N
\frac{N_{i}(\xi,\lambda,t)}{N_{i}(\xi,\lambda,t)+\eta_{i}(\xi,\lambda,t)}
(\omega_{i}^{\prime}(t),|\omega_{i}^{\prime}(t)|),\]
where
\[\begin{split}
&N_0(\xi,\lambda,t):=dist((\xi,\lambda),(S(t),\mathbb{R})\cup\{(0,0)\});\\
&N_{i}(\xi,\lambda,t):=dist((\xi,\lambda),(S_{i}(t),\mathbb{R})\cup\{(0,0)\}\cup\{(\omega_0(t),\mathfrak{L}_{\omega_0(t)})\})\ \text{for}\ i=1,\cdots,N;\\
&\eta_0(\xi,\lambda,t):=dist((\xi,\lambda),(\omega_0(t),\mathfrak{L}_{\omega_0(t)}));\\
&\eta_{i}(\xi,\lambda,t):=|\xi-\omega_{i}(t)|,\ \text{for}\ i=1,\cdots,N;\\
&S(t):=\{\omega_1(t),\cdots,\omega_n(t)\};\\
&S_{i}(t):=S(t)-\{\omega_{i}(t)\}.
\end{split}
\]
One can check $N_{i}+\eta_{i}\neq0$ for $i=1,\cdots,N$.
The Cauchy-Lipschitz theorem on the existence and uniqueness of solutions to differential equations applies to $\frac{(d\xi,d\lambda)}{dt}=(X(\xi,\lambda,t),|X(\xi,\lambda,t)|)$: for every $(\xi,\lambda)\in\mathbb{C}\times\mathbb{R}$ and $t_0\in[0,1]$ there is a unique solution $t\mapsto\Psi^{t_0,t}(\xi,\lambda)$ such that $\Psi^{t_0,t_0}=id$.\ \\
Let us set $t_0=0$ and $\Psi_t:=\Psi^{0,t}$ for $t\in[0,1]$. It is easy to see that this family of maps are injective and satisfy the conditions(\ref{conditionflow}) of $\Psi$. We conclude the proof of lemma \ref{convlemmaJ} in case 1. \ \\

Here are two simple examples in case 1.

\begin{example}
$f(z,z_2):=\int_0 ^{z} \frac{1}{z_2z_1+1} dz_1 = \frac{1}{z_2} log(z_2 z+1)  $.
One can find multi-path in $\mathbb{C}^2$ to prove that the singular set of $f$ is $\{(z,z_2) \mid z_2(z_2 z + 1)=0\}$. And $\{z_2=0\}$ is actually the first condition in (\ref{defVf}), $\{z_2 z + 1=0\}$ is the fourth condition.
\end{example}

\begin{example}
  $f(z,z_2):=\int_0^z \frac{1}{(z_1+1)(z_1+z_2+1)}dz_1=\frac{1}{z_2}(log(z+1)-log(z+z_2+1)+log(z_2+1))$. The singular set of $f$ is $\{(z,z_2) \mid z_2(z+1)(z+z_2+1)(z_2+1)=0\}$. $\{z_2=0\}$ is actually the third condition in (\ref{defVf}), $\{(z+1)(z+z_2+1)=0\}$ is the fourth condition, And $\{z_2+1=0\}$ is the second condition.
\end{example}

\begin{remark}
  Although the definition $V_f^J$ gives the possibly singular set, which means that maybe a subset of $V_f^J$ is regular, from these two simple examples, one can observe that all the conditions in (\ref{defVf}) make sense.
\end{remark}
Now we discuss the lemma \ref{convlemmaJ} in case 2. The following example helps us to understand how case 2 happens.

\begin{example}
If
\[f(z,z_2):=\int_0^{z_2} \frac{1}{z_2-z_1}log(1-(z_2-z_1))dz_1,\]
we know $F(z_1,z_2):=\frac{1}{z_2-z_1}log(1-(z_2-z_1))$ is holomorphic
at $(0,0)$ and it has singular set $V_f=\{(z_1,z_2) \mid
(z_1-z_2)(z_1-(z_2-1))=0\}$. Thus $\omega_1=z_2=\overline{P}$,
$\omega_2=z_2-1$. After change the variable $u=z_2-z_1$, we have
\beglab{eqLideux}
f(z,z_2)=\int_0^{z_2} \frac1u log(1-u)du = - \Li_2(z_2).
\edla
It is obvious that the singular set of $f$ is  $\{z_2=0,1\}$ if we compute partial derivative $\frac{\partial f}{\partial z_2}$.
\end{example}

We will use the following homotopy in case 2:
\begin{definition}
\quad For a multi-path $\gamma(t):=(\gamma_{z}(t),\gamma_{z_2}(t),\cdots,\gamma_{z_n}(t))\in \mathbb{C}^n$, a continuous map $H:[0,1]\times[0,1]\rightarrow\mathbb{C},(t,s)\mapsto H_t(s):=H(t,s)$ is called a $\gamma^{\prime}$-homotopy if for any $t\in [0,1]$,
\begin{equation}\label{convgammaprimehomotopy}
\begin{split}
  &H_t(0)=0;      \quad
   H_0(s)=s\cdot \overline{P}(\gamma(0))\ \forall s\in [0,1];   \quad
   H_t(1)=\overline{P}(\gamma(t));                                \\
  &P_F^J(H_t(s),\gamma_{z_2}(t),\cdots,\gamma_{z_n}(t))\neq0\ \forall s\in [0,1).
\end{split}
\end{equation}
\end{definition}

In order to prove lemma \ref{convlemmaJ}, we will use the same procedure as in case 1.
We shall use the same formula (\ref{convanacont}) to write down the analytic continuation of $f$ along $\gamma(t)$. The only difference between $\gamma^{\prime}$-homotopy and $\gamma$-homotopy is the ending points (when $s=1$) of the fourth condition in (\ref{convgammahomotopy}) and (\ref{convgammaprimehomotopy}). Thus we shall prove the germs \[\left(cont_{(H|_t(1),\gamma_{z_2}|_t,\cdots,\gamma_{z_n}|_t)}F\right)(z_1,\cdots,z_n)\]
inside integral representation (\ref{convanacont}) are well-defined.

Let $\omega_{i}(t):=\omega_{i}(\gamma_{z_2}(t),\cdots,\gamma_{z_n}(t))\in\mathbb{C}, \ \text{for } i=1,\cdots,M \text{ and } \ov{P}(\gamma(t))=\om_1(t).$

 Actually, $\om_1(t)$ is not a singular point for $\gamma^{\prime}$-homotopy because variable ($z_1-\overline{P}$) always lies in the principle sheet when we move along $\gamma$. It will be clear after we change the variable:
 \[f(z,z_2,\cdots,z_n)=\overline{P}(z,z_2,\cdots,z_n) \int_{-1}^{0} G(\zeta,z_2,\cdots,z_n) d\zeta \]
 with $G(\zeta,z_2,\cdots,z_n):=F\left(\overline{P}(z,z_2,\cdots,z_n)(1+\zeta)
 ,z_2,\cdots,z_n\right)$. We can find sufficient small $R>0$ s.t.
 \[G(\zeta,z_2,\cdots,z_n)=\frac{1}{2\pi i} \oint_{\partial \mathbb{D}_R} \frac{G(\xi,z_2,\cdots,z_n)}{\xi-\zeta} d\xi.  \]
 Indeed, the set (\ref{defVfspecial}) which $\gamma(t)$ should avoid implies that the moving singular points of $G$
 \[\eta_i(t):= \frac{\omega_i(t)}{\omega_{1}(t)}-1,\quad \text{for}\ i=2,\cdots,M\]
 never touch $0$. This allows us to choose sufficient small $R$ s.t. %
 $\eta_i(t)$ always lie outside $\mathbb{D}_R$. %
 One can prove that such $G$ always holomorphic at $(0,\gamma_{z_2}(t),\cdots,\gamma_{z_n}(t))$. %
 We conclude the proof of lemma \ref{convlemmaJ} in case 2.\ \\

 \begin{proof}[\textbf{Proof of lemma \ref{convolemma2}}]
   Given $\overline{P}(z_2,\cdots,z_n)\in \C[z_2,\cdots,z_n]$, we may apply lemma \ref{convolemma} treating $\overline{P}$ as an element of $\C[z_1,\cdots,z_n]$: $f(z_2,\cdots,z_n)=g(z,z_2,\cdots,z_n)$ with $g\in \widehat{\mathcal{Q}}^{\mathcal{A}}_n$. We observe that formulas (\ref{defVf}) and (\ref{defVfcase2}) yield $V_g=\C\times V_f$, hence $f\in\widehat{\mathcal{Q}}^{\mathcal{A}}_{n-1}$.
 \end{proof}

\section{Hadamard Product}\label{proofHa}

\subsection{Introduction to Hadamard product on $\mathbb{C}$}\label{secHadonC}

In this section, we study the analytic continuation of the Hadamard product.

\begin{definition}
  Let $f(\xi),g(\xi)\in\mathbb{C}[[\xi]]$, $f(\xi)=\sum\limits_{n=0}^\infty a_n\xi^n$, $g(\xi)=\sum\limits_{m=0}^\infty b_m\xi^m$, we define Hadamard product:
  \[f\odot g(\xi)=\sum\limits_{n=0}^\infty a_nb_n\xi^n.\]
\end{definition}

A fact is that if $f,g\in\mathbb{C}\{\xi\}$, i.e. $f$ and $g$ have
positive radius of convergence $R_f$ and $R_g$ correspondingly, then
$f\odot g\in\mathbb{C}\{\xi\}$ and $R_{f\odot g}\leq R_fR_g$. The
following theorem is related to the classical ``Hadamard
multiplication theorem'', and is in fact a weaker version of a
theorem proved in \cite{LSS}.


\begin{theorem}
  \label{hadathm}
  If $f,g\in\widehat{\mathcal{Q}}^{\mathcal{A}}_1 $, which means, $f,g\in\C\{\xi\}$ and they admit analytic continuation along any path which avoids finite points sets $S_f$ and $S_g$ correspondingly, then $f\odot g\in\widehat{\mathcal{Q}}^{\mathcal{A}}_1$ and it admits analytic continuation along any path which starts near origin and avoids $\{0\}\cup S_f\cdot S_g$.
\end{theorem}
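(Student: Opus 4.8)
The plan is to follow the analytic continuation of $f\odot g$ by means of the classical integral representation
\[
f\odot g(\xi)=\frac{1}{2\pi i}\oint_{C}f(\xi/z)\,g(z)\,\frac{dz}{z},
\]
which holds for $\xi$ close to $0$ when $C=\{\,|z|=c\,\}$ is a positively oriented circle with $|\xi|/R_f<c<R_g$ (the $R$'s being the radii of convergence of $f$ and $g$), deforming the contour as $\xi$ travels along a prescribed path. The idea is that, for a fixed value of $\xi$, the integrand is singular in $z$ only at $z=0$, at the points of $S_g$, and at the points of the \emph{moving} set $\xi\,S_f^{-1}:=\{\xi/s\mid s\in S_f\}$ (where $f(\xi/z)$ is singular); since $0\notin S_f\cup S_g$, the initial small circle separates $\{0\}\cup\xi\,S_f^{-1}$ (inside) from $S_g$ (outside).

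First I would fix a $C^1$ path $\gamma\colon[0,1]\to\C$ with $\gamma(0)$ close to $0$ and $\gamma([0,1])\cap\big(\{0\}\cup S_f\cdot S_g\big)=\varnothing$, and record the two separation facts that make the deformation possible: for every $t$ the point $0$ and the points $\gamma(t)/s$ ($s\in S_f$) are pairwise distinct (else $\gamma(t)=0$), and none of the $\gamma(t)/s$ lies in $S_g$ (a coincidence $\gamma(t)/s=z_0\in S_g$ would force $\gamma(t)=sz_0\in S_f\cdot S_g$). Hence the finite set $A(t):=\{0\}\cup\gamma(t)\,S_f^{-1}$ remains disjoint from the fixed finite set $B:=S_g$ throughout. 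I would then build a continuous family $(C_t)_{t\in[0,1]}$ of positively oriented Jordan curves, with $C_0=C$, such that $A(t)$ lies in the bounded complementary component of $C_t$ and $B$ in the unbounded one---concretely, taking $C_t$ to be the boundary of a thin neighbourhood of a continuously varying finite tree joining the points of $A(t)$ inside $\C\setminus B$, as in the isotopy/flow constructions of \cite{LSS} and \cite{S14} (and in the spirit of Claim~\ref{exist}).

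Next I would check that for each $t$ the integrand is single-valued along $C_t$, so that $I(t):=\frac{1}{2\pi i}\oint_{C_t}f(\gamma(t)/z)\,g(z)\,\frac{dz}{z}$ makes sense. Indeed, the interior of $C_t$ is simply connected and disjoint from $S_g$, so $g$ is single-valued and holomorphic there; and, since the Möbius map $m_t\colon z\mapsto\gamma(t)/z$ exchanges $0$ and $\infty$, the image curve $m_t(C_t)$ bounds a simply connected region that contains $0$ but no point of $S_f$ (the points of $S_f=m_t\big(\gamma(t)\,S_f^{-1}\big)$ come from points of $A(t)$ lying inside $C_t$, hence land in the unbounded complementary component of $m_t(C_t)$), so $f$ is single-valued and holomorphic on that region, whence $z\mapsto f(\gamma(t)/z)$ is single-valued on $C_t$. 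Therefore $I(t)$ is well defined; it depends holomorphically on $\xi$ in a neighbourhood of each $\gamma(t)$ (for $\xi$ near $\gamma(t)$ the points $\xi\,S_f^{-1}$ stay inside $C_t$ and $S_g$ stays outside, so one keeps the contour $C_t$ and differentiates under the integral sign); and by Cauchy's theorem it is unchanged under homotopies of the contour within $\C\setminus\big(\{0\}\cup S_g\cup\gamma(t)\,S_f^{-1}\big)$. Since $I(0)=f\odot g(\gamma(0))$, the family $(I(t))$ yields the analytic continuation of $f\odot g$ along $\gamma$. As $\gamma$ was arbitrary among paths avoiding the finite set $\{0\}\cup S_f\cdot S_g$---a proper algebraic subvariety of $\C$---and $f\odot g\in\C\{\xi\}$, this proves $f\odot g\in\widehat{\mathcal{Q}}^{\mathcal{A}}_1$.

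The main obstacle is the construction of the continuously varying separating family $(C_t)$ together with the single-valuedness verification: the topological content---that two disjoint moving finite point sets can always be separated by a continuously deformed Jordan curve, even when one winds around the other---is precisely where the flow/isotopy machinery (the same kind that underlies Claim~\ref{exist}) is needed; the remaining steps are routine applications of Cauchy's theorem. Alternatively one could bypass most of this by invoking the more general Hadamard continuation theorem of \cite{LSS} and specialising its (possibly infinite) singular locus to the present finite one.
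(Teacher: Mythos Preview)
Your proposal is correct and follows essentially the same route as the paper: both deform the contour in the integral representation of the Hadamard product via a continuous family as $\xi$ moves along $\gamma$, identify the same moving/fixed singular configuration (the points $\gamma(t)/s$ together with~$0$ on one side, the fixed singular set on the other), and invoke a flow/isotopy construction to produce the deformation. The paper spells out the non-autonomous vector field explicitly (its two Claims in Section~\ref{secHadonC} play exactly the role of your separating-curve family~$(C_t)$), whereas you reference that machinery and add a Jordan-curve/principal-branch check for single-valuedness; the roles of~$f$ and~$g$ are swapped in your integral, which is immaterial by symmetry of~$\odot$.
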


We may give a sketch of our proof. Choose $\xi$ inside $\mathbb{D}(R_f R_g):=\{z\in\mathbb{C} \mid |z|<R_f R_g \}$ and choose $c>0$ s.t.
\[\frac{|\xi|}{\delta_g}<c<\delta_f,\]
we have the integration representation of Hadamard product:
\beglab{eqmultconv}
f\odot g(\xi)=\frac{1}{2\pi i}\oint_C
f(z)g(\frac{\xi}{z})\frac{dz}{z},
\edla
where $C$ is a circle which can be parametrized by $ce^{is},0\leqslant s\leqslant2\pi. $\ \\

\begin{example}
  If $f(\xi)=log(1-\xi)$, then one can compute
  \[\frac{d}{d\xi} (f\odot f)(\xi)= -\frac{1}{\xi} log(1-\xi), \]
 which means the singular points of $f\odot f$ are~$0$ and~$1$.
  In fact $f\odot f = \Li_2$ as in~\eqref{eqdefLid}.
\end{example}

In order to prove the theorem \ref{hadathm}, we shall use the following homotopy:
\begin{definition}
  $\gamma^C$-homotopy for $1$ dimension case is a continuous map $H:(t,s)\in[0,1]\times[0,2\pi]\rightarrow H_t(s)\in\mathbb{C}$ such that:
  \[H_0(s)=ce^{is},\ \ H_t(s)\neq0,\ \ H_t(s)\cap S_f=\emptyset,\ \ \frac{\gamma(t)}{H_t(s)}\cap S_g=\emptyset.\]
\end{definition}

\begin{claim}\label{lemmachomotopy}
If there exists a $\gamma^C$-homotopy, then we can do analytic continuation in the following way:
\begin{equation}\label{cont}
(cont_{\gamma|_t}f\odot g)(\xi)=\frac{1}{2\pi i}\oint_{H_t}(cont_{H_{|t}(s)}f)(z)(cont_{\frac{\gamma|_t}{H_{|t}(s)}}g)({\frac{\xi}{z}})\frac {dz}z.
\end{equation}
\end{claim}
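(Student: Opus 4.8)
The claim to be proved is Claim \ref{lemmachomotopy}: given a $\gamma^C$-homotopy $H_t(s)$ for a path $\gamma:[0,1]\to\C\setminus(\{0\}\cup S_f\cdot S_g)$ starting near the origin, the analytic continuation of $f\odot g$ along $\gamma$ is given by formula~\eqref{cont}. The plan is to show that the right-hand side of~\eqref{cont} is (i) well-defined for each $t$, and (ii) depends holomorphically on $\xi$ near $\gamma(t)$ and agrees with $f\odot g$ for $t=0$, and (iii) varies continuously/analytically in $t$, so that it is indeed the analytic continuation. The two defining conditions of the $\gamma^C$-homotopy are exactly what make the integrand legitimate: along the contour $H_t$ the germ $f$ can be continued (since $H_t(s)$ avoids $S_f$), and the germ $g$ can be continued along $t\mapsto \gamma|_t/H_{|t}(s)$ (since this avoids $S_g$). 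So the integrand $(cont_{H_{|t}(s)}f)(z)\,(cont_{\gamma|_t/H_{|t}(s)}g)(\xi/z)$ is a well-defined holomorphic function of $z$ in a neighborhood of the compact contour $H_t$, for $\xi$ near $\gamma(t)$, and the contour integral makes sense.

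\textbf{Key steps.} First I would verify the base case $t=0$: since $H_0(s)=ce^{is}$ is the circle $C$ of radius $c$ with $|\xi|/\delta_g < c < \delta_f$ (here $\delta_f,\delta_g$ are slightly-less-than-radius-of-convergence bounds), and $\gamma(0)=\xi$ is near the origin, the germs involved are the original holomorphic branches of $f$ and $g$, so~\eqref{cont} reduces to the classical integral representation~\eqref{eqmultconv} and gives $f\odot g$. Second, I would fix $t$ and observe that, because the integrand is jointly continuous in $(s,z,\xi)$ and holomorphic in $\xi$ near $\gamma(t)$ (analytic continuation of a germ is holomorphic in the base point, and division/composition preserves this away from the excluded loci), the integral $\frac{1}{2\pi i}\oint_{H_t}$ defines a holomorphic germ at $\gamma(t)$; call it $\Phi_t(\xi)$. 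Third — the crux — I would show $t\mapsto\Phi_t$ is a continuous family of germs, i.e. for $t_1,t_2$ close, $\Phi_{t_1}$ and $\Phi_{t_2}$ agree near $\gamma(t_1)\approx\gamma(t_2)$. This is a Cauchy-theorem / contour-deformation argument: the contours $H_{t_1}$ and $H_{t_2}$ are homotopic through contours on which the integrand remains holomorphic (this is precisely the content of the homotopy conditions being satisfied for all $s$ and on the whole segment $[t_1,t_2]$), while simultaneously tracking how the continued germs of $f$ and $g$ change. Since the integrand's branch is determined consistently along the homotopy, Cauchy's theorem gives $\Phi_{t_1}=\Phi_{t_2}$ as germs. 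Concatenating over a partition of $[0,1]$, $\Phi_1$ is the analytic continuation of $\Phi_0 = f\odot g$ along $\gamma$, which is the assertion. The paper itself points to \cite{S14} for the detailed bookkeeping, and indeed an analogous argument was carried out in the convolution case (proof of Claim~\ref{germ}).

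\textbf{Main obstacle.} The delicate point is the simultaneous tracking of \emph{two} analytic continuations (the branch of $f$ along the moving contour $H_t$, and the branch of $g$ along the moving point $\gamma|_t/H_{|t}(s)$) while deforming the contour, and checking that Cauchy's theorem genuinely applies — i.e. that the integrand, as a multivalued object, has a single-valued holomorphic branch on the swept-out region between $H_{t_1}$ and $H_{t_2}$. One must check that $z\mapsto (cont_{\cdots}f)(z)$ has no monodromy picked up inside this region (guaranteed by $S_f$ being avoided by the family of contours) and likewise that $z\mapsto(cont_{\cdots}g)(\xi/z)$ is single-valued there (guaranteed by $S_g$ being avoided by $\gamma|_t/H_{|t}(s)$), so the product extends holomorphically across the annular region swept by the homotopy; then the difference of the two contour integrals is zero. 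Making this rigorous requires a careful uniformity statement — continuity of all germs in all parameters on a compact set — which is routine but notationally heavy, hence the deferral to \cite{S14}. The existence of the $\gamma^C$-homotopy itself (the analogue of Claim~\ref{exist}) is the other half of the argument and is presumably handled separately in Section~\ref{proofHa}, via a vector-field/flow construction pushing the contour past the moving singularities, exactly as in the convolution case.
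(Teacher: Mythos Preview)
Your proposal is correct and follows essentially the same approach as the paper, which simply writes ``Similar to the proof of claim~\ref{germ}'' and defers the bookkeeping to~\cite{S14}. You have accurately identified and fleshed out the key steps (base case $t=0$, holomorphy in~$\xi$, and the Cauchy--theorem contour deformation showing $\Phi_{t_1}=\Phi_{t_2}$ for nearby $t_1,t_2$), as well as the main delicate point about simultaneous tracking of the two analytic continuations.
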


\begin{proof}
Similar to the proof of claim \ref{germ}.
\end{proof}

\begin{claim}
  If $\gamma$ starts near origin and it avoids $\{0\}\cup S_f\cdot S_g$, then there exists $\gamma^C$-homotopy.
\end{claim}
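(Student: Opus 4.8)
The plan is to realise the sought deformation of the circle $C$ as the image of $C$ under an ambient isotopy of $\C$ that tracks a finite, everywhere non-degenerate configuration of moving points, exactly as was done for the convolution in the proof of Lemma~\ref{convlemmaJ} (Claim~\ref{exist}). The first, and only genuinely substantive, step is to reformulate the avoidance conditions defining a $\gamma^C$-homotopy. Writing $\sigma$ for a generic element of $S_g$, the condition $\gamma(t)/H_t(s)\notin S_g$ reads $H_t(s)\notin\gamma(t)\,S_g^{-1}$, so at time $t$ the loop $H_t$ must avoid the finite set
\[
\mathcal{P}(t):=\{0\}\cup S_f\cup\{\,\gamma(t)/\sigma \mid \sigma\in S_g\,\}.
\]
The key observation is that $\mathcal{P}(t)$ consists of exactly $1+\#S_f+\#S_g$ pairwise distinct points for \emph{every} $t\in[0,1]$: since $\gamma(t)\neq0$, the points $\gamma(t)/\sigma$ are distinct from one another and from~$0$; and each $\gamma(t)/\sigma$ differs from every $s\in S_f$ precisely because $\gamma(t)\notin S_f\cdot S_g$. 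This is the one place where the hypothesis that $\gamma$ avoids $\{0\}\cup S_f\cdot S_g$ enters. Moreover, by the choice of the radius $c$ (recall $\gamma$ starts close enough to the origin), the initial circle $C=\{|z|=c\}$ is disjoint from $\mathcal{P}(0)$, the points of $S_f$ lying strictly outside it and $0$ together with all the $\gamma(0)/\sigma$ lying strictly inside.

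Next I would build the isotopy by the non-autonomous vector-field construction already used in Section~\ref{proofconvo}. Put $\rho(t):=\tfrac12\min\{\,|p-p'| : p\neq p'\in\mathcal{P}(t)\,\}$; this is continuous and strictly positive on the compact interval $[0,1]$, hence bounded below by some $\rho_0>0$, and the closed discs of radius $\rho(t)$ centred at the points of $\mathcal{P}(t)$ are pairwise disjoint. Fix a bump function $\eta\colon[0,\infty)\to[0,1]$ with $\eta\equiv1$ near $0$ and $\eta\equiv0$ on $[1,\infty)$, label by $p$ the finitely many curves $t\mapsto p(t)$ making up the configuration, and set
\[
X(z,t):=\sum_{p}\eta\big(|z-p(t)|/\rho(t)\big)\,\dot p(t),
\]
where $\dot p(t)=0$ if $p(t)\in\{0\}\cup S_f$ and $\dot p(t)=\gamma'(t)/\sigma$ if $p(t)=\gamma(t)/\sigma$. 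Because $\gamma$ is $C^1$, $X$ is bounded, compactly supported in $z$ uniformly for $t\in[0,1]$, and Lipschitz in $z$ uniformly in $t$ (with Lipschitz constant controlled by $1/\rho_0$); by Cauchy--Lipschitz its flow $\phi_t:=\Psi^{0,t}$ is then defined for all $t\in[0,1]$, is continuous in $(t,z)$, and each $\phi_t$ is a homeomorphism of $\C$ with $\phi_0=\id$. Since the discs are disjoint, $X(p(t),t)=\dot p(t)$ exactly, so $t\mapsto p(t)$ is the integral curve of $X$ through $p(0)$; uniqueness gives $\phi_t(p(0))=p(t)$ for every $p$. In particular $\phi_t$ fixes $0$ and every point of $S_f$, and $\phi_t(\gamma(0)/\sigma)=\gamma(t)/\sigma$ for each $\sigma\in S_g$.

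Finally I would set $H_t(s):=\phi_t(c\,e^{is})$. Then $H$ is continuous in $(t,s)$ and $H_0(s)=c\,e^{is}$. As $\phi_t$ is injective with $\phi_t(\mathcal{P}(0))=\mathcal{P}(t)$ and $C$ is disjoint from $\mathcal{P}(0)$, the loop $H_t=\phi_t(C)$ is disjoint from $\mathcal{P}(t)$; spelling this out, $H_t(s)\neq0$, $H_t(s)\notin S_f$, and $\gamma(t)/H_t(s)\notin S_g$ for all $(t,s)$, that is, $H$ is a $\gamma^C$-homotopy. In contrast with the convolution case, no auxiliary ``length'' coordinate $\lambda$ is needed here: $H_t$ is a closed contour and the germs of $f$ and $g$ continued in formula~\eqref{cont} start out single-valued on a disc, so only the avoidance conditions, not any winding data, appear in the definition of $\gamma^C$-homotopy. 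The only place requiring care is therefore the bookkeeping of the first paragraph; everything after it is a transcription of the ambient-isotopy argument of Section~\ref{proofconvo}.
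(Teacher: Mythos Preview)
Your proof is correct and matches the paper's approach: both identify the moving obstacle set $\{0\}\cup S_f\cup\{\gamma(t)/\sigma:\sigma\in S_g\setminus\{0\}\}$, use the hypotheses on~$\gamma$ to verify that these points stay pairwise distinct for all~$t$, and then carry the initial circle along the flow of a non-autonomous vector field tracking this configuration, setting $H_t(s)=\Psi_t(c\,e^{is})$. The only differences are cosmetic --- the paper uses the distance-ratio vector field $X(\xi,t)=\sum_i\frac{\eta_i}{\eta_i+\tau_i}\,\omega_i'(t)$ in place of your bump-function construction, and explicitly discards any $0\in S_g$ at the outset (which you should also do, lest some $\gamma(t)/\sigma$ be undefined).
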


\begin{proof}
Suppose $S_f=\{f_1,\cdots,f_s\}$, $S_g=\{g_1,\cdots,g_r\}$. Without loss of generality, we assume $g_i\neq0$ for all $i=1,\cdots,r$. Indeed, if $g_1=0$, it has no influence on the $\gamma^C$-homotopy because $\gamma(t)$ never touch $0$. Let
\[\omega_i(t)=\frac{\gamma(t)}{g_i},\indent i=1,\cdots,r, \]
be the moving singular points of homotopy. By the assumption, $\gamma(t)\neq0$ implies that $\om_i(t)\neq\om_j(t)$ for $i\neq j$ and $\om_i(t)\neq 0$, $\gamma(t)\neq S_f\cdot S_g $ implies that $\om_j(t)\neq f_i$. \ \\

To find the homotopy is sufficient to find the flow $\Psi_t:\mathbb{C}\rightarrow\mathbb{C}$ which satisfies:\ \\
\[\begin{split}
&\Psi_0=id,\\
&\Psi_t(0)=0,\\
&\Psi_t(f_i)=f_i \text{\indent for }i=1,\cdots,s,\\
&\Psi_t(\omega_j(0))=\omega_j(t) \text{\indent  for } j=1,\cdots,r.
\end{split}\]
See the picture below.
\begin{figure}
\includegraphics[width=6in]{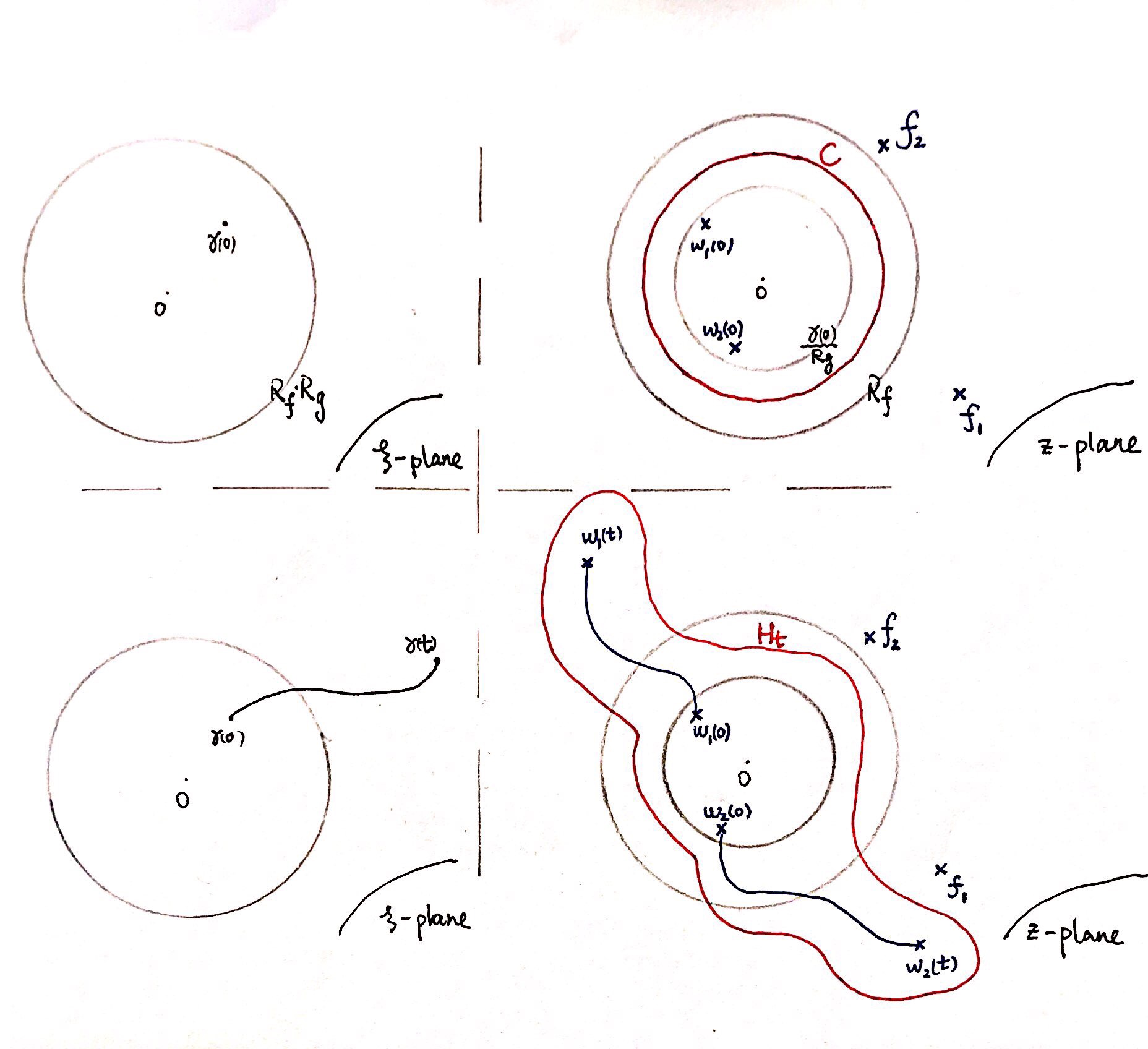}
\caption{Upper-left: $f\odot g$ is holomorphic in $\mathbb{D}_{R_f\cdot R_g}$ and $f$ continued analytically start at $\gamma(0)\in\mathbb{D}_{R_f\cdot R_g}$. Upper-right: Using $C$ to be the integral curve when $t=0$, we notice that the singular points of $f$ are outside $\mathbb{D}_{R_f\cdot R_g}$ and the ``moving singular points'' of the homotopy are inside $\mathbb{D}_{R_f\cdot R_g}$. And when $f\odot g$ continued analytically along $\gamma(t)$ in the lower-left picture, the corresponding homotopy $H_t(s)$(red curve in lower-right) always exists cause the conditions of $\omega(t)$'s.}
\end{figure}

Then we shall use the non-autonomous vector field:
\[X(\xi,t)=\sum\limits_{i=1}^r\frac{\eta_i(\xi,t)}
{\eta_i(\xi,t)+\tau_i(\xi,t)}\omega_i^\prime(t)\]
\[\text{with} \quad \eta_i(\xi,t):=dist\left(\xi,\{0\}\cup S_f\cup \bigcup\limits_{i\neq j}\omega_j(t)\right)\quad \text{and} \quad \tau_i(\xi,t):=dist(\xi,\omega_i(t)).\]

The Cauchy-Lipschitz theorem on the existence and uniqueness of solutions to differential equations applies to $\frac{d\xi}{dt}=X(\xi,t)$: for every $\xi\in\mathbb{C}$ and $t_0\in[0,1]$ there is a unique solution $t\rightarrow \Phi^{t_0,t}(\xi)$ such that $\Phi^{t_0,t_0}(\xi)=\xi$. Then we shall define our flow $\Psi_t=\Phi^{0,t}$ for $t\in[0,1]$ and the $\gamma^C$-homotopy $H_t(s):=\Phi_t(H_0(s))$.
\end{proof}

\subsection{Proof of lemma \ref{Hadalemma}}
Suppose $\hat{f}(\xi,q,p),\hat{g}(\xi,q,p)\in\widehat{\mathcal{Q}}^{\mathcal{A}}_3$, which means $\hat{f},\hat{g}$ holomorphic at $\mathbb{D}^3(\tau,0)$ and the avoidant algebraic sets are
\[\begin{split}
 V_f:=\bigcap\limits_{J=1}^R V_f^J, &\quad
 V_f^J:=\{(\xi,q,p)\in\mathbb{C}^3\mid P_f^J(\xi,q,p)=0\},\text{ for } J=1,\cdots,R , \\
 V_g:=\bigcap\limits_{K=1}^S V_g^K, &\quad
 V_g^K:=\{(\xi,q,p)\in\mathbb{C}^3\mid Q_g^K(\xi,q,p)=0\},\text{ for } K=1,\cdots,S ,
\end{split} \]
correspondingly, with $P_f^J$'s are $p$-simple polynomials and $Q_g^K$'s are $q$-simple polynomials.  \ \\

For each $J=1,\cdots,R, K=1,\cdots,S$, we shall construct algebraic variety $V_F^{JK}$, s.t. if $f$ avoids the set $\{x\in\mathbb{C}^n \mid x\in{(P_f^J)}^{-1}(0)\}$, and $g$ avoids the set $\{x\in\mathbb{C}^n \mid x\in{(Q_g^K)}^{-1}(0)\}$, then $F$ admits analytic continuation along any $\gamma$ which avoids the set $V_F^{JK}$. Thus finally, the avoidant set of $F$ is an algebraic variety
\[V_F=\bigcap\limits_{J,K} V_F^{JK}.\]

For simplifying the notation, let
\begin{equation}
\begin{split}
&P_f^J(\xi,q,p)=a_M(\xi,q) \prod\limits_{\alpha=1}^{M} \big(p-\ti{\om}_{\alpha}(\xi,q) \big), \\
&Q_g^K(\xi,q,p)=b_N(\xi,p) \prod\limits_{\beta=1}^{N} \big(q-\ti{\Om}_{\beta}(\xi,p) \big),
\end{split}
\end{equation}
with $a_M,b_N\neq0$, $\ti{\om}_{\alpha_1} \neq\ti{\om}_{\alpha_2} $ if $1\leq \alpha_1<\alpha_2 \leq M$, $\ti{\Om}_{\beta_1} \neq\ti{\Om}_{\beta_2} $ if $1\leq \beta_1<\beta_2 \leq N$, just like what we do in section \ref{proofconvo}. We shall keep in mind that the notations $a_M$, $M$, and $\ti{\om}_{\alpha}$'s are depend on $J$, and $b_N$, $N$, and $\ti{\Om}_{\beta}$'s are depend on $K$.

By the formula (\ref{Hadafor}), one can easily prove that $F$ is holomorphic inside $ \mathbb{D}(\tau)\times
\mathbb{D}(\tau)\times
\mathbb{D}(\frac{\tau^2}{4})\times
\mathbb{D}(\frac{\tau}{2})\times
\mathbb{D}(\frac{\tau}{2})$. We shall choose a point $(\xi_1,\xi_2,\xi_3,q,p)$ in this polydisc, then there exists $c>0$, s.t.
\[\frac{|\xi_3|}{\frac{\tau}{2}}<c<\frac{\tau}{2}.\]
The formula (\ref{Hadafor}) is equivalent to
\begin{equation}\label{Hadaformula}
F(\xi_1,\xi_2,\xi_3,q,p)=\frac{1}{2\pi i}\oint_{C}f(\xi_1,q,p+z)g(\xi_2,q+\frac{\xi_3}{z},p)\frac{dz}{z}
\end{equation}
where $C$ is a circle radius $c$, center at origin.\ \\
Let us consider the polynomials
\begin{equation}
\begin{split}
  &P_f^J(\xi_1,q,p+z):=a_M(\xi_1,q)\prod\limits_{\alpha=1}^M (z-\omega_{\alpha}(\xi_1,q,p))
  \in \mathbb{C}[\xi_1,q,p,z],   \\
  &Q_g^K(\xi_2,q+z,p):=b_N(\xi_2,p)\prod\limits_{\beta=1}^N (z-\Omega_{\beta}(\xi_2,q,p))
  \in \mathbb{C}[\xi_2,q,p,z].
\end{split}
\end{equation}
From the lemma \ref{lemmaz1simple}, we know that these two polynomials are both $z$-simple polynomials, which means that we have
\begin{equation}
\begin{split}
  &a_M\neq0, \quad \om_{\alpha_1} \neq \om_{\alpha_2} \text{ if } 1\leq \alpha_1<\alpha_2 \leq M,  \\
  &b_N\neq0, \quad \Om_{\beta_1} \neq \Om_{\beta_2}  \text{ if } 1\leq \beta_1<\beta_2 \leq N.
\end{split}
\end{equation}
%
We shall define the avoidant set of $F$ in $\C^5$ by using the notations above. One may notice that it is a `symmetry' condition: \ \\
\begin{definition}
  \[V_F^{JK}:= \mathbb{C}^5-
  \left\{
  (\xi_1,\xi_2,\xi_3,q,p) \in \mathbb{C}^5
   \left|
  \begin{split}
    &\mathfrak{P}_{\xi_1,q,p}(z) \text{ has } M \text{ distinct non-zero roots,\ and}\\
    &\mathfrak{Q}_{\xi_2,q,p}(z) \text{ has } N \text{ distinct non-zero roots,\ and}\\
    &\xi_3 \notin\{\om_{\alpha}\Om_{\beta}\} \cup \{0\}
  \end{split}
  \right.
  \right\},\]
where $\mathfrak{P}_{\xi_1,q,p}(z):=P_f^J(\xi_1,q,p+z)$ is treated as a polynomial of one variable $z$ with the coefficients in $\mathbb{C}[\xi_1,q,p]$ and $\mathfrak{Q}_{\xi_2,q,p}(z):=Q_g^K(\xi_2,q+z,p)$ is treated as a polynomial of one variable $z$ with the coefficients in $\mathbb{C}[\xi_2,q,p]$.
\end{definition}

\begin{remark}
In fact, $V_{F}^{JK}$ will be simplified to one sentence:
\begin{equation}\label{Hadasimcondition}
V_F^{JK}=\mathbb{C}^5-\left\{(\xi_1,\xi_2,\xi_3,q,p)\in\mathbb{C}^5 \left| z^N \mathfrak{P}_{\xi_1,q,p}(z)\mathfrak{Q}_{\xi_2,q,p}(\frac{\xi_3}{z})
\text{ has $M+N$ distinct non-zero roots} \right. \right\}.
\end{equation}
By the lemma \ref{lemmaz1Syl}, we know $V_F^{JK}$ is an algebraic variety.
\end{remark}

From the discussion above, to prove the lemma \ref{Hadalemma} is sufficient to prove the following claim:

\begin{claim}\label{HadaclaimJK}
  If $f$ and $g$ admit analytic continuation along any path which avoid $V_f^J$ and $V_g^K$ respectively, then $F$ defined by formula (\ref{Hadaformula}) admits analytic continuation along any path $\gamma$ which avoids $V_F^{JK}$ defined above.
\end{claim}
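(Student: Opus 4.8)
The plan is to imitate the argument of Section~\ref{secHadonC}: deform the integration circle of~\eqref{Hadaformula} inside the $z$-plane so as to avoid a finite set of \emph{moving obstruction points}, show that the existence of such a deformation lets one follow the analytic continuation of~$F$, and finally produce the deformation as the flow of a cut-off vector field. Fix a $C^1$ path $\gamma\colon[0,1]\to\mathbb{C}^5-V_F^{JK}$ with $\gamma(0)$ in the polydisc where $F$ is holomorphic (the one exhibited in the proof of Lemma~\ref{Hadalemma}). In~\eqref{Hadaformula} the values of $z$ at which the integrand can fail to be holomorphic are: the zeros $z=\omega_\alpha(\xi_1,q,p)$ of $\mathfrak{P}_{\xi_1,q,p}(z)$ (coming from $\hat f$), the points $z=\xi_3/\Omega_\beta(\xi_2,q,p)$ (coming from $\hat g$, the $\Omega_\beta$ being the zeros of $\mathfrak{Q}_{\xi_2,q,p}$), and $z=0$. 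By Lemma~\ref{lemmaz1simple} both $\mathfrak{P}_{\xi_1,q,p}$ and $\mathfrak{Q}_{\xi_2,q,p}$ are $z$-simple, so $\{\omega_\alpha\}$ and $\{\Omega_\beta\}$ are honest families of $M$ resp.\ $N$ distinct roots depending continuously on the parameters; and, by the reformulation~\eqref{Hadasimcondition} of $V_F^{JK}$ through $z^{N}\mathfrak{P}_{\xi_1,q,p}(z)\mathfrak{Q}_{\xi_2,q,p}\!\big(\tfrac{\xi_3}{z}\big)$, the hypothesis $\gamma(t)\notin V_F^{JK}$ says precisely that, setting $\omega_\alpha(t):=\omega_\alpha(\gamma_{\xi_1}(t),\gamma_q(t),\gamma_p(t))$ and $\rho_\beta(t):=\gamma_{\xi_3}(t)/\Omega_\beta(\gamma_{\xi_2}(t),\gamma_q(t),\gamma_p(t))$, for every $t\in[0,1]$ the $1+M+N$ points $\{0\}\cup\{\omega_\alpha(t)\}\cup\{\rho_\beta(t)\}$ are pairwise distinct and the $\omega_\alpha(t),\rho_\beta(t)$ are nonzero.

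\textbf{Step 1 (homotopy $\Rightarrow$ continuation).} Call a \emph{$\gamma$-homotopy} a continuous family of loops $H_t\colon[0,2\pi]\to\mathbb{C}$, $t\in[0,1]$, such that $H_0(s)=ce^{is}$ is the circle of~\eqref{Hadaformula} — we may pick $c$ in its admissible interval $\big(|\xi_3|/\tfrac{\tau}{2},\,\tfrac{\tau}{2}\big)$ so that in addition none of the finitely many points $\omega_\alpha(0),\rho_\beta(0)$ lies on it — with each $H_t$ disjoint from $\{0\}\cup\{\omega_\alpha(t)\}\cup\{\rho_\beta(t)\}$, and with $H_t=\Psi_t\circ H_0$ for an ambient homeomorphism $\Psi_t$ of $\mathbb{C}$ that fixes $0$ and carries each $\omega_\alpha(0)$, $\rho_\beta(0)$ to $\omega_\alpha(t)$, $\rho_\beta(t)$ respectively. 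I claim that if a $\gamma$-homotopy exists then $F$ continues along $\gamma$, with
\begin{equation*}
(cont_{\gamma|_t}F)(\xi_1,\xi_2,\xi_3,q,p)=\frac{1}{2\pi i}\oint_{H_t}(cont\,\hat f)(\xi_1,q,p+z)\,(cont\,\hat g)\big(\xi_2,q+\tfrac{\xi_3}{z},p\big)\,\frac{dz}{z},
\end{equation*}
where $cont\,\hat f$, $cont\,\hat g$ denote $\hat f$, $\hat g$ continued along $\sigma\in[0,t]\mapsto(\gamma_{\xi_1}(\sigma),\gamma_q(\sigma),\gamma_p(\sigma)+H_\sigma(s))$ and $\sigma\in[0,t]\mapsto(\gamma_{\xi_2}(\sigma),\gamma_q(\sigma)+\gamma_{\xi_3}(\sigma)/H_\sigma(s),\gamma_p(\sigma))$. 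These continuations make sense because the indicated paths avoid $V_f^J$ and $V_g^K$: $P_f^J(\gamma_{\xi_1}(\sigma),\gamma_q(\sigma),\gamma_p(\sigma)+H_\sigma(s))=\mathfrak{P}_{\gamma(\sigma)}(H_\sigma(s))$ vanishes only when $H_\sigma(s)=\omega_\alpha(\gamma(\sigma))=\Psi_\sigma(\omega_\alpha(0))$, i.e.\ only when $H_0(s)=\omega_\alpha(0)$, which we excluded, and likewise on the $g$-side; moreover the integrand is continuous and $2\pi$-periodic in $s$. The proof that $cont_{\gamma|_{t_1}}F=cont_{\gamma|_{t_2}}F$ for $t_1,t_2$ close — hence that the right-hand side is the analytic continuation — is the Cauchy-theorem argument of the proof of Claim~\ref{germ} (compare also~\eqref{convanacont} and Claim~\ref{lemmachomotopy}), carried out verbatim.

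\textbf{Step 2 (existence of a $\gamma$-homotopy).} This is the exact analogue of the one-dimensional construction in Section~\ref{secHadonC}, now with $M+N$ moving points. Enumerate them as $v_1(t),\dots,v_{M+N}(t)$ (the $\omega_\alpha$'s followed by the $\rho_\beta$'s), set $\tau_k(z,t):=|z-v_k(t)|$ and $\eta_k(z,t):=dist\big(z,\{0\}\cup\{v_j(t):j\neq k\}\big)$, and let $\Phi^{t_0,t}$ be the flow of the non-autonomous field
\begin{equation*}
X(z,t)=\sum_{k=1}^{M+N}\frac{\eta_k(z,t)}{\eta_k(z,t)+\tau_k(z,t)}\,v_k'(t).
\end{equation*}
By the pairwise distinctness and nonvanishing recorded above, $\eta_k+\tau_k\neq0$, so Cauchy--Lipschitz applies; putting $\Psi_t:=\Phi^{0,t}$ one checks, as in Section~\ref{secHadonC}, that $\Psi_0=id$, that each $\Psi_t$ is injective, that it fixes $0$, and that it sends $v_k(0)$ to $v_k(t)$. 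Hence $H_t:=\Psi_t\circ H_0$ is a $\gamma$-homotopy. Since $H_0$ (the circle of radius $c$) is a legitimate contour computing $F$ in~\eqref{Hadaformula} — $F$ is holomorphic at $\gamma(0)$ and, by the holomorphy of $\hat f$ and $\hat g$ on their polydiscs recalled in the proof of Lemma~\ref{Hadalemma}, the value of~\eqref{Hadaformula} does not depend on $c$ in its admissible interval — Step 1 gives the continuation of $F$ along every $\gamma$ avoiding $V_F^{JK}$. Running this over all $J\in\{1,\dots,R\}$, $K\in\{1,\dots,S\}$ and intersecting the varieties $V_F^{JK}$ (each algebraic by Lemma~\ref{lemmaz1Syl}) yields Lemma~\ref{Hadalemma}.

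\textbf{Where the difficulty lies.} Conceptually nothing new happens beyond the one-dimensional Hadamard theorem, so the work is in the bookkeeping: (i) pinning down the obstruction set after the substitutions $p\mapsto p+z$ and $q\mapsto q+\tfrac{\xi_3}{z}$ — this is exactly what Lemma~\ref{lemmaz1simple} provides, its second bullet (applied with $\xi\mapsto\xi_3$) being what makes $z^{N}\mathfrak{Q}_{\xi_2,q,p}(\tfrac{\xi_3}{z})$ a $z$-simple polynomial; and (ii) checking that the complement of $V_F^{JK}$ is cut out \emph{precisely} by ``the $1+M+N$ obstruction points are distinct and the moving ones are nonzero'', so that the cut-off vector field of Step 2 applies with no change. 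The genuinely new ingredient compared with Section~\ref{secHadonC} is the factor $\xi_3/z$ in the $\hat g$-slot: it is responsible both for replacing $\mathfrak{Q}_{\xi_2,q,p}(z)$ by $z^{N}\mathfrak{Q}_{\xi_2,q,p}(\tfrac{\xi_3}{z})$ and for the clause $\xi_3\notin\{\omega_\alpha\Omega_\beta\}\cup\{0\}$ in the definition of $V_F^{JK}$ (which encodes $\rho_\beta\neq0$, $\rho_{\beta_1}\neq\rho_{\beta_2}$ and $\omega_\alpha\neq\rho_\beta$), and keeping these two descriptions aligned is the one point that needs genuine care.
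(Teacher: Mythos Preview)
Your proof is correct and follows essentially the same approach as the paper: identify the $M+N$ moving obstruction points $\omega_\alpha(t)$ and $\xi_3/\Omega_\beta(t)$ in the $z$-plane, reduce the analytic continuation of~$F$ to the existence of a $\gamma^C$-homotopy avoiding them, and construct that homotopy as the flow of the same cut-off non-autonomous vector field used in Section~\ref{secHadonC}. Your write-up is in fact slightly more explicit than the paper's, notably in spelling out why the continued integrand stays off $V_f^J$ and $V_g^K$ along the deformed contour.
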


By using following seven conditions, we explain $V_F^{JK}$ more precisely,
  \begin{equation}\label{defVfog}
V_F^{JK}=\left\{(\xi_1,\xi_2,\xi_3,q,p)\in\mathbb{C}^5 \left|
 \begin{split}
 &a_M(\xi_1,q)=0   &\text{ or }& \\
 &b_N(\xi_2,p)=0   &\text{ or }&        \\
 &\om_{\alpha_1}(\xi_1,q,p)=\om_{\alpha_2}(\xi_1,q,p) \text{ for }  \alpha_1\neq \alpha_2  &\text{ or }&      \\
 &\Om_{\beta_1}(\xi_2,q,p)=\Om_{\beta_2}(\xi_2,q,p)  \text{ for }  \beta_1\neq \beta_2  &\text{ or }&      \\
 &\omega_{\alpha}(\xi_1,q,p)=\frac{\xi_3}{\Omega_{\beta}(\xi_2,q,p)} &\text{ or }&  \\
 &\omega_{\alpha}(\xi_1,q,p)=0   &\text{ or }&  \\
 &\xi_3=0
 \end{split}\right.
 \right\}
\end{equation}
where $\alpha,\alpha_1,\alpha_2=1,\cdots,M$ and $\beta,\beta_1,\beta_2=1,\cdots,N$.

We assume $\gamma:[0,1]\rightarrow \mathbb{C}^5$ starting near $0\in\mathbb{C}^5$, denoted by $\gamma(t)=(\gamma_{\xi_1}(t),\gamma_{\xi_2}(t),\gamma_{\xi_3}(t),\gamma_q(t),\gamma_{p}(t))$. We shall use the following homotopy to do analytic continuation:

\begin{definition}
  With a little abuse of name, a $\gamma^C$-homotopy for high dimension is a continuous map $H:(t,s)\in[0,1]\times[0,2\pi]\rightarrow H_t(s)\in\mathbb{C}$ s.t. for any $t,s,\alpha,\beta$:
  \[H_0(s)=ce^{is},\quad H_t(s)\neq 0,\quad H_t(s)\neq \omega_{\alpha}(\xi_1,q,p),\quad H_t(s)\neq \frac{\xi_3}{\Omega_{\beta}(\xi_2,q,p)}.\]
\end{definition}

One shall prove that if there exists such $\gamma^C$-homotopy, then
\[\begin{split}
(cont_{\gamma|_t}F)(\xi_1,\xi_2,\xi_3,q,p)
=\frac{1}{2\pi i} \oint_{H_t} \left(cont_{(\gamma_{\xi}|_t,\gamma_{q}|_t,\gamma_{p}|_t+H|_t(s))}\hat{f}\right)
(\xi_1,q,p+z)   \\
\cdot \left(cont_{(\gamma_{\xi_2}|_t,\gamma_{q}|_t+\frac{\gamma_{\xi_3}|_t}{H|_t(s)},
\gamma_{p}|_t)}\hat{g}\right)
(\xi_2,q+\frac{\xi_3}{z},p) \frac{dz}{z},
\end{split}\]
which means $F$ admits analytic continuation along $\gamma$ (see the proof of claim \ref{germ} for details).\ \\

Now we assume $\gamma$ avoids $V_F^{JK}$, to find the homotopy is sufficient to find the flow $\Psi_t:\mathbb{C}\rightarrow\mathbb{C}$ which satisfies:\ \\
\[\begin{split}
&\Psi_0=id,\\
&\Psi_t(0)=0,\\
&\Psi_t(\omega_i)=\omega_i(t) \text{\indent for }i=1,\cdots,N+M,
\end{split}\]
where $\omega_{N+j}(t):=\frac{\xi_3}{\Omega_{j}(t)}$, $j=1,\cdots,M$. Here we use $\omega_{\alpha}(t)$ and $\Omega_{\beta}(t)$ to simplify the notation $\omega_{\alpha}(\gamma_{\xi_1}(t),\gamma_{q}(t),\gamma_{p}(t))$ and $\Omega_{\beta}(\gamma_{\xi_2}(t),\gamma_{q}(t),\gamma_{p}(t))$ respectively. \ \\

From the third condition to last condition in definition (\ref{defVfog}), we know that if $\gamma$ avoids $V_F^{JK}$, then $\omega_i(t)\neq \omega_j(t)$ and $\omega_i(t)\neq0$ for $1\leq i\neq j \leq M+N$. The first two conditions in definition (\ref{defVfog}) provide that no moving singular points goes to infinity.

Thus we could use the non-autonomous vector field:
\[X(\xi,t)=\sum\limits_{i=1}^r\frac{\eta_i(\xi,t)}
{\eta_i(\xi,t)+\tau_i(\xi,t)}\omega_i^\prime(t)\]
where $\eta_i(\xi,t):=dist(\xi,\{0\}\cup \bigcup\limits_{j\neq i}\omega_j(t))$ and $\tau_i(\xi,t):=dist(\xi,\omega_i(t))$. The Cauchy-Lipschitz theorem on the existence and uniqueness of solutions to differential equations applies to $\frac{d\xi}{dt}=X(\xi,t)$: for every $\xi\in\mathbb{C}$ and $t_0\in[0,1]$ there is a unique solution $t\rightarrow \Phi^{t_0,t}(\xi)$ such that $\Phi^{t_0,t_0}(\xi)=\xi$. Then we shall define our flow $\Psi_t=\Phi^{0,t}$ for $t\in[0,1]$ and the desired $\gamma^C$-homotopy $ H_t(s):=\Psi_t(ce^{is})$. We conclude the proof of claim \ref{HadaclaimJK}.\ \\

\begin{example}
  Let $f(\xi,q,p)=log(3-\xi-q-p)$ and $g(\xi,q,p)=\frac{1}{3-\xi-q-p}\cdot\frac{1}{4-\xi-2q-p}$,
  thus $f$ and $g$ holomorphic in $\mathbb{D}(1)\times\mathbb{D}(1)\times\mathbb{D}(1)$.
  From the discussion above, we know $F$ holomorphic in $\mathbb{D}(1)\times\mathbb{D}(1)\times\mathbb{D}(\frac14)\times\mathbb{D}(\frac12)\times\mathbb{D}(\frac12)$.\ \\
  We choose $\gamma_{\xi_3}(0)=\frac18$, then there exists $c>0$ (we may choose $c=\frac38$) s.t. $\frac{\frac18}{\frac12}<c<\frac12$. From the formula (\ref{Hadaformula}), we compute
  \[\begin{split}
  F(\xi_1,\xi_2,\xi_3,q,p)
  &=\frac1{2\pi i}
  \oint_{C}
  log(3-\xi_1-q-p-z)
  \cdot
  \frac{1}{3-\xi_2-q-\frac{\xi_3}{z}-p}
  \cdot
  \frac{1}{4-\xi_2-2q-\frac{2\xi_3}{z}-p}\frac{dz}{z}   \\
  &=\frac{1}{A} \frac1B \frac1{2\pi i}
  \oint_{C}
  \frac1{z-\frac{\xi_3}{A}}
  \cdot
  \frac{1}{z-\frac{2\xi_3}{B}}
  \cdot
  zdz  \\
  &=\left[
  log(3-\xi_1-q-p-\frac{\xi_3}{A})
  \cdot
  \frac{\xi_3}{A}
  -log(3-\xi_1-q-p-\frac{2\xi_3}{B})
  \cdot
  \frac{2\xi_3}{B}
  \right]
  \cdot\frac{1}{\xi_3(B-2A)}  \\
  &=\frac{1}{\xi_2+p-2}
  \cdot
  \Big[
  log(3-\xi_1-q-p-\frac{\xi_3}{A})
  \cdot
  \frac{1}{A}
  -log(3-\xi_1-q-p-\frac{2\xi_3}{B})
  \cdot
  \frac{2}{B}
  \Big] ,
  \end{split}
  \]
where $A:=3-\xi_2-q-p$ and $B:=4-\xi_2-2q-p$. Here we know $P_f(\xi_1,q,p)=1$, $P_g(\xi_2,\xi_3,q,p)=AB$, $\omega_1=3-\xi_1-q-p$, $\Omega_1=\frac{\xi_3}{A}$, $\Omega_{2}=\frac{2\xi_3}{B}$.
Thus we can easily observe that $F$ has singular set when $A=0,B=0,\omega_1=\Omega_1, \omega_1=\Omega_2,\Omega_1=\Omega_2$. In fact, for the case $\Omega_1=\Omega_2$, which means $\xi_2+p-2=0$, is not singular set at principle sheet of germ $F$, but it may be singular set in some other sheet. \ \\
\end{example}

\begin{remark}
  The above computation gives an example for the second, fourth and fifth condition in definition (\ref{defVfog}). One may find examples for other conditions.
\end{remark}

\section{Conclusion of the proof for an arbitrary number of degrees of fredom} \label{secHigher}

Now we assume $q=(q_1,\cdots,q_r)$ and $p=(p_1,\cdots,p_r)$.
Recall the integral formula~\eqref{formulahigh}:
\begin{equation*}
  \begin{split}
  \hat{f}\ast \hat{g} (\xi,q,p) =&
  \frac{d^{r+2}}{d \xi ^{r+2}} \int^\xi_0 d\xi_1
  \int^{\xi-\xi_1}_0 d\xi_2
  \cdots \int_0^{\xi-\xi_1-\cdots -\xi_{r+1}} d\xi_{r+2}
  \big( \frac{1}{2\pi} \big)^r \overbrace{\int_0^{2\pi} \cdots \int_0^{2\pi}}^r d\theta_1 \cdots d\theta_r  \\
  &\hat{f}(\xi_{r+1}, q_1,\cdots, q_r,p_1+\sqrt{\xi_1}e^{-i\theta_1}, \cdots, p_r+\sqrt{\xi_r} e^{-i\theta_r}) \\
  & \hat{g}(\xi_{r+2}, q_1+\sqrt{\xi_1}e^{i\theta_1}, \cdots, q_r+\sqrt{\xi_r} e^{i\theta_r}, p_1, \cdots ,p_r).
\end{split}
\end{equation*}

We want to prove Theorem~\ref{thmStdN}, \ie that $\hat{f}, \hat{g} \in
\mathcal{Q}^{\mathcal{A}}_{2r+1} \imp \hat{f}\ast \hat{g} \in \mathcal{Q}^{\mathcal{A}}_{2r+1}$.

\medskip

Then, by using our lemmas on convolution product and
Hadamard product, we will be able to deduce the same result for $*_M$,
\ie Theorem~\ref{thmMoyalN}, because
\begla
\hat{f}\in\widehat{\mathcal{Q}}^{\mathcal{A}}_{2r+1} \imp
\wh T\hat{f}\in\widehat{\mathcal{Q}}^{\mathcal{A}}_{2r+1}
\edla
by the same arguments as below.

\medskip

The following definition is useful.

\begin{definition}
For $n\geq2$, $i\leq j\leq n$,
  \[ F(z_1,\cdots,z_n) \in \C\{z_1,\cdots,z_n \} \mapsto (\odot_{ij}F)(\xi,z_1,\cdots,z_n) \in \C\{\xi,z_1,\cdots,z_n\},\]
  where
  \[(\odot_{ij}F)(\xi,z_1,\cdots,z_n):=\int_0^{2\pi} F(z_1,\cdots,z_{i-1}, z_i+\sqrt{\xi}e^{-i\theta},z_{i+1},\cdots,z_{j-1}, z_j+\sqrt{\xi}e^{i\theta} ,z_{j+1},\cdots, z_n ) \frac{d\theta}{2\pi}.\]
\end{definition}

By using this definition, we may understand the formula (\ref{formulahigh}) as
\begin{equation}
\hat{f}\ast \hat{g} (\xi,q,p) =
  \frac{d^{r+2}}{d \xi ^{r+2}} \int^\xi_0 d\xi_1
  \int^{\xi-\xi_1}_0 d\xi_2
  \cdots \int_0^{\xi-\xi_1-\cdots -\xi_{r+1}} d\xi_{r+2} G(\xi_1,\cdots,\xi_{r+2},q,p),
\end{equation}
where $G$ is obtained by $\odot^r_{2r+1,3r+2} (F)$,
\[
F(\xi_1,z_1,\cdots,z_{2r},\xi_2,z_{2r+1},\cdots,z_{4r}):= \hat{f}(\xi_1,z_1,\cdots,z_{2r})\hat{g}(\xi_2,z_{2r+1},\cdots,z_{4r})
\]
and let $z_i=z_{2r+i}=q_i$ and $z_{r+j}=z_{3r+j}=p_j$ for $i,j=1,\cdots,r$. \ \\

By lemma \ref{convolemma} and lemma \ref{convolemma2}, to prove $\hat{f}\ast \hat{g}\in \mathcal{Q}^{\mathcal{A}}_{2r+1}$ is sufficient to prove the following lemma:

\begin{lemma}
  If $F(z_1,\cdots,z_n)\in \mathcal{Q}^{\mathcal{A}}_{n}$ then $(\odot_{ij} F)(\xi,z_1,\cdots,z_n) \in \mathcal{Q}^{\mathcal{A}}_{n+1}$.
\end{lemma}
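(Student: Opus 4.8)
The plan is to transcribe the proof of Lemma~\ref{Hadalemma} almost verbatim, reading $\odot_{ij}$ (with $i<j$) as a Hadamard product in the one new variable produced by the circle integration. First I would replace the $\theta$-integral by a contour integral: setting $w=\sqrt{\xi}\,e^{i\theta}$, so that $\sqrt{\xi}\,e^{-i\theta}=\xi/w$, one gets
\[
(\odot_{ij}F)(\xi,z_1,\ldots,z_n)=\frac{1}{2\pi i}\oint_{|w|=c}
F\bigl(z_1,\ldots,z_i+\tfrac{\xi}{w},\ldots,z_j+w,\ldots,z_n\bigr)\,\frac{dw}{w},
\]
valid for any $c$ with $|\xi|/\rho<c<\rho$, where $\rho>0$ is small enough that $z_j+w$ and $z_i+\xi/w$ stay in a polydisc on which $F$ is holomorphic; this already shows $\odot_{ij}F\in\C\{\xi,z_1,\ldots,z_n\}$. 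For the analytic continuation I would write the singular locus of $F$ as $V_F=\bigcap_{J=1}^{K}(P_F^J)^{-1}(0)$, where by Proposition~\ref{propositionz1} --- or, more simply, after replacing each defining polynomial by its radical, which is square-free and therefore simple with respect to \emph{every} variable --- each $P_F^J$ may be assumed simultaneously $z_i$-simple and $z_j$-simple.

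For each $J$ I would introduce the polynomial
\[
\ti P^J:=w^{\,d_i}\,P_F^J\bigl(z_1,\ldots,z_i+\tfrac{\xi}{w},\ldots,z_j+w,\ldots,z_n\bigr)\in\C[\xi,z_1,\ldots,z_n][w],\qquad d_i:=\deg_{z_i}P_F^J,
\]
whose roots in $w$, together with $w=0$, are the potential singularities of the integrand as $(\xi,z)$ moves along a path in $\C^{n+1}$. In analogy with~\eqref{defVfog} I would then set
\[
V_{\odot F}^J:=\C^{n+1}\setminus\bigl\{(\xi,z)\ \big|\ \ti P^J\ \text{(as a polynomial in $w$) has $D_J$ distinct non-zero roots, and }\xi\neq0\bigr\},
\]
with $D_J:=\deg_w\ti P^J$; by Lemma~\ref{lemmaz1Syl} (vanishing of the appropriate Sylvester determinant) this is an algebraic subvariety of $\C^{n+1}$, and $V_{\odot F}:=\bigcap_J V_{\odot F}^J$ is the prospective singular locus of $\odot_{ij}F$.

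The core of the proof is the analogue of Claim~\ref{HadaclaimJK}: if $F$ admits analytic continuation along every path avoiding $(P_F^J)^{-1}(0)$, then $\odot_{ij}F$ admits analytic continuation along every path $\gamma$ avoiding $V_{\odot F}^J$. Its proof copies Section~\ref{proofHa} word for word: along such a $\gamma$ the $D_J$ moving singular points $\omega_\ell(t)$ (the roots of $\ti P^J$) remain pairwise distinct, non-zero and finite, so the non-autonomous vector field used there --- with marked point $0$ and moving targets the $\omega_\ell(t)$ --- is well defined, and its flow $\Psi_t$ (Cauchy--Lipschitz) yields a $\gamma^C$-homotopy $H_t(s):=\Psi_t(c\,e^{is})$ of the contour. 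The continuation is then given by the formula analogous to~\eqref{cont}, with the Hadamard integrand $f(z)g(\xi/z)$ replaced by $F(z_1,\ldots,z_i+\tfrac{\xi}{w},\ldots,z_j+w,\ldots,z_n)$ and each germ continued along the homotopy; its independence of the truncation parameter is checked by Cauchy's theorem exactly as in Claim~\ref{germ}. Intersecting over $J$ gives that $\odot_{ij}F$ continues along all paths avoiding the proper algebraic subvariety $V_{\odot F}$, i.e.\ $\odot_{ij}F\in\widehat{\mathcal{Q}}^{\mathcal{A}}_{n+1}$.

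The one genuinely new point --- and the main obstacle --- is that $\odot_{ij}$ acts on a \emph{single} germ, not on a product of germs in disjoint variables, so one must check that $V_{\odot F}^J$ is actually a \emph{proper} subvariety, i.e.\ that $\ti P^J$ generically has $D_J$ distinct non-zero roots in $w$. Non-vanishing of the extreme coefficients of $\ti P^J$ in $w$ is easy (they are, up to a power of $\xi$, built from the coefficients of $P_F^J$ of extreme degree in $z_i$ and in $z_j$), so the real content is square-freeness of $\ti P^J$ in $w$, equivalently that its $w$-discriminant does not vanish identically. This needs an extension of Lemma~\ref{lemmaz1simple} to the \emph{simultaneous} substitution $z_i\mapsto z_i+\xi/w$, $z_j\mapsto z_j+w$: since $P_F^J$ is square-free the hypersurface $\{P_F^J=0\}$ is reduced, and for generic $(\xi,z)$ the rational curve $\{(u,v):(u-z_i)(v-z_j)=\xi\}$ (with the remaining coordinates frozen) meets it transversally in distinct smooth points, whence $\ti P^J$ has only simple roots in $w$. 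I would isolate this as a separate lemma; granting it, the rest is a mechanical transcription of Section~\ref{proofHa}. (In the application of Section~\ref{secHigher}, at the first Hadamard step $F=\hat f\cdot\hat g$ with $\hat f,\hat g$ in disjoint variable sets, the curve above degenerates into the two families $\{\omega_\alpha\}$ and $\{\xi/\Omega_\beta\}$ of Lemma~\ref{Hadalemma} and the statement reduces to what was proved there; at the later steps one invokes the general lemma with each $P_F^J$ taken square-free.)
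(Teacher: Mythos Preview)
Your proposal is correct and follows essentially the same approach as the paper: both rewrite $\odot_{ij}F$ as a contour integral, introduce the polynomial obtained by clearing denominators in $P_F^J$ after the substitution $(z_i,z_j)\mapsto(z_i+\xi/w,\,z_j+w)$ (the paper uses the equivalent variable $z=\xi/w$), and declare the avoidant set to be the complement of the locus where this polynomial has the expected number of distinct non-zero roots in~$w$. The paper's own proof is in fact much terser---it simply writes down this set and says ``one may prove'' that the homotopy argument of Section~\ref{proofHa} goes through---so your version is more detailed, and in particular you correctly flag the one point the paper glosses over: verifying that the resulting subvariety is \emph{proper} (i.e.\ that the cleared polynomial is generically square-free in~$w$), which does require the square-freeness/transversality argument you sketch rather than a direct citation of Lemma~\ref{lemmaz1simple}.
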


\begin{proof}
  We assume $\{(z_1,\cdots,z_n) \mid P_F(z_1,\cdots,z_n)=0\}$ is the avoidant set of $F$. Consider $Q(z,\xi,z_1,\cdots,z_n):=z^N P_F(z_1,\cdots,z_i+{z},\cdots,z_j+\frac{\xi}{z},\cdots,z_n )$, where $N$ is the smallest number such that $Q$ is a polynomial in the variables $z,\xi,z_1,\cdots,z_n$. We treat $Q_{\xi,z_1,\cdots,z_n}(z):=Q(z,\xi,z_1,\cdots,z_n)$ as the polynomial in one variable $z$, order $M$, with coefficient in $\C[\xi,z_1,\cdots,z_n]$. One may prove that, $\odot_{ij} F$ admits analytic continuation along any path which contained in the following set
\begin{equation}
\left\{(\xi,z_1,\cdots,z_n)\in\mathbb{C}^{n+1} \left| Q_{\xi,z_1,\cdots,z_n}(z)
\text{ has $M$ distinct non-zero roots} \right. \right\}.
\end{equation}
\end{proof}



\vspace{1cm}

\subsubsection*{Acknowledgements}

The first and the third authors aknowledge support from NSFC (No.11771303).

\smallskip

The second author thanks Capital Normal University for their hospitality
during the period September 2019--February 2020.

\smallskip

The third author is partially supported by NSFC (No.s 11771303, 11911530092, 11871045).

\smallskip

This paper is partly a result of the ERC-SyG project, Recursive and
Exact New Quantum Theory (ReNewQuantum) which received funding from
the European Research Council (ERC) under the European Union's Horizon
2020 research and innovation programme under grant agreement No
810573.

\vspace{1cm}



\end{document}